\documentclass[onecolumn,draftclsnofoot,12pt]{IEEEtran}

\IEEEoverridecommandlockouts

\makeatletter
\newcommand*\titleheader[1]{\gdef\@titleheader{#1}}
\AtBeginDocument{%
  \let\st@red@title\@title
  \def\@title{%
    \bgroup\normalfont\large\centering\@titleheader\par\egroup
    \vskip1.5em\st@red@title}
}
\makeatother


\usepackage{epsfig}
\usepackage{url}
\usepackage{upgreek}
\usepackage{bm}
\usepackage{cite}
\usepackage{subcaption}
\usepackage{amsmath}
\usepackage{amsfonts}
\usepackage{amssymb}
\usepackage{graphicx}
\usepackage{array}
\usepackage{nicefrac}
\usepackage{textcomp}
\usepackage{xcolor}
\usepackage{tikz}
\usepackage{bm}
\usepackage{algorithmicx}
\usepackage{algorithm}
\usepackage[noend]{algpseudocode}
\usepackage{varwidth}
\usepackage{multicol}
\usepackage{mathtools}
\usepackage{cuted}
\usepackage{stfloats}
\usepackage{xr}

\usepackage{draftwatermark}
\SetWatermarkText{Draft}
\SetWatermarkScale{1}

\usepackage{scalerel}
\usetikzlibrary{svg.path}

\definecolor{orcidlogocol}{HTML}{A6CE39}
\tikzset{
  orcidlogo/.pic={
    \fill[orcidlogocol] svg{M256,128c0,70.7-57.3,128-128,128C57.3,256,0,198.7,0,128C0,57.3,57.3,0,128,0C198.7,0,256,57.3,256,128z};
    \fill[white] svg{M86.3,186.2H70.9V79.1h15.4v48.4V186.2z}
                 svg{M108.9,79.1h41.6c39.6,0,57,28.3,57,53.6c0,27.5-21.5,53.6-56.8,53.6h-41.8V79.1z M124.3,172.4h24.5c34.9,0,42.9-26.5,42.9-39.7c0-21.5-13.7-39.7-43.7-39.7h-23.7V172.4z}
                 svg{M88.7,56.8c0,5.5-4.5,10.1-10.1,10.1c-5.6,0-10.1-4.6-10.1-10.1c0-5.6,4.5-10.1,10.1-10.1C84.2,46.7,88.7,51.3,88.7,56.8z};
  }
}

\newcommand\orcidicon[1]{\href{https://orcid.org/#1}{\mbox{\scalerel*{
\begin{tikzpicture}[yscale=-1,transform shape]
\pic{orcidlogo};
\end{tikzpicture}
}{|}}}}
\usepackage{hyperref} 

\newcounter{storeeqcounter}
\newcounter{tempeqcounter}

\newcommand{\ji}{\text{j}}
\newcommand{\R}{\text{Re}}
\newcommand{\I}{\text{Im}}
\newcommand{\NBS}{N_\text{BS}}
\newcommand{\Q}{\text{sgn}}
\newcommand{\T}{\text{T}}
\newcommand{\maxx}{\text{max}}

\newcommand{\Her}{\text{H}}
\newcommand{\diag}{\text{diag}}
\newcommand{\tr}{\text{tr}}
\newcommand{\thh}{\text{th}}
\newcommand{\BS}{\text{BS}}
\newcommand{\ext}{\text{ext}}

\DeclareMathOperator*{\argmax}{arg\,max}
\DeclareMathOperator*{\argmin}{arg\,min}

\newtheorem{theorem}{Theorem}[section]
\newtheorem{corollary}{Corollary}[theorem]
\newtheorem{lemma}[theorem]{Lemma}
\newenvironment{proof}[1]{\par\noindent\underline{Proof:}\space#1}{\hfill $\blacksquare$}

\newtheorem{remark}{Remark}

\def\BibTeX{{\rm B\kern-.05em{\sc i\kern-.025em b}\kern-.08em
    T\kern-.1667em\lower.7ex\hbox{E}\kern-.125emX}}
\begin{document}

\title{Multi-user Downlink Beamforming using Uplink Downlink Duality  with 1-bit Converters for Flat Fading Channels}


\author{Khurram Usman~Mazher$^{\textsuperscript{\orcidicon{0000-0001-8760-391X}}}$,~\IEEEmembership{Student Member,~IEEE,}
        Amine~Mezghani$^{\textsuperscript{\orcidicon{0000-0002-7625-9436}}}$,~\IEEEmembership{Member,~IEEE,}
        and \\ ~Robert W.~Heath Jr.$^{\textsuperscript{\orcidicon{0000-0002-4666-5628}}}$,~\IEEEmembership{Fellow,~IEEE}
\thanks{Manuscript submitted to IEEE TVT in June 2022.}%
\thanks{K. U. Mazher (khurram.usman@utexas.edu) is with the Wireless Networking and Communications Group, The University of Texas at Austin, Austin, TX 78712 USA.}
\thanks{A. Mezghani (amine.mezghani@umanitoba.ca) is with the Department of Electrical and Computer Engineering at the University of Manitoba, Winnipeg, MB R3T 2N2, Canada.}
\thanks{R. W. Heath Jr. (rwheathjr@ncsu.edu) is with 6GNC, Department of Electrical and Computer Engineering at the North Carolina State University, Raleigh, NC 27695 USA.}
\thanks{This work was supported in part by the National Science Foundation under Grant No. ECCS-1711702, in part by the Natural Sciences and Engineering Research Council of Canada (NSERC) under Grant ID RGPIN-2020-06754, and in part by gifts from Qualcomm Inc. and Samsung Research America~Inc.}}

\markboth{IEEE Transactions on Vehicular Technology (Submitted paper)}{}

\maketitle

\begin{abstract}
The increased power consumption of high-resolution data converters at higher carrier frequencies and larger bandwidths is becoming a bottleneck for communication systems. In this paper, we consider a fully digital base station equipped with 1-bit analog-to-digital (in uplink) and digital-to-analog (in downlink) converters on each radio frequency chain. The base station communicates with multiple single antenna users with individual SINR constraints. We first establish the uplink downlink duality principle under 1-bit hardware constraints under an uncorrelated quantization noise assumption. We then present a linear solution to the multi-user downlink beamforming problem based on the uplink downlink duality principle. The proposed solution takes into account the hardware constraints and jointly optimizes the downlink beamformers and the power allocated to each user. Optimized dithering obtained by adding dummy users to the true system users ensures that the uncorrelated quantization noise assumption is true under realistic settings. Detailed simulations carried out using 3GPP channel models generated from \emph{Quadriga} show that our proposed solution outperforms state of the art solutions in terms of the ergodic sum and minimum rate especially when the number of users is large. We also demonstrate that the proposed solution significantly reduces the performance gap from non-linear solutions in terms of the uncoded bit error rate at a fraction of the computational complexity.
\end{abstract}

\begin{IEEEkeywords}
1-bit ADC/DAC, multi-user beamforming and power allocation, uplink downlink duality, optimized dithering
\end{IEEEkeywords}

\section{Introduction}

Massive multiple-input-multiple-output (MIMO) promises to enhance spectral efficiency, energy efficiency, scalability, reliability, and coverage of next generation wireless cellular systems \cite{massiveMIMO,massiveMIMO2}. With massive MIMO, low complexity linear precoding techniques such as zero forcing (ZF) and maximum ratio transmission (MRT) are near optimal for a sufficiently large ratio of the number of antennas at the base station (BS) over the number of users \cite{massiveMIMO,massiveMIMO2}. The increased number of antennas and radio frequency (RF) chains, however, lead to significant increase in the power consumption, front-haul data rate requirements and hardware complexity \cite{SP,NYU,powerStuder}.

The use of low-resolution analog to digital converters (ADCs) and digital to analog converters (DACs), particularly 1-bit ADCs/DACs, is one possible solution to the excessive power consumption, higher costs and limited physical area available at the RF front-end for high carrier frequency and large bandwidth signals \cite{SP,NYU,powerStuder}. 1-bit quantization leads to a power penalty of about 2 dB for channels of practical interest with perfect or statistical channel state information (CSI) for the single user setting \cite{amine2020}. Achieving this performance in a multi-user (MU) setting can be challenging: 1) Small numbers of users lead to correlated quantization noise 2) Large numbers of active users lead to significant MU interference (MUI). In this paper, we focus on the MU downlink (DL) beamforming (BF) problem where a BS equipped with 1-bit DACs communicates with multiple single antenna users with individual signal-to-quantization-plus-interference-plus-noise ratio (SQINR) constraints. We introduce a new criterion for MU-DL precoder design under 1-bit DAC constraints based on maximizing the minimum SQINR. To the best of authors' knowledge, this optimization criterion has not been considered before for 1-bit DL transmissions.

\vspace{-0.0cm}
\subsection{Prior work}
The use of 1-bit ADCs in the MU-MIMO uplink (UL) has been studied extensively \cite {massiveMIMO1-bit,channelEstimationStuder,ULStuder,ULRobert,ULOFDMStuder}. The performance of low complexity least squares based channel estimation and linear detectors in the UL was investigated in \cite{massiveMIMO1-bit,channelEstimationStuder,ULStuder} for flat fading channels. It was shown that large sum rates can be supported despite the severe distortion caused by the 1-bit ADCs. Similar conclusions were presented for wideband channels for both single carrier and Orthogonal Frequency Division Multiplexing (OFDM) based transmissions \cite{ULRobert}. The performance of non-linear methods for frequency selective channels in the OFDM context with low resolution ADCs was studied in \cite{ULOFDMStuder}. It was concluded that the ADC resolution can be significantly reduced while achieving \emph{almost} the same performance as $\infty$-resolution ADCs when the ratio of number of BS antennas to the number of users is large. Furthermore, at large number of BS antennas to the number of users ratios, low complexity linear methods achieve virtually the same performance as non-linear methods \cite{ULOFDMStuder}.


Linear precoding methods optimized based on various design criterion for MU-DL precoding under 1-bit DAC constraints have been proposed \cite{amine2009,amine2016,Candido,SE,amodh}. Linear precoding based on minimizing the mean square error (MMSE) of the transmitted symbols while taking into account the quantization effects was shown to outperform quantized linear Weiner filtering \cite{amine2009}. A more general MMSE framework \cite{amine2016} that also optimized the per-antenna power allocation after the 1-bit quantization operation was shown to outperform the equal per-antenna power allocation MMSE precoder \cite{amine2009}. Precoder design by minimizing the mean square error (MSE) of superposed Quadrature Phase Shift Keying (QPSK) symbols based on an iterative gradient projection algorithm has been proposed \cite{Candido}. The tradeoff between spectral efficiency and energy efficiency under ZF and MRT based precoding was studied in \cite{SE}. It was concluded that 1-bit MIMO systems need approximately $2.5$ times the number antennas to achieve the same performance as unquantized systems. Recent work using ZF precoders in settings where the quantization noise was correlated showed that significant performance improvements can be achieved with optimized dithering \cite{amodh}. The prior work on linear precoder design for 1-bit DL transmissions \cite{amine2009,amine2016,Candido,SE,amodh} is limited to minimizing the SER, the MUI (in case of ZF) and maximizing the signal power (in case of MRT). Furthermore, there is a significant performance gap compared to non-linear methods as discussed next.

Non-linear precoding methods, which map the transmit symbols directly to the quantized transmit vector, generally outperform linear precoding based techniques \cite{hela2016,hela,studer2016,Studer,studer2017}. A lookup table that mapped the transmit symbols  to the transmit signal based on minimizing the  bit error rate (BER) for each channel realization \cite{hela2016} was shown to outperform the MMSE precoders \cite{amine2009,amine2016} in terms of uncoded BER and mutual information. Similarly, a non-linear method \cite{hela} based on maximizing the safety margin (MSM) of the received symbols from the constellation decision thresholds for quantized constant envelope signals outperformed the linear precoding strategies  \cite{amine2009,amine2016,Candido}  in terms of uncoded BER. Non-linear methods based on semi-definite relaxation and squared $\ell_\infty$-norm relaxation of the SER have also been proposed \cite{studer2016,Studer}. Another non-linear method based on solving the biconvex relaxation of the SER for 1-bit systems using alternating minimization was proposed in \cite{studer2017}. The performance of all these \cite{hela2016,hela,studer2016,Studer,studer2017} non-linear methods is comparable. These methods, however, result in a significant computational cost for systems with larger dimensions due to exponential increase in their complexity with the increase in number of users, the constellation size and the number of antennas \cite{Studer}. Additionally, an optimization problem has to be solved at every time step during the coherence time for each transmit symbol vector. Lastly, most of the non-linear methods \cite{hela2016,hela,studer2016,Studer,studer2017} have hyperparameters that need to be appropriately chosen according to the operating conditions. There has also been work in 1-bit DL precoding for frequency selective channels using OFDM based formulations \cite{StuderOFDMconf,StuderOFDM}. In this paper, we restrict ourself to the flat fading channel setting leaving the generalization to wideband channels as future work.

The design criterion in the prior work are focused on minimizing the BER, the SER, the MUI (in case of ZF) or maximizing the signal power (in case of MRT) and symbol safety margins. Additionally, most of the numerical/analytical results were obtained on Rayleigh fading channels with independent and identically distributed (IID) entries. In this paper, we introduce a new design criterion for MU-DL precoding under 1-bit DAC constraints that maximizes the minimum SQINR and compare the proposed solution to the existing work using realistic channel models.


\vspace{-0.3cm}
\subsection{Contributions}
In this work, we provide a linear precoding based solution to the MU-DL-BF and power allocation problem under 1-bit hardware constraints at the BS. The BS communicates with multiple single antennas users with individual SQINR constraints over flat fading channels. The classical solution to the MU-DL-BF problem with $\infty$-resolution ADCs and DACs makes use of the UL-DL duality to solve the problem by an iterative alternating minimization procedure \cite{MUDL}. The same procedure can not be applied to the \emph{Bussgang} decomposition \cite{StuderOFDM,SE} based linearized version of the 1-bit problem because of the different quantization noise in the UL and DL SQINRs.
%
%
%
%
%
 The main contributions of this paper are:

\begin{itemize}
\item We establish the UL-DL duality principle under 1-bit ADC/DAC constraints. This is different from its ideal $\infty$-resolution counterpart because the quantization noise in the SQINR expressions is a function of the channel realizations in the UL and the precoders in the DL. We show that the well established UL-DL duality result \cite{MUDL} can be generalized to the 1-bit setting under the assumption that the quantization noise is uncorrelated.

\item We propose an alternating minimization based solution to the MU-DL-BF problem that incorporates the 1-bit constraints and jointly optimizes the DL power allocation vector and the DL BF matrix based on maximizing the minimum SQINR across all users. The proposed solution makes use of the UL-DL duality result to break the larger MU-DL-BF problem into smaller sub-problems.

\item We generalize the UL-DL duality principle and the proposed algorithm to settings where the quantization noise is correlated by introducing optimized dithering. Optimized dithering is added under the per-user SQINR constraint framework by adding dummy users to the system that lie in the null space of the true user channels. 


\item We present results based on channel realizations drawn from 3GPP channel models using Quadriga \cite{Quadriga}. Our numerical results show significant performance improvement over state of the art linear methods both in terms of ergodic sum rate and ergodic minimum rate. We also demonstrate that the proposed solution achieves performance comparable with non-linear precoding methods \cite{Studer} in terms of uncoded BER.
\end{itemize}

In our prior work \cite{pw},  we published initial results from the detailed study carried out in this paper. After stating the UL-DL duality principle under the uncorrelated quantization noise assumption, we compared the performance of the proposed MU-DL-BF algorithm to ZF based linear precoding. We concluded that the proposed solution outperforms ZF based precoding in terms of the ergodic sum rate, especially when the number of users is large. The present paper, however, is a significant extension of \cite{pw}. In addition to the formal proof of the UL-DL duality principle under 1-bit constraints, we also provide the crucial details of the joint power allocation and beamforming solution (such as the feasibility of the MU-DL-BF problem and convergence of the proposed alternating minimization solution) that were missing in our prior work. The uncorrelated quantization noise assumption was not justified in our prior work \cite{pw}. In this paper, we introduce optimized dithering (through dummy users) in the signal before quantization and generalize the presented ideas to the setting where the uncorrelated quantization noise assumption does not hold. As a result, the solution presented in this paper outperforms our initial results published in \cite{pw} for a smaller number of active users. Our prior work \cite{pw} also lacked any comparison with a state-of-the-art non-linear precoding method. Herein, we conduct detailed numerical experiments that include comparison with SQUID \cite{Studer} in terms of uncoded BER and further comment on the robustness of the proposed solution to channel estimation errors.

The rest of this paper is organized as follows. In Section \ref{sec:systemModel}, we describe the system model and the small angle approximation based on which the quantization noise becomes uncorrelated. In Section \ref{sec:duality}, we establish the UL-DL duality principle under the uncorrelated quantization noise approximation. Next in Section \ref{sec:solution}, we provide the details of the joint power allocation and beamforming optimization algorithm for the MU-DL-BF problem. In Section \ref{sec:dummy}, we extend the proposed solution to settings where the uncorrelated quantization noise assumption does not hold. We present numerical results in Section \ref{sec:results} before concluding the paper in Section \ref{sec:conc}.

\emph{Notation:} $\mathbf{B}$ is a matrix, $\mathbf{b}$ is a vector and $b$ is a scalar. The operator $(\cdot)^\T$, $(\cdot)^\Her$, and $(\cdot)^\ast$ denote the transpose, conjugate transpose and conjugate of a matrix/vector. $\mathrm{diag} (\mathbf{B})$ denotes a diagonal matrix containing only the diagonal elements of $\mathbf{B}$. $\mathrm{tr}(\mathbf{B})$ denotes the trace of matrix $\mathbf{B}$. $\| \mathbf{B} \|_F$ denotes the Frobenius norm of $\mathbf{B}$. $\mathbf{I}_N$ represents the identity matrix of size $N \times N$. The vector $\mathbf{1}_N$ ($\mathbf{0}_N$) denotes a vector of all ones (zeros) of length $N$. The matrix $\mathbf{R}_{\mathbf{b}}$ denotes the covariance matrix of the signal $\text{b}$. $\lambda_\maxx (\mathbf{B})$ denotes the dominant eigenvalue of $\mathbf{B}$. $\mathbb{N}(\mathbf{B})$ denotes the nullspace of the matrix $\mathbf{B}$. $\|\mathbf{b}\|_p$ is the $p$-norm of $\mathbf{b}$. $\mathbf{b} \geq b$ denotes entry-wise comparison between $\mathbf{b}$ and $b$. $\mathbf{e}_k$ denotes the canonical basis vector with a 1 at the $k^\thh$ index and zeros elsewhere. The function $\mathrm{sgn}(a)$ denotes the signum function applied component-wise to the $\R(a)$ (real) and $\I(a)$ (imaginary) parts of $a$. The notations $|\cdot| , {(\cdot)}^k$ and $ \angle(\cdot)$ denote the absolute value, $k^{\thh}$ power and phase operation applied to a scalar or element-wise to a vector/matrix. $\mathcal{CN}( \bm{\mu} , \bm{\Sigma} )$ denotes a complex Gaussian multi-variate distribution with mean  $\bm{\mu}$ and covariance $\bm{\Sigma}$. 


\vspace{-0.2cm}
\section{System model} \label{sec:systemModel}
We consider a DL scenario where a single BS with $\NBS$ antennas and RF chains, each equipped with a 1-bit DAC, communicates with $K$ single antenna users, as illustrated in Fig. \ref{fig:system}. The BS sends IID  $\mathcal{N}( 0, 1 )$ signals $s_k$ for $1 \leq k \leq K$ to the users. The channel from the BS to the $k^{\thh}$ user at time $t$ is assumed to be frequency flat and is denoted by $\mathbf{h}_k(t) \in \mathbb{C}^{N_\text{BS} \times 1}$. ${\mathbf{H}} \in \mathbb{C}^{K \times N_\text{BS}}$ denotes the collective channel matrix for the $K$ users. During the DL stage, the symbol $s_k$ is mapped to the antenna array using the unit-norm beamformer $\mathbf{t}_k  \in \mathbb{C}^{N_\text{BS} \times 1}$. The beamformers corresponding to all $K$ users can be collected in a matrix $\mathbf{T} = [\mathbf{t}_1 , \dots , \mathbf{t}_K]$. During the DL stage, the BS has a total transmit power constraint of $P_\BS$ Watts. We assume that the BS and all users operate in the same time-frequency resource and are synchronized. 

We also consider the corresponding UL scenario where the $K$ users send information symbols to the BS. Chances are that the BS would have a 1-bit ADC on each RF chain in the UL for reasons similar to why it had a 1-bit DAC on each RF chain in the DL. This, however, is not a requirement and can be thought of as a mathematical construct for the purpose of this paper. During the UL stage, the symbol $s_k$ is received at the BS using the unit-norm beamforming vector $\mathbf{u}_k$. The beamformers corresponding to all $K$ users are collected in a matrix $\mathbf{U} = [\mathbf{u}_1 , \dots , \mathbf{u}_K]$. The $K$ users transmit under a \emph{sum} power constraint of $P_\BS$ Watts equal to the total power constraint of the BS during the DL stage. We emphasize here that the users do not actually transmit under the sum power constraint in the UL. This is only a conceptual construction and will be used in Section \ref{sec:joint} to break down the MU-DL-BF problem into smaller sub-problems.

We conclude this section by describing the small angle approximation which essentially says that the off-diagonal entries of the covariance matrix of the signal before quantization are small compared to the diagonal entries. This makes the quantization noise uncorrelated which, as will be shown in Section \ref{sec:duality}, is crucial for proving UL-DL duality in the presence of hardware constraints.


\begin{figure}[t]
    	\begin{center}
    		\includegraphics[width=.48\textwidth,clip,keepaspectratio]{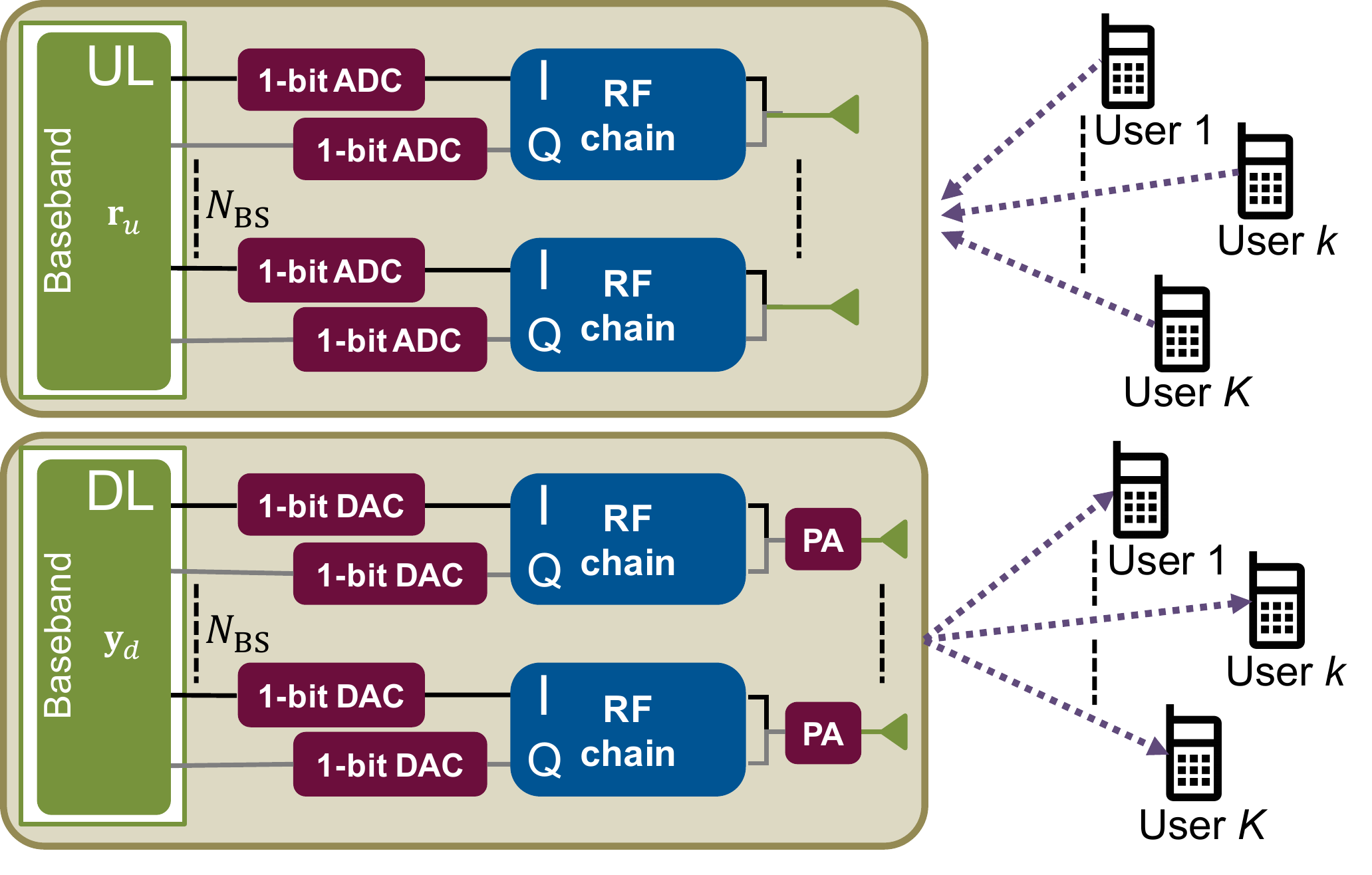}
    	\end{center}
\vspace{-0.2cm}
    	\caption{Functional block diagram of the UL and DL system model where a fully digital BS with $N_\BS$ antennas and 1-bit ADC/DAC on each RF chain communicates with $K$ single antenna users.}
    	 \label{fig:system}
\vspace{-0.5cm}
\end{figure}

\vspace{-0.2cm}
\subsection{Downlink SQINR} \label{sec:DL}
With $\mathbf{q} = [q_1 , \dots , q_K]^\T$ denoting the DL power allocation vector, the signal transmitted by the BS during the DL stage can be written as $\mathbf{y_\text{d}} = \sum_{k = 1}^K \sqrt{q_k} \mathbf{t}_k^\ast s_k$. The signal after the 1-bit DAC is given by $\mathbf{r_\text{d}} = \frac{ \Q(\mathbf{y_\text{d}})}{\sqrt{2}}$. Let $\mathbf{R}_{\mathbf{y_\text{d}}} = \sum_{k = 1}^K {q_k} \mathbf{t}_k^\ast \mathbf{t}_k^\T$ denote the covariance matrix of the DL signal $\mathbf{y_\text{d}}$ before the 1-bit quantization operation. The \emph{Bussgang} theorem \cite{SE} can be used to decompose the signal into a useful linear part and an uncorrelated distortion $\pmb{\eta}_\text{d}$ with the covariance  $\mathbf{R}_{\pmb{\eta}_\text{d}}$. With $\mathbf{A}_\text{d} = \sqrt{\frac{2}{\pi}} \diag \left( \mathbf{R}_{\mathbf{y_\text{d}}} \right)^{-\frac{1}{2}}$ denoting the Bussgang gain, the signal after the 1-bit DAC can be rewritten as
\begin{equation} \label{eq:downlinkSig_r1}
 \mathbf{r_\text{d}} = \mathbf{A}_\text{d} \sum_{k = 1}^K \sqrt{q_k} \mathbf{t}_k^\ast s_k + \pmb{\eta}_\text{d}.
 \vspace{-0.1cm}
\end{equation}
Any \emph{per-user} power allocation done before the 1-bit DAC will be completely wiped out due to the quantization operation. Power allocation is done again on a \emph{per-antenna} basis after the 1-bit DAC operation. This is achieved by multiplication with the non-negative diagonal matrix $\mathbf{Q}$. The total per-antenna power allocation is constrained to be equal to the DL transmit power given by forcing $P_\BS = \text{tr}(\mathbf{Q}\mathbf{Q}^\Her)$. The linearized signal received at the $k^\thh$ user corrupted by IID $\mathcal{N}( 0 , \sigma^2 )$ noise $n$ is
\begin{equation} \label{eq:downlinkSig}
 \vspace{-0.1cm}
{y_{\text{d},k}} = \mathbf{h}_k^\T \mathbf{Q} \mathbf{A}_\text{d} \sum_{k = 1}^K \sqrt{q_k} \mathbf{t}_k^\ast s_k + {n} + \mathbf{h}_k^\T \mathbf{Q} \pmb{\eta}_\text{d}.
\end{equation}
Using (\ref{eq:downlinkSig}) and defining $\mathbf{R}_k = \mathbf{h}_k  \mathbf{h}_k^\Her$, the DL SQINR for the $k^\thh$ user, $\gamma_k^{\text{DL}}(\mathbf{T} ,  \mathbf{Q} , \mathbf{q})$, is given by (\ref{eq:downlinkSINR_covar}) at the bottom of the page.
The DL SQINR for the $k^\thh$ user in (\ref{eq:downlinkSINR_covar}) is a function of the beamformer matrix $\mathbf{T}$, the power allocation matrix $\mathbf{Q}$, and the power allocation vector $\mathbf{q}$. 

\addtocounter{equation}{1}
\setcounter{storeeqcounter}{\value{equation}}

\begin{figure*}[b]
\hrulefill
\normalsize
\setcounter{tempeqcounter}{\value{equation}} 
\begin{IEEEeqnarray}{rCl}
\setcounter{equation}{\value{storeeqcounter}} 
\gamma_k^{\text{DL}}(\mathbf{T} ,  \mathbf{Q} , \mathbf{q}) = \frac{ q_k \mathbf{t}_k^\Her \mathbf{A}_\text{d} \mathbf{Q} \mathbf{R}_k \mathbf{Q}^\Her \mathbf{A}_\text{d}^\Her  \mathbf{t}_k}{  \sum_{ \substack{i = 1\\ i \neq k}}^K \underbrace{ q_i \mathbf{t}_i^\Her \mathbf{A}_\text{d} \mathbf{Q} \mathbf{R}_k \mathbf{Q}^\Her \mathbf{A}_\text{d}^\Her  \mathbf{t}_i }_{\text{MUI}} + \underbrace{ \sigma^2 }_{\text{IID}} + \underbrace{ \tr \left( \mathbf{Q} \mathbf{R}_{\pmb{\eta}_\text{d}} \mathbf{Q}^\Her \mathbf{R}_k^\ast \right)}_{\text{QN}} }.
\label{eq:downlinkSINR_covar}
\end{IEEEeqnarray}
\setcounter{equation}{\value{tempeqcounter}} 
\end{figure*}

\vspace{-0.2cm}
\subsection{Uplink SQINR} \label{sec:UL}
With $\mathbf{n}$ denoting the IID Gaussian noise with covariance $\sigma^2 \mathbf{I}$ and $p_k$ the TX power of the $k^\thh$ user, the signal received at the BS during the UL stage can be written as $\mathbf{y_\text{u}} = \sum_{k = 1}^K \sqrt{p_k}  \mathbf{h}_k s_k + \mathbf{n}$. The signal after 1-bit sampling is given by $\mathbf{r_\text{u}} = \frac{ \Q(\mathbf{y_\text{u}})}{\sqrt{2}}$. For $\mathbf{R}_{\mathbf{y_\text{u}}} = \left( \sum_{k = 1}^K p_k  \mathbf{R}_k + \sigma^2 \mathbf{I} \right)$, define $\mathbf{A}_\text{u} = \sqrt{\frac{2}{\pi}} \diag \left( \mathbf{R}_{\mathbf{y_\text{u}}} \right)^{-\frac{1}{2}}$ to be the Bussgang gain. Like its DL counterpart, the UL signal can be decomposed into a linear signal part and an uncorrelated distortion $\pmb{\eta}_\text{u}$ with covariance $ \mathbf{R}_{\pmb{\eta}_\text{u}}$ using the Bussgang decomposition as
\begin{equation} \label{eq:uplinkSig_r2}
 \vspace{-0.0cm}
 \mathbf{r_\text{u}} = \mathbf{A}_\text{u} \sum_{k = 1}^K \sqrt{p_k}  \mathbf{h}_k s_k + \mathbf{A}_\text{u} \mathbf{n} + \pmb{\eta}_\text{u}.
\end{equation}
The signal from the $k^\thh$ user after combining using the beamformer $\mathbf{u}_k$ is given by
\begin{equation} \label{eq:uplinkSig_r22}
{y_{\text{u},k}} = \mathbf{u}_k^\Her \mathbf{A}_\text{u} \sum_{k = 1}^K \sqrt{p_k}  \mathbf{h}_k s_k + \mathbf{u}_k^\Her \mathbf{A}_\text{u} \mathbf{n} + \mathbf{u}_k^\Her \pmb{\eta}_\text{u}.
 \vspace{-0.0cm}
\end{equation}
Using the linearized model in (\ref{eq:uplinkSig_r22}) and the UL power allocation vector $\mathbf{p} = [p_1 , \dots , p_K]^\T$,  the UL SQINR obtained by using the linear combiner $\mathbf{u}_k$ for the $k^\thh$ user, $\gamma_k^{\text{UL}}(\mathbf{u}_k ,  \mathbf{p})$, is given~by 
\begin{equation}
\gamma_k^{\text{UL}}(\mathbf{u}_k ,  \mathbf{p}) = \frac{p_k \mathbf{u}_k^\Her \mathbf{A}_\text{u}^\Her \mathbf{R}_k \mathbf{A}_\text{u} \mathbf{u}_k}{  \mathbf{u}_k^\Her \bigg(  \sum_{ \substack{i = 1\\ i \neq k }}^K \underbrace{ p_i \mathbf{A}_\text{u}^\Her \mathbf{R}_i \mathbf{A}_\text{u} }_{\text{MUI}} + \underbrace{ \sigma^2 \mathbf{A}_\text{u}^\Her \mathbf{A}_\text{u} }_{\text{IID}} + \underbrace{ \mathbf{R}_{ \pmb{\eta}_\text{u}} }_{\text{QN}} \bigg) \mathbf{u}_k }.
\label{eq:uplinkSINR_covar}
\end{equation}
Unlike the DL SQINR in (\ref{eq:downlinkSINR_covar}) which depends on the power allocation vector $\mathbf{q}$ and the beamformer matrix $\mathbf{T}$ of all $K$ users, the UL SQINR in (\ref{eq:uplinkSINR_covar}) for the $k^\thh$-user depends only on the power allocation vector $\mathbf{p}$ and the combiner for the $k^\thh$-user $\mathbf{u}_k$. This observation will become important in Section \ref{sec:joint} where we recast the original MU-DL-BF problem in terms of the `easier' MU-UL-BF problem by making use of the UL-DL duality proved in Section \ref{sec:duality}. This allows us to decouple the bigger MU-DL-BF problem into $K$ smaller subproblems which can each be solved separately.

The linearized DL and UL SQINRs in (\ref{eq:downlinkSINR_covar}) and (\ref{eq:uplinkSINR_covar}) differ from their $\infty$-resolution counterparts in the quantization noise covariance matrices, $\mathbf{R}_{ \pmb{\eta}_\text{d}}$ and $\mathbf{R}_{ \pmb{\eta}_\text{u}}$, introduced in the denominators. $\mathbf{R}_{ \pmb{\eta}_\text{u}}$ depends on the channel realization whereas  $\mathbf{R}_{ \pmb{\eta}_\text{d}}$ is a function of the BF matrix. The solution to the MU-DL-BF problem with per-user SINR constraints for $\infty$-resolution ADCs/DACs makes use of the UL-DL duality principle \cite{MUDL}. The general UL-DL duality does not hold in the presence of 1-bit hardware constraints due to the quantization noise matrices in the UL and DL SQINRs. In Section \ref{sec:duality}, we show that the UL-DL duality principle can in fact be generalized to incorporate 1-bit hardware constraints under certain conditions and then use it to solve the 1-bit MU-DL-BF problem.




\vspace{-0.2cm}
\subsection{Small angle approximation} \label{sec:uqn}
The uncorrelated distortion as a result of applying Bussgang decomposition on the DL and UL signals in (\ref{eq:downlinkSig_r1}) and (\ref{eq:uplinkSig_r2}), $\pmb{\eta}_\text{p}$ for $\text{p} \in \{ \text{d,u} \}$, is
\begin{equation} \label{eq:upn1}
\pmb{\eta}_\text{p} =  \mathbf{r_\text{p}}  - \mathbf{A}_\text{p}  \mathbf{y_\text{p}}.
\end{equation}
The covariance matrix of the distortion is
\begin{equation} \label{eq:upn2}
\mathbf{R}_{\pmb{\eta}_\text{p}} =  \mathbf{R}_{ \mathbf{r_\text{p}} }  - \mathbf{A}_\text{p}  \mathbf{R}_{\mathbf{y_\text{p}}} \mathbf{A}_\text{p}^\Her.
\end{equation}
Define $\mathbf{X}_{\text{p}} =  \R \left( \diag \left( \mathbf{R}_{\mathbf{y_\text{p}}} \right)^{-\frac{1}{2}}  \mathbf{R}_{\mathbf{y_\text{p}}} \diag \left( \mathbf{R}_{\mathbf{y_\text{p}}} \right)^{-\frac{1}{2}} \right)$ and $\mathbf{Y}_{\text{p}} =  \I \left( \diag \left( \mathbf{R}_{\mathbf{y_\text{p}}} \right)^{-\frac{1}{2}}  \mathbf{R}_{\mathbf{y_\text{p}}} \diag \left( \mathbf{R}_{\mathbf{y_\text{p}}} \right)^{-\frac{1}{2}} \right)$ to be the real and imaginary parts of the correlation coefficients of the signal $\mathbf{y_\text{p}}$ before 1-bit quantization. It can be shown \cite{SE} that the correlation matrix of the quantized signal $\mathbf{r_\text{p}}$ is given by
\begin{equation} \label{eq:upn3}
\mathbf{R}_{ \mathbf{r_\text{p}} }  = \frac{2}{\pi}\left( \text{sin}^{-1}\left(\mathbf{X}_{\text{p}}\right) + \ji \text{ sin}^{-1}\left(\mathbf{Y}_{\text{p}}\right) \right),
\end{equation}
where the $\text{sin}^{-1}$ operation is applied element-wise on the matrices $\mathbf{X}_{\text{p}}$ and $\mathbf{Y}_{\text{p}}$. $\mathbf{R}_{\pmb{\eta}_\text{p}}$ is then given by
\begin{equation} \label{eq:upn4}
\begin{aligned}
\mathbf{R}_{\pmb{\eta}_\text{p}} & = \frac{2}{\pi}\left( \text{sin}^{-1}\left(\mathbf{X}_{\text{p}}\right) + \ji \text{ sin}^{-1}\left(\mathbf{Y}_{\text{p}}\right) \right) - \frac{2}{\pi} \left( \left(\mathbf{X}_{\text{p}}\right) + \ji \left(\mathbf{Y}_{\text{p}}\right) \right). \\
\end{aligned}
\end{equation}
We now make the approximation that $\text{sin}^{-1}(x) = x + o(x^3)$ for $|x| < 1$. This approximation is equivalent to a first-order Taylor expansion of the  $\text{sin}^{-1}(\cdot)$ function and becomes more accurate for the case of high number of users in DL or in the low SNR regime in UL \cite{amodh}. Under this approximation, the matrix $\mathbf{R}_{\pmb{\eta}_\text{p}} = \left( 1 - \frac{2}{\pi} \right)\mathbf{I}$. The $\text{sin}^{-1}(x) \approx x$ approximation thus makes the quantization noise $\pmb{\eta}_\text{p}$ uncorrelated. This approximation will be used in Section \ref{sec:duality} for proving UL-DL duality under 1-bit ADC and DAC constraints. 



\vspace{-0.2cm}
\section{UL-DL duality with hardware constraints} \label{sec:duality}

In this section, we show that the UL-DL duality is preserved for a system with 1-bit DACs/ADCs under the small angle approximation introduced in Section \ref{sec:uqn}. By writing the linearized DL and UL SQINR expressions (\ref{eq:downlinkSINR_covar}) and (\ref{eq:uplinkSINR_covar}) in equivalent matrix form, we show that the same SQINR  constraints (or a multiplicative factor) can be achieved in both DL and UL by appropriately relating the linear beamforming and combining matrices and separately optimized DL/UL power allocation vectors (for the same total power constraint). This result is summarized in Theorem \ref{theorem:theorem1}.

\begin{theorem} \label{theorem:theorem1}
Consider a BS equipped with 1-bit DACs communicating with $K$ users with target SQINR values $\{\gamma_k\}$ for $k \in \{1, \dots K\}$ using the BF matrix $\mathbf{T}$, DL power allocation vector $\mathbf{q}$ and per-antenna power allocation matrix $\mathbf{Q} = \diag \left(\sum_{i=1}^K q_i {\mathbf{t}}_i {\mathbf{t}}_i^\Her \right)^{\frac{1}{2}}$. It can be shown that the same set of SQINR values can be achieved in the UL under 1-bit ADC constraints by letting ${\mathbf{t}}_k = \mathbf{A}_\text{u} \mathbf{u}_k / \| \mathbf{A}_\text{u} \mathbf{u}_k \|_2$ and $\mathbf{p} =  \frac{\pi \sigma^2}{2} \left(  \mathbf{I}_K -  \mathbf{D}({\mathbf{T}}) \mathbf{\Psi}^\T({\mathbf{T}}) \right)^{-1} \mathbf{D}({\mathbf{T}}) \mathbf{1}_K$ under the same sum power constraint in the UL stage as the BS transmit power in the DL stage.
\end{theorem}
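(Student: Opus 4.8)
The plan is to reduce both SQINRs in \eqref{eq:downlinkSINR_covar} and \eqref{eq:uplinkSINR_covar} to a common ``coupled‑interference'' form and then run the classical diagonal/transpose duality argument of \cite{MUDL} on that form. Two structural facts drive the reduction. First, the small angle approximation of Section~\ref{sec:uqn} replaces $\mathbf{R}_{\pmb{\eta}_\text{p}}$ by $(1-2/\pi)\mathbf{I}$ for $\text{p}\in\{\text{d},\text{u}\}$. Second, because only the moduli $|t_{i,n}|^2$ enter a diagonal, the definition $\mathbf{Q}=\diag(\sum_i q_i\mathbf{t}_i\mathbf{t}_i^\Her)^{1/2}$ gives $\mathbf{Q}^2=\diag(\mathbf{R}_{\mathbf{y_\text{d}}})$, hence $\mathbf{A}_\text{d}\mathbf{Q}=\mathbf{Q}\mathbf{A}_\text{d}=\sqrt{2/\pi}\,\mathbf{I}_{\NBS}$; it also gives $\tr(\mathbf{Q}\mathbf{Q}^\Her)=\sum_i q_i=\mathbf{1}_K^\T\mathbf{q}$, so the DL power budget reads $\mathbf{1}_K^\T\mathbf{q}=P_\BS$. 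Substituting these into \eqref{eq:downlinkSINR_covar}, scaling numerator and denominator by $\pi/2$, and writing $G_{ki}=|\mathbf{h}_k^\Her\mathbf{t}_i|^2$, $W_{ki}=\sum_n|h_{k,n}|^2|t_{i,n}|^2$, the DL SQINR collapses to $\gamma_k^{\text{DL}}=q_kG_{kk}\,/\,(\sum_i q_i\Phi_{ki}+\tfrac{\pi\sigma^2}{2})$, where $\Phi_{ki}=G_{ki}+(\tfrac{\pi}{2}-1)W_{ki}$ for $i\neq k$ and $\Phi_{kk}=(\tfrac{\pi}{2}-1)W_{kk}$.

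The crucial step is to show the UL SQINR has the identical shape with $\mathbf{\Phi}$ replaced by $\mathbf{\Phi}^\T$. Put $c_k=\|\mathbf{A}_\text{u}\mathbf{u}_k\|_2$, so that the substitution ${\mathbf{t}}_k=\mathbf{A}_\text{u}\mathbf{u}_k/\|\mathbf{A}_\text{u}\mathbf{u}_k\|_2$ reads $\mathbf{A}_\text{u}\mathbf{u}_k=c_k\mathbf{t}_k$. Every quadratic form $\mathbf{u}_k^\Her\mathbf{A}_\text{u}^\Her(\cdot)\mathbf{A}_\text{u}\mathbf{u}_k$ in \eqref{eq:uplinkSINR_covar} becomes $c_k^2\,\mathbf{t}_k^\Her(\cdot)\mathbf{t}_k$, while the QN term $(1-2/\pi)\|\mathbf{u}_k\|_2^2$ equals $1-2/\pi$ by unit norm. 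Dividing numerator and denominator by $c_k^2$ removes $\mathbf{u}_k$ from the expression except through $c_k^2$, which is pinned down by $\|\mathbf{u}_k\|_2=1$: since $\mathbf{A}_\text{u}^{-2}=\tfrac{\pi}{2}\diag(\mathbf{R}_{\mathbf{y_\text{u}}})$ and $(\mathbf{R}_{\mathbf{y_\text{u}}})_{nn}=\sum_i p_i|h_{i,n}|^2+\sigma^2$, one gets $1/c_k^2=\mathbf{t}_k^\Her\mathbf{A}_\text{u}^{-2}\mathbf{t}_k=\tfrac{\pi}{2}(\sum_i p_iW_{ik}+\sigma^2)$. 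Therefore $(1-2/\pi)/c_k^2=(\tfrac{\pi}{2}-1)\sum_i p_iW_{ik}+(\tfrac{\pi}{2}-1)\sigma^2$, and the last term combines with the IID‑noise term $\sigma^2$ to produce exactly $\tfrac{\pi\sigma^2}{2}$. The upshot is $\gamma_k^{\text{UL}}=p_kG_{kk}\,/\,(\sum_i p_i\Phi_{ik}+\tfrac{\pi\sigma^2}{2})=p_kG_{kk}\,/\,((\mathbf{\Phi}^\T\mathbf{p})_k+\tfrac{\pi\sigma^2}{2})$ --- the DL form with the coupling matrix transposed and, since $W_{kk}$ is transpose‑invariant, the \emph{same} diagonal.

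The remainder is the standard argument. Writing $\gamma_k^{\text{DL}}=\gamma_k$ and moving the $i=k$ (self‑quantization) term to the left gives $q_k=D_k((\mathbf{\Psi}\mathbf{q})_k+\tfrac{\pi\sigma^2}{2})$, with $D_k=\gamma_k/(G_{kk}-\gamma_k(\tfrac{\pi}{2}-1)W_{kk})$ and $\mathbf{\Psi}$ the off‑diagonal part of $\mathbf{\Phi}$; these are exactly the $\mathbf{D}(\mathbf{T})$, $\mathbf{\Psi}(\mathbf{T})$ of the statement, and they depend only on $\mathbf{H}$, $\mathbf{T}$ and $\{\gamma_k\}$ --- not on $\mathbf{p}$ or $\mathbf{A}_\text{u}$ --- so there is no circularity. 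Feasibility of the given DL configuration (existence of a positive $\mathbf{q}=\tfrac{\pi\sigma^2}{2}(\mathbf{I}_K-\mathbf{D}\mathbf{\Psi})^{-1}\mathbf{D}\mathbf{1}_K$) forces $D_k>0$ and $\rho(\mathbf{D}\mathbf{\Psi})<1$. Since $\mathbf{D}$ is diagonal, $\rho(\mathbf{D}\mathbf{\Psi}^\T)=\rho(\mathbf{\Psi}\mathbf{D})=\rho(\mathbf{D}\mathbf{\Psi})<1$, so $(\mathbf{I}_K-\mathbf{D}\mathbf{\Psi}^\T)^{-1}$ exists and is entrywise nonnegative; hence $\mathbf{p}=\tfrac{\pi\sigma^2}{2}(\mathbf{I}_K-\mathbf{D}\mathbf{\Psi}^\T)^{-1}\mathbf{D}\mathbf{1}_K\geq\mathbf{0}$ is well defined, and by construction it makes $\gamma_k^{\text{UL}}(\mathbf{u}_k,\mathbf{p})=\gamma_k$ for all $k$, with $\mathbf{u}_k=\mathbf{A}_\text{u}^{-1}\mathbf{t}_k/\|\mathbf{A}_\text{u}^{-1}\mathbf{t}_k\|_2$ and $\mathbf{A}_\text{u}$ built from this very $\mathbf{p}$.

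Finally I would check the power budget. From $(\mathbf{I}_K-\mathbf{D}\mathbf{\Psi})^{-1}\mathbf{D}=(\mathbf{D}^{-1}-\mathbf{\Psi})^{-1}$ we get $\mathbf{1}_K^\T\mathbf{q}=\tfrac{\pi\sigma^2}{2}\mathbf{1}_K^\T(\mathbf{D}^{-1}-\mathbf{\Psi})^{-1}\mathbf{1}_K$; transposing the scalar $\mathbf{1}_K^\T(\mathbf{I}_K-\mathbf{D}\mathbf{\Psi}^\T)^{-1}\mathbf{D}\mathbf{1}_K=\mathbf{1}_K^\T\mathbf{D}(\mathbf{I}_K-\mathbf{\Psi}\mathbf{D})^{-1}\mathbf{1}_K=\mathbf{1}_K^\T(\mathbf{D}^{-1}-\mathbf{\Psi})^{-1}\mathbf{1}_K$ yields $\mathbf{1}_K^\T\mathbf{p}=\mathbf{1}_K^\T\mathbf{q}=P_\BS$, i.e.\ the UL uses the same sum power. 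The main obstacle is the middle step: forcing the UL quantization‑noise and thermal‑noise contributions, after the beamformer substitution, to assemble into exactly the transposed DL coupling plus the identical additive constant $\tfrac{\pi\sigma^2}{2}$ --- it is precisely this requirement that pins down the scaling ${\mathbf{t}}_k=\mathbf{A}_\text{u}\mathbf{u}_k/\|\mathbf{A}_\text{u}\mathbf{u}_k\|_2$ and the explicit value of $c_k^2$. Once both SQINRs are in the common form, the duality relation and the power‑conservation identity are routine diagonal‑matrix manipulations.
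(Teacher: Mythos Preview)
Your proposal is correct and follows essentially the same route as the paper: exploit the small angle approximation together with the specific choice of $\mathbf{Q}$ to collapse the DL SQINR, use the substitution $\mathbf{t}_k=\mathbf{A}_\text{u}\mathbf{u}_k/\|\mathbf{A}_\text{u}\mathbf{u}_k\|_2$ to collapse the UL SQINR into the same coupled form with the transposed coupling matrix, and then run the diagonal/push-through identity to get $\|\mathbf{p}\|_1=\|\mathbf{q}\|_1$. One small correction: the paper's $\mathbf{D}(\mathbf{T})$ has $D_k=\gamma_k/G_{kk}$ and its $\mathbf{\Psi}(\mathbf{T})$ keeps the self-quantization term $(\tfrac{\pi}{2}-1)W_{kk}$ on the diagonal (see \eqref{eq:DMatrix}--\eqref{eq:PsiMatrix}), whereas you absorb that term into $D_k$ and make your $\mathbf{\Psi}$ strictly off-diagonal --- the two rearrangements of the same linear system yield the identical $\mathbf{p}$, but your matrices are not literally ``the $\mathbf{D}(\mathbf{T}),\mathbf{\Psi}(\mathbf{T})$ of the statement.''
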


\begin{proof}
See Sections \ref{sec:dualityDL} and \ref{sec:dualityUL}.
\end{proof}

%

\addtocounter{equation}{1}
\setcounter{storeeqcounter}{\value{equation}}

\begin{figure*}[b]
\vspace{-0.5cm}
\hrulefill
\normalsize
\setcounter{tempeqcounter}{\value{equation}} 
\begin{IEEEeqnarray}{rCl}
\setcounter{equation}{\value{storeeqcounter}} 
\gamma_k = \frac{ q_k \mathbf{t}_k^\Her \mathbf{R}_k \mathbf{t}_k}{  \sum_{ \substack{i = 1\\ i \neq k}}^K q_i \mathbf{t}_i^\Her \mathbf{R}_k \mathbf{t}_i + \frac{\pi}{2} \sigma^2 + \left( \frac{\pi}{2} - 1 \right) \tr \left( \sum_{i=1}^K q_i \mathbf{t}_i^\Her \diag(\mathbf{R}_k^\ast) \mathbf{t}_i  \right)}.
\label{eq:downlinkSINR_covar2}
\end{IEEEeqnarray}
\setcounter{equation}{\value{tempeqcounter}} 
\end{figure*}

\vspace{-0.2cm}
\subsection{Downlink SQINR} \label{sec:dualityDL}
Let us select $\mathbf{Q} = \diag \left(\sum_{k=1}^K q_k {\mathbf{t}}_k {\mathbf{t}}_k^\Her \right)^{\frac{1}{2}}$ as the DL per-antenna power allocation matrix. With this choice of $\mathbf{Q}$, the per-antenna power allocation after the quantization operation results in the same power as for a $\infty-$resolution DAC~using the same beamformers. Let $\gamma_k$ denote the target DL SQINR~for user $k$ that needs to be achieved by appropriately choosing the DL power allocation vector $\mathbf{q}$ . Equating the target DL SQINR $\gamma_k$ to the achieved DL SQINR $\gamma_k^{\text{DL}}(\mathbf{T} ,  \mathbf{Q} , \mathbf{q})$ from (\ref{eq:downlinkSINR_covar}) (under the small angle approximation) and using the matrix~identities
\begin{equation*}
\begin{split}
& \tr \left( \mathbf{A} \diag(\mathbf{B}) \right) = \tr \left( \mathbf{B} \diag(\mathbf{A}) \right),\\
& \tr \left( \mathbf{ABC} \right) = \tr \left( \mathbf{BCA} \right) = \tr \left( \mathbf{CAB} \right),
\end{split}
\end{equation*}
the $K$ DL SQINR constraints are given by (\ref{eq:downlinkSINR_covar2}).

The DL SQINR formulation $\gamma_k^{\text{DL}}(\mathbf{T} ,  \mathbf{Q} , \mathbf{q})$ in Section \ref{sec:DL} given by (\ref{eq:downlinkSINR_covar}) is equivalently given by the simplified expression $\gamma_k^{\text{DL}}(\mathbf{T} , \mathbf{q})$ on the right hand side (RHS) of (\ref{eq:downlinkSINR_covar2}) where the explicit dependance on the per-antenna power allocation matrix $\mathbf{Q}$ vanishes under the choice of $\mathbf{Q} = \diag \left(\sum_{i=1}^K q_i {\mathbf{t}}_i {\mathbf{t}}_i^\Her \right)^{\frac{1}{2}}$. Next, we define the $K \times K$ diagonal SQINR matrix $\mathbf{D}({\mathbf{T}})$
\begin{equation}
  \mathbf{D}({\mathbf{T}}) = \begin{bmatrix}
     \gamma_1/( {\mathbf{t}}_1^\Her \mathbf{R}_1 {\mathbf{t}}_1) & \dots & 0 \\      
     \vdots & \ddots  &\vdots\\    
     0 &  \dots & \gamma_K/( {\mathbf{t}}_K^\Her \mathbf{R}_K {\mathbf{t}}_K)
      \end{bmatrix}.
  \label{eq:DMatrix}
\end{equation} 
With $ \widehat{\mathbf{R}}_k = \left( \frac{\pi}{2} - 1 \right) \text{diag} \left(\mathbf{R}_k \right)$ and $ \widetilde{\mathbf{R}}_k = \mathbf{R}_k + \widehat{\mathbf{R}}_k $, we also define the $K \times K$ coupling matrix $\mathbf{\Psi}({\mathbf{T}})$ as
\begin{equation}
  \mathbf{\Psi}({\mathbf{T}}) = \begin{bmatrix}
     {\mathbf{t}}_1^\Her \widehat{\mathbf{R}}_1  {\mathbf{t}}_1 & {\mathbf{t}}_2^\Her \widetilde{\mathbf{R}}_1  {\mathbf{t}}_2  & \dots & {\mathbf{t}}_K^\Her \widetilde{\mathbf{R}}_1  {\mathbf{t}}_K \\ 
     
    {\mathbf{t}}_1^\Her \widetilde{\mathbf{R}}_2 {\mathbf{t}}_1 & {\mathbf{t}}_2^\Her \widehat{\mathbf{R}}_2  {\mathbf{t}}_2  & \dots & {\mathbf{t}}_K^\Her \widetilde{\mathbf{R}}_2 {\mathbf{t}}_K\\ 
    
    \vdots & \ddots & \ddots &\ddots\\ 
    
     {\mathbf{t}}_1^\Her \widetilde{\mathbf{R}}_K {\mathbf{t}}_1 & {\mathbf{t}}_2^\Her \widetilde{\mathbf{R}}_K {\mathbf{t}}_2  & \dots & {\mathbf{t}}_K^\Her \widehat{\mathbf{R}}_K  {\mathbf{t}}_K
      \end{bmatrix}.
  \label{eq:PsiMatrix}
\end{equation}
Using (\ref{eq:DMatrix}) and (\ref{eq:PsiMatrix}), the $K$ equations in (\ref{eq:downlinkSINR_covar2}) can be rearranged in matrix form as
\begin{equation} \label{eq:downlinkSINR_covar3}
\mathbf{q} = \mathbf{D}({\mathbf{T}}) \mathbf{\Psi}({\mathbf{T}}) \mathbf{q} + \frac{\pi \sigma^2}{2} \mathbf{D}({\mathbf{T}}) \mathbf{1}_K. \end{equation}
Using (\ref{eq:downlinkSINR_covar3}), the DL power allocation vector $\mathbf{q}$ can be written~as
\begin{equation} \label{eq:downlinkSINR_covar4}
\mathbf{q} =  \frac{\pi \sigma^2}{2} \left(  \mathbf{I}_K -  \mathbf{D}({\mathbf{T}}) \mathbf{\Psi}({\mathbf{T}}) \right)^{-1} \mathbf{D}({\mathbf{T}}) \mathbf{1}_K. \end{equation}
The existence of the matrix inverse in (\ref{eq:downlinkSINR_covar4}) is formally established in Lemma \ref{lemma:lemma1} and Lemma \ref{lemma:lemma2}.

\begin{lemma} \label{lemma:lemma1}
For any \emph{feasible} target DL SQINR set $\{\gamma_k\}$ with $k \in \{1, \dots K\}$, $\lambda_\maxx \left(\mathbf{D}({\mathbf{T}}) \mathbf{\Psi}({\mathbf{T}})\right) < 1$.
\end{lemma}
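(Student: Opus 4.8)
The plan is to treat $\mathbf{M} := \mathbf{D}({\mathbf{T}}) \mathbf{\Psi}({\mathbf{T}})$ as an entrywise nonnegative matrix and apply a Collatz--Wielandt-type bound. First I would verify nonnegativity: the diagonal entries of $\mathbf{D}({\mathbf{T}})$ equal $\gamma_k/({\mathbf{t}}_k^\Her \mathbf{R}_k {\mathbf{t}}_k) = \gamma_k/|\mathbf{h}_k^\Her {\mathbf{t}}_k|^2$, which are strictly positive for any configuration that actually serves all $K$ users, and every entry of $\mathbf{\Psi}({\mathbf{T}})$ is one of the quadratic forms ${\mathbf{t}}_i^\Her \widetilde{\mathbf{R}}_k {\mathbf{t}}_i$ or ${\mathbf{t}}_i^\Her \widehat{\mathbf{R}}_k {\mathbf{t}}_i$, both nonnegative because $\mathbf{R}_k = \mathbf{h}_k\mathbf{h}_k^\Her \succeq \mathbf{0}$, $\diag(\mathbf{R}_k) \succeq \mathbf{0}$, and $\tfrac{\pi}{2}-1>0$. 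By Perron--Frobenius, $\lambda_\maxx(\mathbf{M})$ coincides with the spectral radius $\rho(\mathbf{M})$, so it suffices to show $\rho(\mathbf{M}) < 1$.

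Next I would turn feasibility into a strict coordinatewise inequality. If $\{\gamma_k\}$ is \emph{feasible}, there is a beamformer matrix ${\mathbf{T}}$ (with $\mathbf{Q} = \diag(\sum_i q_i {\mathbf{t}}_i {\mathbf{t}}_i^\Her)^{1/2}$) and a finite power vector $\mathbf{q}$ such that $\gamma_k^{\text{DL}}({\mathbf{T}},\mathbf{Q},\mathbf{q}) \ge \gamma_k$ for every $k$; necessarily $q_k>0$ and ${\mathbf{t}}_k^\Her \mathbf{R}_k {\mathbf{t}}_k>0$ for all $k$, since otherwise the numerator in (\ref{eq:downlinkSINR_covar2}) would vanish while its denominator (hence its RHS) stays positive. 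Cross-multiplying the $k$-th SQINR inequality by the positive denominator and dividing by ${\mathbf{t}}_k^\Her \mathbf{R}_k {\mathbf{t}}_k$ rearranges, exactly as in the derivation of (\ref{eq:downlinkSINR_covar3}), into $q_k \ge [\mathbf{M}\mathbf{q}]_k + [\mathbf{v}]_k$ with $\mathbf{v} = \tfrac{\pi\sigma^2}{2}\mathbf{D}({\mathbf{T}})\mathbf{1}_K$. Because $\sigma^2>0$ and $\mathbf{D}({\mathbf{T}})$ has strictly positive diagonal, $\mathbf{v} > \mathbf{0}$ entrywise, so $\mathbf{M}\mathbf{q} \le \mathbf{q} - \mathbf{v} < \mathbf{q}$, i.e. $[\mathbf{M}\mathbf{q}]_k/q_k < 1$ for every $k$.

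Finally I would invoke the standard fact that, for a nonnegative matrix $\mathbf{M}$ and any $\mathbf{x} > \mathbf{0}$, $\rho(\mathbf{M}) \le \max_k [\mathbf{M}\mathbf{x}]_k/x_k$. One quick justification: with $c = \max_k [\mathbf{M}\mathbf{q}]_k/q_k$, the relation $\mathbf{M}\mathbf{q} \le c\,\mathbf{q}$ gives $\mathbf{M}^n \mathbf{q} \le c^n \mathbf{q}$ by induction on $n$, and since $\mathbf{q}>\mathbf{0}$ this forces $\|\mathbf{M}^n\| = O(c^n)$, whence $\rho(\mathbf{M}) = \lim_n \|\mathbf{M}^n\|^{1/n} \le c$; alternatively one tests the inequality against a nonnegative left eigenvector of $\mathbf{M}$ associated with $\rho(\mathbf{M})$. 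Taking $\mathbf{x}=\mathbf{q}$ and combining with the strict inequality from the previous step gives $\lambda_\maxx(\mathbf{M}) = \rho(\mathbf{M}) \le \max_k [\mathbf{M}\mathbf{q}]_k/q_k < 1$, which is the claim (and in particular makes $\mathbf{I}_K - \mathbf{M}$ invertible, as used in (\ref{eq:downlinkSINR_covar4})).

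I expect the only real subtlety to be the precise use of \emph{feasibility}: one must argue that it delivers \emph{strictly} positive $q_k$, ${\mathbf{t}}_k^\Her \mathbf{R}_k {\mathbf{t}}_k$, $\sigma^2$, and hence $\mathbf{v}$, because a merely non-strict relation $\mathbf{M}\mathbf{q}\le\mathbf{q}$ would only yield $\rho(\mathbf{M})\le 1$, which suffices neither for the claimed strict bound nor for the invertibility of $\mathbf{I}_K - \mathbf{M}$. The nonnegative-matrix machinery itself is routine once entrywise nonnegativity of $\mathbf{M}$ is established.
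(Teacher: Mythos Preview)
Your argument is correct and is genuinely different from the paper's route. The paper introduces the noise-free ratio $\hat{\gamma}_k^{\text{DL}}$ (the SQIR obtained by deleting the $\tfrac{\pi}{2}\sigma^2$ term), observes that it is scale-invariant in $\mathbf{q}$ and coincides with $\gamma_k^{\text{DL}}$ in the limit $\|\mathbf{q}\|\to\infty$, and then shows that the balanced SQIR level $R^\star$ satisfies the eigenvalue relation $\mathbf{D}(\mathbf{T})\mathbf{\Psi}(\mathbf{T})\mathbf{q}^\star = (1/R^\star)\mathbf{q}^\star$; feasibility forces $R^\star>1$, hence $\lambda_\maxx = 1/R^\star < 1$. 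In other words, the paper identifies $\lambda_\maxx$ \emph{exactly} as the reciprocal of an operationally meaningful quantity, at the cost of a limiting argument and an appeal to the balancing lemma (Lemma~\ref{lemma:lemma3}). Your approach bypasses all of this: you read feasibility directly as the componentwise inequality $\mathbf{M}\mathbf{q} \le \mathbf{q} - \mathbf{v}$ with $\mathbf{v}>\mathbf{0}$ (the strictness coming from $\sigma^2>0$), and then invoke the Collatz--Wielandt upper bound $\rho(\mathbf{M})\le\max_k[\mathbf{M}\mathbf{q}]_k/q_k$. This is shorter and more elementary, and it makes transparent precisely where the \emph{strict} inequality originates (the additive noise term), which is exactly the subtlety you flag. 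One small point of phrasing: the beamformer $\mathbf{T}$ in the lemma is fixed, so ``feasible'' should be read as ``there exists $\mathbf{q}>\mathbf{0}$ achieving the targets with \emph{this} $\mathbf{T}$,'' not as ``there exist $\mathbf{T}$ and $\mathbf{q}$''; your computation already uses the correct interpretation, only the wording needs tightening.
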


\begin{proof}
See Appendix \ref{sec:A}.
\end{proof}

\begin{lemma} \label{lemma:lemma2}
If $\lambda_\maxx \left(\mathbf{D}({\mathbf{T}}) \mathbf{\Psi}({\mathbf{T}})\right) < 1$, the matrix $\left(  \mathbf{I}_K -  \mathbf{D}({\mathbf{T}}) \mathbf{\Psi}({\mathbf{T}}) \right)$ is invertible.
\end{lemma}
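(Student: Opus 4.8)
The statement to prove is Lemma~\ref{lemma:lemma2}: if $\lambda_\maxx\left(\mathbf{D}({\mathbf{T}})\mathbf{\Psi}({\mathbf{T}})\right) < 1$, then $\left(\mathbf{I}_K - \mathbf{D}({\mathbf{T}})\mathbf{\Psi}({\mathbf{T}})\right)$ is invertible.

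The plan is to argue directly from the spectral characterization of invertibility: a square matrix is singular if and only if $0$ is one of its eigenvalues. I would first observe that the eigenvalues of $\mathbf{I}_K - \mathbf{M}$, where $\mathbf{M} = \mathbf{D}({\mathbf{T}})\mathbf{\Psi}({\mathbf{T}})$, are exactly $1 - \mu$ as $\mu$ ranges over the eigenvalues of $\mathbf{M}$ (this follows because $\mathbf{I}_K$ and $\mathbf{M}$ are simultaneously triangularizable, or simply from the fact that if $\mathbf{M}\mathbf{v} = \mu\mathbf{v}$ then $(\mathbf{I}_K - \mathbf{M})\mathbf{v} = (1-\mu)\mathbf{v}$, and conversely every eigenvalue arises this way by a dimension count). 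Hence $\mathbf{I}_K - \mathbf{M}$ is invertible precisely when $1$ is not an eigenvalue of $\mathbf{M}$.

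Next I would invoke the hypothesis. Since $\mathbf{M}$ has nonnegative real entries (each $\gamma_k > 0$ and $\mathbf{t}_k^\Her\mathbf{R}_k\mathbf{t}_k \geq 0$ for the diagonal $\mathbf{D}$, and the entries of $\mathbf{\Psi}$ are Hermitian quadratic forms $\mathbf{t}_i^\Her\widetilde{\mathbf{R}}_k\mathbf{t}_i \geq 0$ and $\mathbf{t}_k^\Her\widehat{\mathbf{R}}_k\mathbf{t}_k \geq 0$), the Perron--Frobenius theory applies: the spectral radius $\rho(\mathbf{M})$ is itself an eigenvalue of $\mathbf{M}$ and dominates the modulus of every other eigenvalue. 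The quantity $\lambda_\maxx(\mathbf{M})$ in the statement is exactly this Perron root $\rho(\mathbf{M})$. The assumption $\lambda_\maxx(\mathbf{M}) < 1$ therefore gives $|\mu| \leq \rho(\mathbf{M}) < 1$ for every eigenvalue $\mu$ of $\mathbf{M}$, so in particular $\mu \neq 1$ for all $\mu$. Consequently $0$ is not an eigenvalue of $\mathbf{I}_K - \mathbf{M}$, which is what we needed. Equivalently, one can present this via the Neumann series: $\rho(\mathbf{M}) < 1$ implies $\sum_{n=0}^{\infty}\mathbf{M}^n$ converges and equals $(\mathbf{I}_K - \mathbf{M})^{-1}$, exhibiting the inverse explicitly.

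There is essentially no hard obstacle here; the lemma is a standard linear-algebra fact once Lemma~\ref{lemma:lemma1} has supplied the spectral-radius bound. The only point deserving care is the identification of $\lambda_\maxx(\mathbf{M})$ with the spectral radius: for a general matrix the ``dominant eigenvalue'' and the spectral radius coincide only when a real nonnegative eigenvalue of maximal modulus exists, and this is guaranteed here by nonnegativity of $\mathbf{M}$ via Perron--Frobenius. I would state this nonnegativity explicitly and cite Perron--Frobenius (or simply use the Neumann-series route, which sidesteps the eigenvalue-location discussion entirely and needs only $\rho(\mathbf{M}) < 1$). Either way the argument is two or three lines.
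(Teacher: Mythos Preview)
Your proposal is correct and follows essentially the same route as the paper's own proof in Appendix~\ref{sec:B}: both reduce to the observation that $\mathbf{I}_K - \mathbf{M}$ is singular only if $1$ is an eigenvalue of $\mathbf{M} = \mathbf{D}(\mathbf{T})\mathbf{\Psi}(\mathbf{T})$, which the hypothesis $\lambda_\maxx(\mathbf{M}) < 1$ rules out. The paper presents this as a two-line contradiction argument without invoking Perron--Frobenius or the Neumann series; your extra care in identifying $\lambda_\maxx(\mathbf{M})$ with the spectral radius via nonnegativity is sound but not strictly needed, since the eigenvalue $1$ obtained in the contradiction is already real.
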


\begin{proof}
See Appendix \ref{sec:B}.
\end{proof}\\
Lemma \ref{lemma:lemma1} and Lemma \ref{lemma:lemma2} establish that for a feasible target DL SQINR set $\{\gamma_k\}$, the choice of DL power allocation vector $\mathbf{q}$ in (\ref{eq:downlinkSINR_covar4}) achieves that for a given beamformer matrix $\mathbf{T}$.

\subsection{Uplink SQINR}  \label{sec:dualityUL}
Now we show that the same set of SQINRs $\{\gamma_k\}$ for $k \in \{1, \dots K\}$ can be achieved in the UL by appropriately choosing the UL power allocation vector $\mathbf{p}$ and relating the UL combiners and DL beamformers. Defining ${\mathbf{t}}_k = \mathbf{A}_\text{u} \mathbf{u}_k / \| \mathbf{A}_\text{u} \mathbf{u}_k \|_2$ and observing that $\mathbf{u}_k^\Her \mathbf{I} \mathbf{u}_k$ equals $\mathbf{u}_k^\Her \mathbf{A}_\text{u}^\Her (\mathbf{A}_\text{u}^\Her)^{-1} \mathbf{A}_\text{u}^{-1} \mathbf{A}_\text{u} \mathbf{u}_k$, we equate the target UL SQINR $\gamma_k$ to the achieved UL SQINR $\gamma_k^{\text{UL}}(\mathbf{u}_k ,  \mathbf{p})$ in (\ref{eq:uplinkSINR_covar}) under the small angle approximation as
\begin{equation} \label{eq:uplinkSINR_covar2}
\gamma_k = \frac{p_k \mathbf{t}_k^\Her \mathbf{R}_k \mathbf{t}_k}{  \sum_{ \substack{i = 1\\ i \neq k}}^K p_i \mathbf{t}_k^\Her \mathbf{R}_i \mathbf{t}_k + \sigma^2 \mathbf{t}_k^\Her \mathbf{t}_k + \left( 1 - \frac{2}{\pi} \right) \mathbf{t}_k^\Her \mathbf{A}_\text{u}^{-2} \mathbf{t}_k }.
\end{equation}
By using the definition of $ \mathbf{A}_\text{u} = \sqrt{ \frac{2}{\pi} } \diag \left( \sum_{k = 1}^K p_k  \mathbf{R}_k + \sigma^2 \mathbf{I} \right)^{ -\frac{1}{2} }$ from Section \ref{sec:UL} and noting that $ \mathbf{t}_k^\Her \mathbf{t}_k = 1$, the $K$ UL SQINR constraints can be simplified to (\ref{eq:uplinkSINR_covar4}) given at the top of the next page.


\addtocounter{equation}{1}
\setcounter{storeeqcounter}{\value{equation}}

\begin{figure*}[t]
\vspace{-0.0cm}
\normalsize
\setcounter{tempeqcounter}{\value{equation}} 
\begin{IEEEeqnarray}{rCl}
\setcounter{equation}{\value{storeeqcounter}} 
\gamma_k = \frac{p_k \mathbf{t}_k^\Her \mathbf{R}_k \mathbf{t}_k}{  \sum_{ \substack{i = 1\\ i \neq k}}^K p_i \mathbf{t}_k^\Her \mathbf{R}_i \mathbf{t}_k + \frac{\pi}{2} \sigma^2 + \left( \frac{\pi}{2} - 1 \right) \tr \left( \sum_{i = 1}^K p_i \mathbf{t}_k^\Her \diag( \mathbf{R}_i ) \mathbf{t}_k \right) }.
\label{eq:uplinkSINR_covar4}
\end{IEEEeqnarray}
\hrulefill
\setcounter{equation}{\value{tempeqcounter}} 
\end{figure*}


Under these simplifications, the UL SQINR formulation $\gamma_k^{\text{UL}}(\mathbf{u}_k ,  \mathbf{p})$ in Section \ref{sec:UL} given by (\ref{eq:uplinkSINR_covar}) is equivalently given by the expression $\gamma_k^{\text{UL}}(\mathbf{t}_k ,  \mathbf{p})$ on the right hand side (RHS) of (\ref{eq:uplinkSINR_covar4}) where the dependance on $\mathbf{u}_k$ has been cast in terms of $\mathbf{t}_k$. Using (\ref{eq:DMatrix}) and (\ref{eq:PsiMatrix}), the $K$ equations in (\ref{eq:uplinkSINR_covar4}) can be rearranged in matrix form as
\begin{equation} \label{eq:uplinkSINR_covar5}
\mathbf{p} = \mathbf{D}({\mathbf{T}}) \mathbf{\Psi}^\T({\mathbf{T}}) \mathbf{p} + \frac{\pi \sigma^2}{2} \mathbf{D}({\mathbf{T}}) \mathbf{1}_K.
\end{equation}
Using (\ref{eq:uplinkSINR_covar5}), the UL power allocation vector $\mathbf{p}$ (which achieves the same target SQINR set $\{\gamma_k\}$ for $k \in \{1 \dots K\}$ as the DL) can be written as
\begin{equation} \label{eq:uplinkSINR_covar6}
\mathbf{p} = \frac{\pi \sigma^2}{2} \left( \mathbf{I}_K - \mathbf{D}({\mathbf{T}}) \mathbf{\Psi}^\T({\mathbf{T}}) \right)^{-1} \mathbf{D}({\mathbf{T}}) \mathbf{1}_K.
\end{equation}
The existence of the matrix inverse in (\ref{eq:uplinkSINR_covar6}) is formally established in Lemma \ref{lemma:lemma12} and Lemma \ref{lemma:lemma22}.

\begin{lemma} \label{lemma:lemma12}
For any \emph{feasible} target UL SQINR set $\{\gamma_k\}$ with $k \in \{1, \dots K\}$, $\lambda_\maxx \left(\mathbf{D}({\mathbf{T}}) \mathbf{\Psi}^\T({\mathbf{T}})\right) < 1$.
\end{lemma}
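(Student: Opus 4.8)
The plan is to mirror the argument used for Lemma~\ref{lemma:lemma1} in Appendix~\ref{sec:A}, exploiting the fact that transposing $\mathbf{\Psi}({\mathbf{T}})$ does not destroy entrywise non-negativity. Write $\mathbf{M}_\text{u} := \mathbf{D}({\mathbf{T}}) \mathbf{\Psi}^\T({\mathbf{T}})$. First I would verify that $\mathbf{M}_\text{u}$ is an entrywise non-negative matrix: the diagonal entries of $\mathbf{D}({\mathbf{T}})$ in (\ref{eq:DMatrix}) are $\gamma_k / ({\mathbf{t}}_k^\Her \mathbf{R}_k {\mathbf{t}}_k) = \gamma_k / |\mathbf{h}_k^\Her {\mathbf{t}}_k|^2 > 0$ (the denominator is strictly positive for any feasible target, otherwise user $k$ sees no signal), while every off-diagonal entry of $\mathbf{\Psi}({\mathbf{T}})$ in (\ref{eq:PsiMatrix}) has the form ${\mathbf{t}}_i^\Her \widetilde{\mathbf{R}}_k {\mathbf{t}}_i = |\mathbf{h}_k^\Her {\mathbf{t}}_i|^2 + \left(\tfrac{\pi}{2}-1\right) {\mathbf{t}}_i^\Her \diag(\mathbf{R}_k) {\mathbf{t}}_i \ge 0$ and each diagonal entry ${\mathbf{t}}_k^\Her \widehat{\mathbf{R}}_k {\mathbf{t}}_k = \left(\tfrac{\pi}{2}-1\right){\mathbf{t}}_k^\Her \diag(\mathbf{R}_k){\mathbf{t}}_k \ge 0$, since $\tfrac{\pi}{2} - 1 > 0$ and $\diag(\mathbf{R}_k) \succeq 0$. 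Non-negativity is preserved under transposition, so $\mathbf{M}_\text{u} \ge 0$ and, by Perron--Frobenius, $\lambda_\maxx(\mathbf{M}_\text{u}) = \rho(\mathbf{M}_\text{u})$ is a non-negative real eigenvalue admitting a non-negative left eigenvector $\mathbf{v} \ge 0$, $\mathbf{v} \ne \mathbf{0}_K$, with $\mathbf{v}^\T \mathbf{M}_\text{u} = \lambda_\maxx(\mathbf{M}_\text{u})\, \mathbf{v}^\T$.

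Next I would use feasibility. By definition, a feasible UL SQINR set admits a finite, non-negative power allocation $\mathbf{p}$ attaining $\{\gamma_k\}$, and any such $\mathbf{p}$ satisfies the fixed-point relation (\ref{eq:uplinkSINR_covar5}), i.e. $\mathbf{p} = \mathbf{M}_\text{u}\mathbf{p} + \frac{\pi\sigma^2}{2}\mathbf{D}({\mathbf{T}})\mathbf{1}_K$. Put $\mathbf{b} := \frac{\pi\sigma^2}{2}\mathbf{D}({\mathbf{T}})\mathbf{1}_K$, which is strictly positive entrywise because $\sigma^2 > 0$ and $\mathbf{D}({\mathbf{T}})$ has strictly positive diagonal. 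Since $\mathbf{M}_\text{u} \ge 0$ and $\mathbf{p} \ge 0$, the relation forces $\mathbf{p} \ge \mathbf{b} > 0$, so $\mathbf{p}$ is itself strictly positive. Left-multiplying the fixed-point relation by $\mathbf{v}^\T$ gives $\mathbf{v}^\T \mathbf{p} = \lambda_\maxx(\mathbf{M}_\text{u})\,\mathbf{v}^\T\mathbf{p} + \mathbf{v}^\T\mathbf{b}$, hence $\big(1 - \lambda_\maxx(\mathbf{M}_\text{u})\big)\mathbf{v}^\T\mathbf{p} = \mathbf{v}^\T\mathbf{b}$. Both $\mathbf{v}^\T\mathbf{p}$ and $\mathbf{v}^\T\mathbf{b}$ are strictly positive (a non-zero non-negative vector paired with a strictly positive one), so the left factor must be positive, which yields $\lambda_\maxx(\mathbf{M}_\text{u}) < 1$.

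The only delicate point — and the one I would flag as the main obstacle — is pinning down precisely what ``feasible'' buys us: it must be read as the existence of a finite non-negative $\mathbf{p}$ meeting the targets, equivalently the existence of a non-negative solution of (\ref{eq:uplinkSINR_covar5}); the strict positivity $\mathbf{p} > 0$ is then automatic from $\mathbf{b} > 0$ and is what makes the pairing $\mathbf{v}^\T\mathbf{p}$ strictly positive. Note that irreducibility of $\mathbf{M}_\text{u}$ is not needed — only the elementary fact that every non-negative matrix has a non-negative left eigenvector associated with its spectral radius. As an alternative shortcut, one can observe that $\mathbf{D}({\mathbf{T}})\mathbf{\Psi}^\T({\mathbf{T}})$ is similar to $\mathbf{\Psi}^\T({\mathbf{T}})\mathbf{D}({\mathbf{T}}) = \big(\mathbf{D}({\mathbf{T}})\mathbf{\Psi}({\mathbf{T}})\big)^\T$ (conjugation by the invertible diagonal matrix $\mathbf{D}({\mathbf{T}})$), hence it has exactly the eigenvalues of $\mathbf{D}({\mathbf{T}})\mathbf{\Psi}({\mathbf{T}})$; the claim would then reduce to Lemma~\ref{lemma:lemma1}, but only after arguing that feasibility of the UL targets coincides with feasibility of the matching DL targets, so the self-contained argument above is preferable.
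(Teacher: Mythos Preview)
Your argument is correct, but it is genuinely different from the route the paper takes. The paper's proof of Lemma~\ref{lemma:lemma1} (which it simply says to mirror here) does \emph{not} work directly from the fixed-point equation~(\ref{eq:uplinkSINR_covar5}). Instead it introduces the noise-free signal-to-quantization-plus-interference ratio (SQIR), observes that SQINR is strictly increasing under scaling of the power vector while SQIR is scale-invariant, and that SQINR tends to SQIR as the total power grows without bound. Feasibility then gives a strict upper bound $R^\star>1$ on the balanced min-ratio; a balancing argument analogous to Lemma~\ref{lemma:lemma3} shows the optimizer of the noise-free problem satisfies $\mathbf{D}\mathbf{\Psi}^\T\mathbf{p}^\star=(1/R^\star)\mathbf{p}^\star$, and Perron--Frobenius identifies $1/R^\star$ with $\lambda_{\max}$, whence $\lambda_{\max}<1$.

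What the two approaches buy: the paper's route additionally uncovers the operational meaning of $\lambda_{\max}$ as the reciprocal of the noise-free balanced optimum, which dovetails with the eigenvalue characterizations in Section~\ref{sec:DLpower}--\ref{sec:ULpower}, but it leans on the balancing Lemma~\ref{lemma:lemma3} and a limiting argument. Your proof is shorter and self-contained: non-negativity of $\mathbf{M}_\text{u}$, existence of a non-negative left Perron vector, and the strict positivity of $\mathbf{b}=\tfrac{\pi\sigma^2}{2}\mathbf{D}({\mathbf{T}})\mathbf{1}_K$ are all you need once feasibility is interpreted as the existence of a non-negative $\mathbf{p}$ solving~(\ref{eq:uplinkSINR_covar5}). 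Your similarity observation $\mathbf{D}\mathbf{\Psi}^\T\sim(\mathbf{D}\mathbf{\Psi})^\T$ is also valid and would let you import Lemma~\ref{lemma:lemma1} verbatim; your caveat that one must first match UL and DL feasibility is exactly right, and since the paper establishes that equivalence only later via Theorem~\ref{theorem:theorem1}, your self-contained argument is the cleaner choice here.
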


\begin{proof}
The proof follows the proof of Lemma \ref{lemma:lemma1} in Appendix \ref{sec:A}.
\end{proof}

\begin{lemma} \label{lemma:lemma22}
If $\lambda_\maxx \left(\mathbf{D}({\mathbf{T}}) \mathbf{\Psi}^\T({\mathbf{T}})\right) < 1$, the matrix $\left(  \mathbf{I}_K -  \mathbf{D}({\mathbf{T}}) \mathbf{\Psi}^\T({\mathbf{T}}) \right)$ is invertible.
\end{lemma}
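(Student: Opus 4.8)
The plan is to mirror the argument used for Lemma~\ref{lemma:lemma2}, the only change being the transpose on the coupling matrix. The key structural fact is that $\mathbf{M} := \mathbf{D}({\mathbf{T}}) \mathbf{\Psi}^\T({\mathbf{T}})$ is an entry-wise non-negative matrix, so that Perron--Frobenius theory applies and its spectral radius coincides with the dominant eigenvalue $\lambda_\maxx(\mathbf{M})$. First I would verify non-negativity: $\mathbf{D}({\mathbf{T}})$ is a non-negative diagonal matrix, since each diagonal entry $\gamma_k / ({\mathbf{t}}_k^\Her \mathbf{R}_k {\mathbf{t}}_k) = \gamma_k / |\mathbf{h}_k^\Her {\mathbf{t}}_k|^2$ is a ratio of non-negative quantities; and every entry of $\mathbf{\Psi}({\mathbf{T}})$, hence of $\mathbf{\Psi}^\T({\mathbf{T}})$, has the form ${\mathbf{t}}_i^\Her \widehat{\mathbf{R}}_k {\mathbf{t}}_i$ or ${\mathbf{t}}_i^\Her \widetilde{\mathbf{R}}_k {\mathbf{t}}_i$, both non-negative because $\widehat{\mathbf{R}}_k = (\pi/2 - 1)\,\mathrm{diag}(\mathbf{R}_k)$ is a non-negative diagonal matrix ($\pi/2 - 1 > 0$ and the diagonal of $\mathbf{R}_k = \mathbf{h}_k \mathbf{h}_k^\Her$ consists of the moduli $|h_{k,n}|^2$) and $\widetilde{\mathbf{R}}_k = \mathbf{R}_k + \widehat{\mathbf{R}}_k$ is positive semidefinite. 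A product of entry-wise non-negative matrices is entry-wise non-negative, so $\mathbf{M} \geq 0$.

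Next I would invoke the Perron--Frobenius theorem: for the non-negative matrix $\mathbf{M}$ the spectral radius $\rho(\mathbf{M})$ is itself an eigenvalue, hence equals $\lambda_\maxx(\mathbf{M})$. The hypothesis $\lambda_\maxx(\mathbf{M}) < 1$ therefore gives $\rho(\mathbf{M}) < 1$, i.e.\ every eigenvalue $\mu$ of $\mathbf{M}$ satisfies $|\mu| < 1$. In particular $1$ is not an eigenvalue of $\mathbf{M}$, so $\det(\mathbf{I}_K - \mathbf{M}) = \prod_{i=1}^K (1 - \mu_i) \neq 0$ and $\mathbf{I}_K - \mathbf{M}$ is invertible; equivalently, $\rho(\mathbf{M}) < 1$ makes the Neumann series $\sum_{n \geq 0} \mathbf{M}^n$ converge to $(\mathbf{I}_K - \mathbf{M})^{-1}$.

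There is essentially no hard step; the only point requiring care — and the reason the hypothesis is stated in terms of $\lambda_\maxx$ rather than the spectral radius — is the appeal to Perron--Frobenius to identify the two, which rests precisely on the non-negativity established in the first paragraph. As a shortcut worth noting, since $\mathbf{D}({\mathbf{T}})$ is diagonal the matrices $\mathbf{D}({\mathbf{T}}) \mathbf{\Psi}^\T({\mathbf{T}})$ and $\mathbf{\Psi}^\T({\mathbf{T}}) \mathbf{D}({\mathbf{T}}) = \big(\mathbf{D}({\mathbf{T}}) \mathbf{\Psi}({\mathbf{T}})\big)^\T$ share the same nonzero eigenvalues, so $\mathbf{D}({\mathbf{T}}) \mathbf{\Psi}^\T({\mathbf{T}})$ has the same dominant eigenvalue as $\mathbf{D}({\mathbf{T}}) \mathbf{\Psi}({\mathbf{T}})$ and the claim reduces verbatim to Lemma~\ref{lemma:lemma2}.
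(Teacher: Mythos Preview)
Your argument is correct and follows essentially the same route as the paper: the paper simply says the proof mirrors that of Lemma~\ref{lemma:lemma2} in Appendix~\ref{sec:B}, which is the contradiction ``if $\mathbf{I}_K-\mathbf{M}$ were singular then $1$ would be an eigenvalue of $\mathbf{M}$, contradicting $\lambda_\maxx(\mathbf{M})<1$.'' Your write-up is the same idea spelled out more carefully, with the non-negativity check and the Perron--Frobenius identification of $\lambda_\maxx$ with the spectral radius made explicit; the transpose/similarity shortcut you note at the end is a nice bonus but not needed.
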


\begin{proof}
The proof follows the proof of Lemma \ref{lemma:lemma2} in Appendix \ref{sec:B}.
\end{proof}\\
Ignoring the scalar factor $\frac{\pi \sigma^2}{2}$, the total UL power allocation is given by
\begin{equation} \label{eq:duality}
\begin{split}
\|\mathbf{p}\|_1 &=   \mathbf{1}_K^\T \left( \mathbf{I}_K - \mathbf{D}({\mathbf{T}}) \mathbf{\Psi}^\T({\mathbf{T}}) \right)^{-1} \mathbf{D}({\mathbf{T}}) \mathbf{1}_K \\
& \overset{(a)}{=} \mathbf{1}_K^\T \mathbf{D}({\mathbf{T}}) \left( \mathbf{I}_K - \mathbf{\Psi}^\T({\mathbf{T}}) \mathbf{D}({\mathbf{T}}) \right)^{-1}  \mathbf{1}_K \\
& \overset{(b)}{=} \mathbf{1}_K^\T \mathbf{D}^\T({\mathbf{T}}) \left( \left( \mathbf{I}_K - \mathbf{D}({\mathbf{T}}) \mathbf{\Psi}({\mathbf{T}}) \right)^{-1} \right)^\T  \mathbf{1}_K \\
& \overset{(c)}{=} \left( \left( \mathbf{I}_K - \mathbf{D}({\mathbf{T}}) \mathbf{\Psi}({\mathbf{T}}) \right)^{-1} \mathbf{D}({\mathbf{T}}) \mathbf{1}_K \right)^\T  \mathbf{1}_K \\
& \overset{(d)}{=} \mathbf{q}^\T \mathbf{1}_K = \|\mathbf{q}\|_1,
\end{split}
\end{equation}
where (a) follows from the push-through identity, (b) follows from the diagonal structure of $\mathbf{D}({\mathbf{T}})$, (c) follows from $(\mathbf{AB})^\T = \mathbf{B}^\T \mathbf{A}^\T$, and (d) follows from (\ref{eq:downlinkSINR_covar4}). It can be seen from (\ref{eq:duality}) that the same amount of total power is needed in the DL and UL to achieve the same SQINR values for the $K$ users. This formally establishes the UL-DL duality principle stated in Theorem \ref{theorem:theorem1} for systems with 1-bit ADCs/DACs under the small angle approximation. In Section \ref{sec:solution}, we will use this UL-DL duality for solving the MU-DL-BF problem under 1-bit DAC constraints. 

\section{UL-DL duality based  proposed solution} \label{sec:solution}
In this section, we first introduce the MU-DL-BF problem with individual SQINR constraints considered in this paper. We then describe the optimal power allocation strategy for the DL and UL stages for a fixed choice of the beamforming/combining matrices. We conclude this section by providing the details of the alternating minimization algorithm for joint optimization of the linear beamformers and power allocation for each user.

\vspace{-0.2cm}
\subsection{Problem formulation} \label{sec:problem}
Let $\gamma_k$ for $1 \leq k \leq K$ be the individual target SQINRs for each user and $P_\BS$ Watts be the total DL power budget. In this paper, we aim to maximize the minimum of the $K$ achieved to target SQINR ratios over all possible beamforming matrices and power allocation vectors. This formulation has not been considered before in the context of systems with 1-bit constraints and we believe that it provides additional flexibility as will be seen in Section \ref{sec:results}. The proposed formulation of the MU-DL-BF problem for K users, each with an individual SQINR constraint, can be stated as
\begin{equation} \label{eq:DL_SINR}
\begin{aligned}
R^\text{DL}_\text{opt}(P_\BS) = \max_{\mathbf{T,q}} \min_{1\leq k \leq K} \quad & \frac{\gamma_k^{\text{DL}}(\mathbf{T , q})}{\gamma_k} \\
\textrm{s.t.} \quad & \|\mathbf{q}\|_1 \leq P_\BS\\
& ||\mathbf{t}_k||_2 = 1, \quad 1\leq k \leq K. \\
\end{aligned}
\end{equation}
A closely related problem to (\ref{eq:DL_SINR}) is maximizing the minimum SQINR given by $\max_{\mathbf{T,q}} \min_{1\leq k \leq K} {\gamma_k^{\text{DL}}(\mathbf{T , q})}$. This, however, is just a special case of (\ref{eq:DL_SINR}) when the target SQINRs $\gamma_k$ are taken to be equal for all $K$ users.
We also consider an `easier' version of the problem (\ref{eq:DL_SINR}) where the minimum of the $K$ achieved to target SQINR ratios has to be maximized over all possible power allocation vectors for a fixed DL-BF matrix $\mathbf{T}^\star$. This power allocation problem is given by
\begin{equation} \label{eq:DL_SINR2}
\begin{aligned}
R^\text{DL}_\text{opt}(P_\BS , \mathbf{T}^\star) = \max_{\mathbf{q}} \min_{1\leq k \leq K} \quad & \frac{\gamma_k^{\text{DL}}(\mathbf{T^\star , q})}{\gamma_k} \\
\textrm{s.t.} \quad & \|\mathbf{q}\|_1 \leq P_\BS.\\
\end{aligned}
\end{equation}
The MU-UL-BF problem and power allocation problem for a fixed UL-BF matrix can be cast in the same manner.

\vspace{-0.2cm}
\subsection{Optimal DL power allocation} \label{sec:DLpower}
The function $R^\text{DL}_\text{opt}(P_\BS , \mathbf{T}^\star)$ in (\ref{eq:DL_SINR2}) is strictly monotonically increasing in $P_\BS$. This is a consequence of the DL SQINR $\gamma_k^{\text{DL}}(\mathbf{T} , \mathbf{q})$ in (\ref{eq:downlinkSINR_covar2}) being a monotonically increasing function of $P_\BS$. The constant $\frac{\pi \sigma^2}{2}$ in (\ref{eq:downlinkSINR_covar2}) results in $\gamma_k^{\text{DL}}(\mathbf{T} , \alpha \mathbf{q}) > \gamma_k^{\text{DL}}(\mathbf{T} , \mathbf{q})$ for $\alpha > 1$. This allows us to characterize the maximizer of the power allocation problem in (\ref{eq:DL_SINR2}) by Lemma~\ref{lemma:lemma3}.

\begin{lemma} \label{lemma:lemma3}
Let $\mathbf{q}^\star$ $\left(\|\mathbf{q}^\star\|_1 = P_\BS \right)$ be the solution to the optimal DL power allocation problem in (\ref{eq:DL_SINR2}) for a fixed BF $\mathbf{T}^\star$. The optimizer $\mathbf{q}^\star$ results in \emph{balanced} achieved SQINR to target SQINR ratio for all $K$ users given by
\begin{equation} \label{eq:DL_SINR3}
\begin{aligned}
R^\text{DL}_\text{opt}(P_\BS , \mathbf{T}^\star) = & \frac{\gamma_k^{\text{DL}}(\mathbf{T^\star , q^\star})}{\gamma_k}, \quad {1\leq k \leq K}.
\end{aligned}
\end{equation}
\end{lemma}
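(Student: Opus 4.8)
The plan is to read Lemma~\ref{lemma:lemma3} as the classical SQINR-balancing statement and prove it by embedding the power-allocation problem (\ref{eq:DL_SINR2}) in a one-parameter family. For the fixed beamformer matrix $\mathbf{T}^\star$ and a common achieved-to-target ratio $t>0$, imposing $\gamma_k^{\text{DL}}(\mathbf{T}^\star,\mathbf{q})=t\,\gamma_k$ for every $k$ is, by exactly the manipulation that led from (\ref{eq:downlinkSINR_covar2}) to (\ref{eq:downlinkSINR_covar3}), equivalent to that fixed-point equation with $\mathbf{D}(\mathbf{T}^\star)$ replaced by $t\,\mathbf{D}(\mathbf{T}^\star)$; hence, whenever a nonnegative solution exists it is unique and equals $\mathbf{q}(t):=\tfrac{\pi\sigma^2}{2}\,t\,\big(\mathbf{I}_K-t\,\mathbf{D}(\mathbf{T}^\star)\mathbf{\Psi}(\mathbf{T}^\star)\big)^{-1}\mathbf{D}(\mathbf{T}^\star)\mathbf{1}_K$, which collapses to (\ref{eq:downlinkSINR_covar4}) at $t=1$. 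The first step is to record that $\mathbf{D}(\mathbf{T}^\star)\mathbf{\Psi}(\mathbf{T}^\star)$ is entrywise nonnegative: the diagonal of $\mathbf{D}(\mathbf{T}^\star)$ is strictly positive, and since each $\mathbf{R}_k\succeq\mathbf{0}$, $\diag(\mathbf{R}_k)\succeq\mathbf{0}$ and $\tfrac{\pi}{2}-1>0$, every entry of $\mathbf{\Psi}(\mathbf{T}^\star)$ in (\ref{eq:PsiMatrix}) is nonnegative.

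With nonnegativity in hand, the second step is a monotonicity argument. On the interval $t\in\big(0,\,1/\lambda_\maxx(\mathbf{D}(\mathbf{T}^\star)\mathbf{\Psi}(\mathbf{T}^\star))\big)$ (read as $(0,\infty)$ if $\mathbf{D}(\mathbf{T}^\star)\mathbf{\Psi}(\mathbf{T}^\star)$ is nilpotent) the Neumann expansion $\big(\mathbf{I}_K-t\,\mathbf{D}(\mathbf{T}^\star)\mathbf{\Psi}(\mathbf{T}^\star)\big)^{-1}=\sum_{n\ge 0}\big(t\,\mathbf{D}(\mathbf{T}^\star)\mathbf{\Psi}(\mathbf{T}^\star)\big)^n$ is a well-defined matrix with entries nonnegative, continuous, and nondecreasing in $t$, and it dominates $\mathbf{I}_K$; therefore $\big(\mathbf{I}_K-t\,\mathbf{D}(\mathbf{T}^\star)\mathbf{\Psi}(\mathbf{T}^\star)\big)^{-1}\mathbf{D}(\mathbf{T}^\star)\mathbf{1}_K$ has strictly positive, nondecreasing entries, and multiplying by the strictly increasing factor $\tfrac{\pi\sigma^2}{2}t$ makes each component of $\mathbf{q}(t)$, and hence $\|\mathbf{q}(t)\|_1$, continuous and strictly increasing, with $\|\mathbf{q}(t)\|_1\to 0$ as $t\downarrow 0$ and $\|\mathbf{q}(t)\|_1\to\infty$ at the upper endpoint (using $\|\mathbf{q}(t)\|_1\ge\tfrac{\pi\sigma^2}{2}\,t\,\|\mathbf{D}(\mathbf{T}^\star)\mathbf{1}_K\|_1$ together with the blow-up of the Neumann sum near $1/\lambda_\maxx$). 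Hence there is a unique $t^\star>0$ with $\|\mathbf{q}(t^\star)\|_1=P_\BS$; since $\mathbf{q}(t^\star)$ is feasible for (\ref{eq:DL_SINR2}) and equalizes every ratio at $t^\star$, we obtain $R^\text{DL}_\text{opt}(P_\BS,\mathbf{T}^\star)\ge t^\star$.

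The third step — the optimality direction — is the real work. Given any feasible $\mathbf{q}$, set $t:=\min_k\gamma_k^{\text{DL}}(\mathbf{T}^\star,\mathbf{q})/\gamma_k$; rearranging $\gamma_k^{\text{DL}}(\mathbf{T}^\star,\mathbf{q})\ge t\gamma_k$ for all $k$ as before gives the componentwise inequality $\big(\mathbf{I}_K-t\,\mathbf{D}(\mathbf{T}^\star)\mathbf{\Psi}(\mathbf{T}^\star)\big)\mathbf{q}\ \ge\ t\,\tfrac{\pi\sigma^2}{2}\,\mathbf{D}(\mathbf{T}^\star)\mathbf{1}_K$, whose right side is strictly positive in every component. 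Since $\mathbf{q}\ge\mathbf{0}$ and $\mathbf{D}(\mathbf{T}^\star)\mathbf{\Psi}(\mathbf{T}^\star)\ge\mathbf{0}$, this forces $\mathbf{q}>\mathbf{0}$ and $t\,\mathbf{D}(\mathbf{T}^\star)\mathbf{\Psi}(\mathbf{T}^\star)\mathbf{q}<\mathbf{q}$ strictly, so by the strict subinvariance property of nonnegative matrices $\lambda_\maxx\big(t\,\mathbf{D}(\mathbf{T}^\star)\mathbf{\Psi}(\mathbf{T}^\star)\big)<1$ (equivalently, by Lemma~\ref{lemma:lemma1} applied to the feasible target set $\{t\gamma_k\}$). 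The Neumann series then yields $\mathbf{q}\ge\mathbf{q}(t)$, hence $P_\BS\ge\|\mathbf{q}\|_1\ge\|\mathbf{q}(t)\|_1$, and the strict monotonicity from the second step forces $t\le t^\star$, with equality throughout only if $\mathbf{q}=\mathbf{q}(t^\star)$. Thus $R^\text{DL}_\text{opt}(P_\BS,\mathbf{T}^\star)=t^\star$, the maximizer is $\mathbf{q}^\star=\mathbf{q}(t^\star)$ (so in particular $\|\mathbf{q}^\star\|_1=P_\BS$, consistent with $\gamma_k^{\text{DL}}(\mathbf{T}^\star,\alpha\mathbf{q})>\gamma_k^{\text{DL}}(\mathbf{T}^\star,\mathbf{q})$ for $\alpha>1$ noted before the lemma), and by construction $\gamma_k^{\text{DL}}(\mathbf{T}^\star,\mathbf{q}^\star)/\gamma_k=t^\star$ for all $k$, which is (\ref{eq:DL_SINR3}).

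I expect the subinvariance/Perron--Frobenius step — turning ``a feasible $\mathbf{q}$ achieves ratio $t$'' into $\lambda_\maxx(t\,\mathbf{D}(\mathbf{T}^\star)\mathbf{\Psi}(\mathbf{T}^\star))<1$ and then into the domination $\mathbf{q}\ge\mathbf{q}(t)$ — to be the main obstacle, since everything else is a rearrangement of (\ref{eq:downlinkSINR_covar2}) plus elementary monotonicity; this is the same nonnegative-matrix machinery already invoked for Lemma~\ref{lemma:lemma1} and Lemma~\ref{lemma:lemma2}, here applied to the rescaled targets, so it can be either cited from there or redone in a couple of lines via the fact that a strict subinvariance vector drives the dominant eigenvalue strictly below one. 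A purely combinatorial alternative — shaving power from a user whose ratio strictly exceeds the minimum, which because $\mathbf{\Psi}(\mathbf{T}^\star)\ge\mathbf{0}$ cannot hurt the bottleneck users, and reassigning it to raise the minimum — also proves the claim, but making the reallocation rigorous under the SQINR coupling is messier than the fixed-point route, so I would present the latter.
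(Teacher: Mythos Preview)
Your proposal is correct and takes a genuinely different route from the paper. The paper's Appendix~\ref{sec:C} uses precisely the ``shaving'' contradiction you mention at the end and dismiss as messier: assume some user $i$ strictly exceeds the minimum ratio, decrease $q_i$ slightly (which, because $\gamma_k^{\text{DL}}$ is increasing in $q_k$ and decreasing in $q_\ell$ for $\ell\neq k$, does not reduce the objective), then uniformly rescale the freed power across all users and invoke $\gamma_k^{\text{DL}}(\mathbf{T},\alpha\mathbf{q})>\gamma_k^{\text{DL}}(\mathbf{T},\mathbf{q})$ for $\alpha>1$ to obtain a strictly larger minimum --- a contradiction. Your fixed-point route instead parametrizes the balanced allocations by the common ratio $t$, proves strict monotonicity of $\|\mathbf{q}(t)\|_1$ via the Neumann series, and then uses strict subinvariance to show that every feasible allocation achieving ratio $t$ dominates $\mathbf{q}(t)$ entrywise, forcing $t\le t^\star$. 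The trade-off: the paper's argument is a three-line contradiction that proves only the balancing statement of the lemma, while yours is longer but also delivers uniqueness of $\mathbf{q}^\star$ and the explicit characterization $\mathbf{q}^\star=\mathbf{q}(t^\star)$ --- essentially the content the paper re-derives separately in (\ref{eq:power1})--(\ref{eq:psiMatrix3}) on the way to identifying $1/t^\star$ with $\lambda_\maxx\big(\mathbf{\Upsilon}(\mathbf{T}^\star,P_\BS)\big)$.
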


\begin{proof}
See Appendix \ref{sec:C}.
\end{proof}\\
It can be seen from Lemma \ref{lemma:lemma3} that the optimal power allocation vector $\mathbf{q}^\star$ results in equal achieved SQINR to target SQINR ratios for all $K$ users. This allows us to characterize the feasibility of the target SQINR set $\{\gamma_k\}$ for $k \in \{1 \dots K\}$ using Corollary \ref{cor:cor1} which follows from Lemma \ref{lemma:lemma3}.

\begin{corollary} \label{cor:cor1}
The target SQINR set $\{\gamma_k\}$ for $k \in \{1 \dots K\}$ is achievable under a total power constraint of $P_\BS$ if and only if $R^\text{DL}_\text{opt}(P_\BS , \mathbf{T}^\star) \geq 1$.
\end{corollary}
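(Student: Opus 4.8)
The plan is to establish the two implications of the equivalence separately, using Lemma \ref{lemma:lemma3} for the direction that requires producing an explicit power allocation. Throughout, I would keep in mind that, by the definition of the power allocation problem (\ref{eq:DL_SINR2}), $R^\text{DL}_\text{opt}(P_\BS , \mathbf{T}^\star)$ is the optimal value of $\min_{1\le k\le K}\gamma_k^{\text{DL}}(\mathbf{T}^\star,\mathbf{q})/\gamma_k$ over all $\mathbf{q}$ with $\|\mathbf{q}\|_1\le P_\BS$, and that ``the target SQINR set $\{\gamma_k\}$ is achievable under the power budget $P_\BS$'' means precisely that there exists such a $\mathbf{q}$ with $\gamma_k^{\text{DL}}(\mathbf{T}^\star,\mathbf{q})\ge\gamma_k$ for every $k$.

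For the ``only if'' direction I would argue directly from the definition of the optimum: if a feasible $\mathbf{q}$ achieves $\gamma_k^{\text{DL}}(\mathbf{T}^\star,\mathbf{q})\ge\gamma_k$ for all $k$, then $\gamma_k^{\text{DL}}(\mathbf{T}^\star,\mathbf{q})/\gamma_k\ge 1$ for all $k$, hence $\min_{1\le k\le K}\gamma_k^{\text{DL}}(\mathbf{T}^\star,\mathbf{q})/\gamma_k\ge 1$, and since $R^\text{DL}_\text{opt}(P_\BS , \mathbf{T}^\star)$ is the maximum of this quantity over the feasible set, it is also $\ge 1$. This direction uses nothing beyond the definition of (\ref{eq:DL_SINR2}).

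For the ``if'' direction I would invoke Lemma \ref{lemma:lemma3}: when $R^\text{DL}_\text{opt}(P_\BS , \mathbf{T}^\star)\ge 1$, the lemma furnishes an optimizer $\mathbf{q}^\star$ with $\|\mathbf{q}^\star\|_1 = P_\BS$ for which the achieved-to-target SQINR ratio is balanced across users, i.e.\ $\gamma_k^{\text{DL}}(\mathbf{T}^\star,\mathbf{q}^\star)/\gamma_k = R^\text{DL}_\text{opt}(P_\BS , \mathbf{T}^\star)$ for every $k$. Multiplying through by $\gamma_k>0$ gives $\gamma_k^{\text{DL}}(\mathbf{T}^\star,\mathbf{q}^\star) = \gamma_k\, R^\text{DL}_\text{opt}(P_\BS , \mathbf{T}^\star)\ge\gamma_k$, so $\mathbf{q}^\star$, which respects the budget with equality, witnesses that the target SQINR set is achievable.

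I do not anticipate a genuine obstacle here: the statement is essentially a repackaging of Lemma \ref{lemma:lemma3}. The one point that warrants a line of care is that the maximum in (\ref{eq:DL_SINR2}) is attained — so that a concrete achieving $\mathbf{q}^\star$ exists rather than merely a maximizing sequence — which is exactly what Lemma \ref{lemma:lemma3} (together with the strict monotonicity of $\gamma_k^{\text{DL}}$ in $P_\BS$ noted just before it) supplies; and that one should be explicit about the meaning of ``achievable'' so that both directions line up cleanly with the definition of $R^\text{DL}_\text{opt}(P_\BS , \mathbf{T}^\star)$.
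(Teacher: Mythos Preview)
Your proposal is correct and is exactly the natural elaboration of the paper's own treatment: the paper does not give a separate proof of Corollary~\ref{cor:cor1}, merely stating that it ``follows from Lemma~\ref{lemma:lemma3}''. Your two-direction argument (the ``only if'' from the definition of (\ref{eq:DL_SINR2}), the ``if'' from the balanced optimizer $\mathbf{q}^\star$ furnished by Lemma~\ref{lemma:lemma3}) is precisely how one unpacks that remark.
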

A further consequence of Lemma \ref{lemma:lemma3} is that if $R^\text{DL}_\text{opt}(P_\BS , \mathbf{T}^\star) < 1$, then the target SQINRs $[\gamma_1 \dots \gamma_k \dots \gamma_K]$ are not met. In that setting, the achieved SQINRs are an \emph{equal} fractional multiple $(\leq 1)$ of the target SQINRs $[\gamma_1 \dots \gamma_k \dots \gamma_K]$ for all $K$ users.

Using the matrix definitions (\ref{eq:DMatrix}) and (\ref{eq:PsiMatrix}) and putting in the value of the $\gamma_k^{\text{DL}}(\mathbf{T^\star , q^\star})$ from (\ref{eq:downlinkSINR_covar2}), the $K$ equations in (\ref{eq:DL_SINR3}) can be written in matrix form as
\begin{equation} \label{eq:power1}
\mathbf{q}^\star \frac{1}{R^\text{DL}_\text{opt}(P_\BS , \mathbf{T}^\star)} = \mathbf{D}(\mathbf{T}^\star) \mathbf{\Psi}(\mathbf{T}^\star) \mathbf{q}^\star + \frac{\pi \sigma^2}{2} \mathbf{D}(\mathbf{T}^\star) \mathbf{1}_K.
\end{equation}
Multiplying both sides by $\mathbf{1}_K^\T$ and noting that $\mathbf{1}_K^\T \mathbf{q}^\star = P_\BS$
\begin{equation} \label{eq:power2}
\frac{1}{R^\text{DL}_\text{opt}(P_\BS , \mathbf{T}^\star)} = \frac{ \mathbf{1}_K^\T \mathbf{D}(\mathbf{T}^\star) \mathbf{\Psi}(\mathbf{T}^\star) \mathbf{q}^\star}{P_\BS} + \frac{\pi \sigma^2}{2 P_\BS} \mathbf{1}_K^\T \mathbf{D}(\mathbf{T}^\star) \mathbf{1}_K.
\end{equation}
Defining the extended DL power allocation vector $\mathbf{q}^\star_\text{ext} = \begin{bmatrix} \mathbf{q}^\star \quad 1 \end{bmatrix}^\T$ and the  non-negative extended DL coupling matrix
\begin{equation}
  \mathbf{\Upsilon}(\mathbf{T}^\star , P_\BS) = \begin{bmatrix}
    \mathbf{D(T^\star)} \mathbf{\Psi(T^\star)} &  \frac{\pi \sigma^2}{2}\mathbf{D(T^\star)} \mathbf{1}_K  \\ 
        
   \frac{\mathbf{1}_K^\T}{P_\BS}\mathbf{D(T^\star)} \mathbf{\Psi(T^\star)} &  \frac{\pi \sigma^2}{2 P_\BS} \mathbf{1}_K^\T  \mathbf{D(T^\star)}\mathbf{1}_K
      \end{bmatrix},
  \label{eq:psiMatrix}
\end{equation}
it can be seen that (\ref{eq:power1}) and (\ref{eq:power2}) form an eigensystem with
\begin{equation} \label{eq:psiMatrix2}
\mathbf{\Upsilon}(\mathbf{T}^\star , P_\BS) \mathbf{q}^\star_\text{ext} = \frac{1}{R^\text{DL}_\text{opt}(P_\BS , \mathbf{T}^\star)} \mathbf{q}^\star_\text{ext}.
\end{equation}
It can be observed that the achieved SQINR to target SQINR balance value, $R^\text{DL}_\text{opt}(P_\BS , \mathbf{T}^\star)$, equals the reciprocal of the eigenvalue of the extended coupling matrix $\mathbf{\Upsilon}(\mathbf{T}^\star , P_\BS)$. Not all eigenvalues of $\mathbf{\Upsilon}(\mathbf{T}^\star , P_\BS)$ are meaningful. Particularly, $R^\text{DL}_\text{opt}(P_\BS , \mathbf{T}^\star) > 0$ and $\mathbf{q}^\star_\text{ext} > 0$ must be fulfilled. It is known from Perron-Frobenius theory that for a non-negative matrix $\mathbf{B}$, there exists a non-negative vector $\mathbf{b} \geq 0$ such that $\mathbf{Bb} = \lambda_\maxx(\mathbf{B}) \mathbf{b}$ \cite{MUDL}. Furthermore, it has been proven that for a non-negative matrix $\mathbf{\Upsilon}(\mathbf{T}^\star , P_\BS)$ with the structure in (\ref{eq:psiMatrix}), the maximal eigenvalue (and its associated eigenvector) are strictly positive and no other eigenvalue fulfills the positivity criterion \cite{MUDL}. Hence the SQINR balancing solution to the DL power allocation problem in (\ref{eq:DL_SINR2}) is given by
\begin{equation} \label{eq:psiMatrix3}
R^\text{DL}_\text{opt}(P_\BS , \mathbf{T}^\star) = \frac{1}{ \lambda_\maxx \left( \mathbf{\Upsilon}(\mathbf{T}^\star , P_\BS) \right) }.
\end{equation}
And the optimal DL power allocation vector $\mathbf{q}^\star$ is given by the first $K$ entries of the dominant eigenvector of $\mathbf{\Upsilon}(\mathbf{T}^\star , P_\BS)$ scaled such that the last entry is equal to one.


\vspace{-0.2cm}
\subsection{Optimal UL power allocation} \label{sec:ULpower}
Now we consider the UL scenario for the same total power constraint $P_\BS$, the same target SQINR set $\{\gamma_k\}$ for $k \in \{1 \dots K\}$, and the same combining (beamforming for DL) matrix $\mathbf{T}^\star$. Following the development in Section \ref{sec:DLpower}, we define the extended UL power allocation vector $\mathbf{p}^\star_\ext = \begin{bmatrix} \mathbf{p}^\star \quad 1 \end{bmatrix}^\T$ and the non-negative extended UL coupling matrix
\begin{equation}
  \mathbf{\Lambda}(\mathbf{T}^\star , P_\BS) = \begin{bmatrix}
    \mathbf{D(T^\star)} \mathbf{\Psi}^\T(\mathbf{T}^\star) &  \frac{\pi \sigma^2}{2}\mathbf{D(T^\star)} \mathbf{1}_K  \\ 
        
   \frac{\mathbf{1}_K^\T}{P_\BS}\mathbf{D(T^\star)} \mathbf{\Psi}^\T (\mathbf{T}^\star) &  \frac{\pi \sigma^2}{2 P_\BS} \mathbf{1}_K^\T  \mathbf{D(T^\star)}\mathbf{1}_K
      \end{bmatrix}.
  \label{eq:gammaMatrix}
\end{equation}
Like their DL counterparts, it can be shown that the $\mathbf{p}^\star_\text{ext}$ and $\mathbf{\Lambda}(\mathbf{T}^\star , P_\BS)$ form an eigensystem with
\begin{equation} \label{eq:gammaMatrix3}
\mathbf{\Lambda}(\mathbf{T}^\star , P_\BS) \mathbf{p}^\star_\text{ext} = \frac{1}{R^\text{UL}_\text{opt}(P_\BS , \mathbf{T}^\star)} \mathbf{p}^\star_\text{ext}.
\end{equation}
The SQINR balancing solution to the UL power allocation problem is given by
\begin{equation} \label{eq:gammaMatrix2}
R^\text{UL}_\text{opt}(P_\BS , \mathbf{T}^\star) = \frac{1}{ \lambda_\maxx \left( \mathbf{\Lambda}(\mathbf{T}^\star , P_\BS) \right) }.
\end{equation}
The optimal UL power allocation is given by the first $K$ entries of the dominant eigenvector of $ \mathbf{\Lambda}(\mathbf{T}^\star , P_\BS)$ scaled such that the $(K+1)^\thh$ entry equals one. The UL and DL balanced SQINR ratios are related by Lemma \ref{lemma:lemma4}.

\begin{lemma} \label{lemma:lemma4}
The UL and DL achieved SQINR to target SQINR ratios are equal, i.e.\ $R^\text{UL}_\text{opt}(P_\BS , \mathbf{T}^\star) = R^\text{DL}_\text{opt}(P_\BS , \mathbf{T}^\star)$.
\end{lemma}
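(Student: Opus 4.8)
The plan is to exploit the tight structural parallel between the DL eigensystem \eqref{eq:psiMatrix2} and the UL eigensystem \eqref{eq:gammaMatrix3}, together with the key observation already used in \eqref{eq:duality} that the total power equals $\|\mathbf{q}\|_1 = \|\mathbf{p}\|_1$ whenever the same SQINR set is met. The quantities $R^\text{DL}_\text{opt}(P_\BS,\mathbf{T}^\star)$ and $R^\text{UL}_\text{opt}(P_\BS,\mathbf{T}^\star)$ are, respectively, the reciprocals of $\lambda_\maxx(\mathbf{\Upsilon}(\mathbf{T}^\star,P_\BS))$ and $\lambda_\maxx(\mathbf{\Lambda}(\mathbf{T}^\star,P_\BS))$, so it suffices to show these two dominant eigenvalues coincide. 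Since $\mathbf{\Upsilon}$ and $\mathbf{\Lambda}$ differ only in that $\mathbf{\Psi}(\mathbf{T}^\star)$ is replaced by $\mathbf{\Psi}^\T(\mathbf{T}^\star)$ in the upper-left block and the lower-left row, the cleanest route is to prove a transpose-type relationship between the two extended coupling matrices, from which equality of the Perron eigenvalues follows.

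First I would introduce the diagonal scaling $\mathbf{S} = \diag(P_\BS, \dots, P_\BS, 1)$ of size $(K+1)\times(K+1)$ (or equivalently work with the block structure directly) and verify by direct block computation that $\mathbf{\Lambda}(\mathbf{T}^\star,P_\BS) = \mathbf{S}^{-1}\,\mathbf{\Upsilon}^\T(\mathbf{T}^\star,P_\BS)\,\mathbf{S}$, using that $\mathbf{D}(\mathbf{T}^\star)$ is diagonal hence symmetric and that $(\mathbf{D}\mathbf{\Psi})^\T = \mathbf{\Psi}^\T\mathbf{D}$. A similarity transformation preserves the spectrum, and transposition preserves the spectrum, so $\lambda_\maxx(\mathbf{\Lambda}) = \lambda_\maxx(\mathbf{\Upsilon}^\T) = \lambda_\maxx(\mathbf{\Upsilon})$. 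By Perron--Frobenius (as invoked after \eqref{eq:psiMatrix2}), both of these dominant eigenvalues are simple, strictly positive, and have strictly positive eigenvectors, so the reciprocals are well-defined and equal, giving $R^\text{UL}_\text{opt}(P_\BS,\mathbf{T}^\star) = R^\text{DL}_\text{opt}(P_\BS,\mathbf{T}^\star)$.

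An alternative, perhaps more self-contained argument: apply the chain of identities \eqref{eq:duality} not to the feasible case but to the \emph{balanced} problem. From the DL eigensystem one has $\mathbf{q}^\star/R^\text{DL}_\text{opt} = \mathbf{D}\mathbf{\Psi}\mathbf{q}^\star + \tfrac{\pi\sigma^2}{2}\mathbf{D}\mathbf{1}_K$, i.e.\ $\mathbf{q}^\star = R^\text{DL}_\text{opt}\,(\mathbf{I}_K - R^\text{DL}_\text{opt}\,\mathbf{D}\mathbf{\Psi})^{-1}\tfrac{\pi\sigma^2}{2}\mathbf{D}\mathbf{1}_K$, and symmetrically for $\mathbf{p}^\star$ with $\mathbf{\Psi}^\T$ and $R^\text{UL}_\text{opt}$; one then repeats the push-through/transpose manipulation of \eqref{eq:duality} with the factor $R_\text{opt}$ carried along to conclude that $\|\mathbf{p}^\star\|_1 = \|\mathbf{q}^\star\|_1$ \emph{forces} $R^\text{UL}_\text{opt} = R^\text{DL}_\text{opt}$, since both power vectors are normalized to the same sum $P_\BS$ and the map $R \mapsto \|\mathbf{q}^\star(R)\|_1$ is strictly monotone. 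The main obstacle in either approach is bookkeeping: getting the placement of the $P_\BS$ scalings in the extended matrices exactly right so that the similarity (or the sum-power normalization) closes, and confirming the invertibility of $(\mathbf{I}_K - R_\text{opt}\mathbf{D}\mathbf{\Psi})$ at the balanced operating point, which follows from $R_\text{opt}\,\lambda_\maxx(\mathbf{D}\mathbf{\Psi}) = \lambda_\maxx(R_\text{opt}\mathbf{D}\mathbf{\Psi}) < 1$ by construction of the eigensystem. I would lead with the similarity-transformation argument as the primary proof since it is shortest and most transparent.
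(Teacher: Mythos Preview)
Your primary route---showing $\mathbf{\Lambda}(\mathbf{T}^\star,P_\BS)=\mathbf{S}^{-1}\mathbf{\Upsilon}^\T(\mathbf{T}^\star,P_\BS)\mathbf{S}$ with $\mathbf{S}=\diag(P_\BS,\dots,P_\BS,1)$---does not close. Conjugation by that $\mathbf{S}$ acts as a scalar on the leading $K\times K$ block and therefore leaves it untouched, so the upper-left block of $\mathbf{S}^{-1}\mathbf{\Upsilon}^\T\mathbf{S}$ remains $\mathbf{\Psi}^\T(\mathbf{T}^\star)\mathbf{D}(\mathbf{T}^\star)$, whereas $\mathbf{\Lambda}$ has $\mathbf{D}(\mathbf{T}^\star)\mathbf{\Psi}^\T(\mathbf{T}^\star)$ there; the off-diagonal blocks also fail to match. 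The ``bookkeeping'' obstacle you flag is real and your proposed $\mathbf{S}$ does not resolve it. The spectral-equality idea can be rescued, but not by a pure $P_\BS$ scaling: observe that both extended matrices factor as $\mathbf{A}\mathbf{B}$ with $\mathbf{A}=\begin{bmatrix}\mathbf{I}_K\\ P_\BS^{-1}\mathbf{1}_K^\T\end{bmatrix}$, so their nonzero spectra equal those of the $K\times K$ products $\mathbf{B}\mathbf{A}=\mathbf{D}\mathbf{M}$ and $\mathbf{D}\mathbf{M}^\T$ with $\mathbf{M}=\mathbf{\Psi}+\tfrac{\pi\sigma^2}{2P_\BS}\mathbf{1}_K\mathbf{1}_K^\T$; then $\mathbf{D}\mathbf{M}\sim\mathbf{D}^{1/2}\mathbf{M}\mathbf{D}^{1/2}$ and $\mathbf{D}\mathbf{M}^\T\sim(\mathbf{D}^{1/2}\mathbf{M}\mathbf{D}^{1/2})^\T$ share eigenvalues. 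That is the missing ingredient in your similarity argument.

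Your alternative route---carrying the balanced factor $R_\text{opt}$ through the push-through/transpose chain of \eqref{eq:duality} and using the common sum-power normalization $P_\BS$ to force $R^\text{UL}_\text{opt}=R^\text{DL}_\text{opt}$---is correct, and it is precisely what the paper does: its one-line proof simply invokes Theorem~\ref{theorem:theorem1}, whose content is exactly the identity \eqref{eq:duality} that the same SQINR set is achievable in UL and DL under the same total power. Compared with the paper, your (repaired) similarity argument is more explicit about the linear-algebraic mechanism equating $\lambda_\maxx(\mathbf{\Upsilon})$ and $\lambda_\maxx(\mathbf{\Lambda})$, while the paper's duality-based statement is shorter and avoids any block manipulation of the extended matrices.
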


\begin{proof}
This follows from Theorem \ref{theorem:theorem1}.
\end{proof}\\
It follows from UL-DL duality established in Section \ref{sec:duality} that the same achieved SQINR to target SQINR ratio is achieved in both the DL and UL, albeit for different power allocation vectors. We now exploit this property to recast the MU-DL-BF problem in terms of the `easier-to-solve' MU-UL-BF problem.

\vspace{-0.2cm}
\subsection{Joint power allocation and precoder design} \label{sec:joint} 

Having established that the optimum solution to the DL power allocation problem balances the ratios $\frac{\gamma_k^{\text{DL}}(\mathbf{T^\star , q^\star})}{\gamma_k}$ at a common level $R^\text{DL}_\text{opt}(P_\BS , \mathbf{T}^\star)$, we now maximize $R^\text{DL}_\text{opt}(P_\BS , \mathbf{T}^\star)$ over all unit norm beamformers to solve the joint power allocation and beamformer optimization problem (\ref{eq:DL_SINR}). Since $1/R^\text{DL}(P_\BS , \mathbf{T}^\star)$ is the dominant eigenvalue associated with the extended coupling matrix $\mathbf{\Upsilon}(\mathbf{T}^\star , P_\BS)$ for a fixed BF $\mathbf{T}^\star$, the joint power and beamformer optimization problem (\ref{eq:DL_SINR}) can be equivalently stated as
\begin{equation} \label{eq:DL_SINR6}
R^\text{DL}_\text{opt}(P_\BS) = \frac{1}{\min_{\mathbf{T}} \lambda_{\max}\left( \mathbf{\Upsilon}(\mathbf{T}, P_\BS) \right) }.
\end{equation} 
By Lemma \ref{lemma:lemma4}, the optimum solution to (\ref{eq:DL_SINR}) can also be achieved by the equivalent UL formulation
\begin{equation} \label{eq:DL_SINR7}
R^\text{DL}_\text{opt}(P_\BS) = \frac{1}{\min_{\mathbf{T}} \lambda_{\max}\left( \mathbf{\Lambda}(\mathbf{T}, P_\BS) \right) }.
\end{equation} 
By the Perron-Frobenius theorem \cite{MUDL},
\begin{equation} \label{eq:DL_SINR8}
\min_{\mathbf{T}} \lambda_{\max}\left( \mathbf{\Lambda}(\mathbf{T}, P_\BS) \right) =  \min_{\mathbf{T}} \max_{\mathbf{x}>0} \min_{\mathbf{y}>0} \frac{ \mathbf{x}^\T \mathbf{\Lambda}(\mathbf{T}, P_\BS) \mathbf{y}  }{ \mathbf{x}^\T \mathbf{y} } .
\end{equation} 
Next, we define an intermediate cost function
\begin{equation} \label{eq:DL_SINR9}
 \widehat{\lambda} \left( \mathbf{T}, P_\BS, \mathbf{p}_\ext \right) = \max_{\mathbf{x}>0}  \frac{ \mathbf{x}^\T \mathbf{\Lambda}(\mathbf{T}, P_\BS) \mathbf{p}_\ext  }{ \mathbf{x}^\T \mathbf{p}_\ext },
\end{equation} 
that lends the equivalent problem formulation

\begin{equation} \label{eq:DL_SINR10}
\left( R^\text{DL}_\text{opt}(P_\BS) \right)^{-1} = \min_{\mathbf{T}} \min_{\mathbf{p}_\ext > 0} \widehat{\lambda} \left( \mathbf{T}, P_\BS, \mathbf{p}_\ext \right).
\end{equation} 
The variable $\mathbf{x}$ in (\ref{eq:DL_SINR9}) is an auxiliary optimization variable and has no physical meaning. Following closely the development for $\infty-$resolution ADCs/DACs MU-DL-BF problem \cite{MUDL}, we propose an alternating minimization solution to the joint power allocation and precoder design problem (\ref{eq:DL_SINR}) where either the extended UL power allocation vector $\mathbf{p}_\text{ext}$ or the beamforming matrix $\mathbf{T}$ is held fixed while the other is optimized. The representation in (\ref{eq:DL_SINR10}) enables the proposed algorithmic solution based on alternating optimization.

\subsubsection{Power allocation step}
For a fixed beamforming matrix $\mathbf{T}$, the function $\widehat{\lambda} \left( \mathbf{T}, P_\BS, \mathbf{p}_\ext \right)$ is minimized by the power allocation vector which solved the eigenvalue problem (\ref{eq:gammaMatrix3}). This follows from the optimal power allocation procedure detailed in Section \ref{sec:DLpower} and \ref{sec:ULpower}. It can also be shown by multiplying both sides of (\ref{eq:gammaMatrix3}) by $\mathbf{x}^\T$ and dividing by $\mathbf{x}^\T \mathbf{p}_\ext^\star$.
\begin{equation} \label{eq:gammaMatrix4}
\begin{split}
\frac{ \mathbf{x}^\T \mathbf{\Lambda}(\mathbf{T} , P_\BS) \mathbf{p}^\star_\text{ext} } { \mathbf{x}^\T \mathbf{p}_\ext^\star } &= \frac{1}{R^\text{UL}_\text{opt}(P_\BS , \mathbf{T}^\star)} \frac{ \mathbf{x}^\T \mathbf{p}_\ext^\star }{ \mathbf{x}^\T \mathbf{p}_\ext^\star }\\
&= \lambda_{\max}\left( \mathbf{\Lambda}(\mathbf{T}, P_\BS) \right) \\
&= \min_{\mathbf{p}_\ext > 0} \widehat{\lambda} \left( \mathbf{T}, P_\BS, \mathbf{p}_\ext \right).
\end{split}
\end{equation}

\subsubsection{Beamformer optimization step}
Now we focus on optimizing the beamforming matrix $\mathbf{T}$ for a fixed power allocation vector $\mathbf{p}$ $\left(\text{with } \mathbf{p}_\ext = \begin{bmatrix} \mathbf{p} \quad 1 \end{bmatrix}^\T \right)$ given by the problem
\begin{equation} \label{eq:P1}
\mathbf{T}^\star = \argmin_{\mathbf{T}} \widehat{\lambda} \left( \mathbf{T}, P_\BS, \mathbf{p}_\ext \right).
\end{equation}
Lemma \ref{lemma:lemma5} (proved in Appendix \ref{sec:D}) provides a useful intermediate step in this direction. 

\begin{lemma} \label{lemma:lemma5}
The cost function $\widehat{\lambda} \left( \mathbf{T}, P_\BS, \mathbf{p}_\ext \right)$ can equivalently be written as
\begin{equation}\label{eq:P2}
\max_{\mathbf{x}>0}  \frac{ \mathbf{x}^\T \mathbf{\Lambda}(\mathbf{T}, P_\BS) \mathbf{p}_\ext  }{ \mathbf{x}^\T \mathbf{p}_\ext } = \max_{1 \leq k \leq K} \frac{\gamma_k}{\gamma_k^{\text{UL}}\left(  \mathbf{t}_k ,  \mathbf{p} \right) }.
\end{equation}
\end{lemma}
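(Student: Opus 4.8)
The plan is to start from the definition of $\widehat{\lambda}$ in \eqref{eq:DL_SINR9}, recognize that the inner maximization over $\mathbf{x} > 0$ of a ratio whose numerator is linear in $\mathbf{x}$ picks out, componentwise, the largest ratio $[\mathbf{\Lambda}(\mathbf{T},P_\BS)\mathbf{p}_\ext]_i / [\mathbf{p}_\ext]_i$ over the indices $i$. Concretely, for any nonnegative vector $\mathbf{v}$ and positive vector $\mathbf{w}$, $\max_{\mathbf{x}>0} \frac{\mathbf{x}^\T \mathbf{v}}{\mathbf{x}^\T \mathbf{w}} = \max_i \frac{v_i}{w_i}$, the supremum being approached by concentrating the mass of $\mathbf{x}$ on the maximizing coordinate. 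Applying this with $\mathbf{v} = \mathbf{\Lambda}(\mathbf{T},P_\BS)\mathbf{p}_\ext$ and $\mathbf{w} = \mathbf{p}_\ext$ reduces the left-hand side of \eqref{eq:P2} to $\max_{1 \leq i \leq K+1} \frac{[\mathbf{\Lambda}(\mathbf{T},P_\BS)\mathbf{p}_\ext]_i}{[\mathbf{p}_\ext]_i}$.

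Next I would compute the $i$-th component of $\mathbf{\Lambda}(\mathbf{T},P_\BS)\mathbf{p}_\ext$ explicitly using the block structure in \eqref{eq:gammaMatrix} together with $\mathbf{p}_\ext = [\mathbf{p}\ \ 1]^\T$. For $1 \leq k \leq K$, the $k$-th row gives $[\mathbf{D}(\mathbf{T})\mathbf{\Psi}^\T(\mathbf{T})\mathbf{p}]_k + \frac{\pi\sigma^2}{2}[\mathbf{D}(\mathbf{T})\mathbf{1}_K]_k$, and dividing by $[\mathbf{p}_\ext]_k = p_k$ should, after substituting the definitions \eqref{eq:DMatrix} and \eqref{eq:PsiMatrix}, reproduce exactly $\frac{\gamma_k}{\gamma_k^{\text{UL}}(\mathbf{t}_k,\mathbf{p})}$ — this is just the per-user UL SQINR constraint \eqref{eq:uplinkSINR_covar4} read as a ratio. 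The $(K+1)$-th row divided by $[\mathbf{p}_\ext]_{K+1} = 1$ gives $\frac{1}{P_\BS}\big(\mathbf{1}_K^\T \mathbf{D}(\mathbf{T})\mathbf{\Psi}^\T(\mathbf{T})\mathbf{p} + \frac{\pi\sigma^2}{2}\mathbf{1}_K^\T\mathbf{D}(\mathbf{T})\mathbf{1}_K\big)$; one recognizes the quantity in the large parentheses as $\mathbf{1}_K^\T$ times the right-hand side of the UL fixed-point relation \eqref{eq:uplinkSINR_covar5}, i.e.\ it equals $\mathbf{1}_K^\T \mathbf{p} = \|\mathbf{p}\|_1$ whenever $\mathbf{p}$ satisfies that relation — but here $\mathbf{p}$ is arbitrary, so a little care is needed. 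The cleaner route is to observe that with the power constraint $\|\mathbf{p}\|_1 \leq P_\BS$ the $(K+1)$-th ratio is dominated by (a convex combination of) the first $K$ ratios, so it never determines the maximum; hence the max over $K+1$ indices collapses to the max over $1 \leq k \leq K$, yielding the right-hand side of \eqref{eq:P2}.

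The main obstacle is handling the $(K+1)$-th component correctly: unlike the first $K$ rows, it does not reduce to a single user's SQINR ratio, and one must argue that it is a weighted average of the other components (with weights summing appropriately) and therefore cannot exceed their maximum. I would make this precise by writing $[\mathbf{\Lambda}\mathbf{p}_\ext]_{K+1} = \frac{1}{P_\BS}\mathbf{1}_K^\T\big(\mathbf{D}(\mathbf{T})\mathbf{\Psi}^\T(\mathbf{T})\mathbf{p} + \frac{\pi\sigma^2}{2}\mathbf{D}(\mathbf{T})\mathbf{1}_K\big) = \frac{1}{P_\BS}\sum_{k=1}^K p_k \frac{\gamma_k}{\gamma_k^{\text{UL}}(\mathbf{t}_k,\mathbf{p})}$, using the identity just derived for the first $K$ rows; since $\sum_k p_k \leq P_\BS$, this is a sub-convex combination of the $\frac{\gamma_k}{\gamma_k^{\text{UL}}(\mathbf{t}_k,\mathbf{p})}$ and thus at most $\max_{1\leq k\leq K}\frac{\gamma_k}{\gamma_k^{\text{UL}}(\mathbf{t}_k,\mathbf{p})}$. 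Everything else is routine substitution of \eqref{eq:DMatrix}, \eqref{eq:PsiMatrix}, and the definition of $\mathbf{A}_\text{u}$ to match the algebra in \eqref{eq:uplinkSINR_covar4}.
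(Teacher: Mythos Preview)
Your proposal is correct and follows essentially the same route as the paper's proof in Appendix~\ref{sec:D}: reduce the max over $\mathbf{x}>0$ to a coordinatewise max via the identity $\max_{\mathbf{x}>0}\frac{\mathbf{x}^\T\mathbf{v}}{\mathbf{x}^\T\mathbf{w}}=\max_i\frac{v_i}{w_i}$, identify the first $K$ ratios with $\gamma_k/\gamma_k^{\text{UL}}(\mathbf{t}_k,\mathbf{p})$, and show the $(K{+}1)$-th ratio is a (sub-)convex combination of these and hence dominated by their maximum. Your handling of the last step is in fact slightly more careful than the paper's, which tacitly uses $\sum_k p_k = P_\BS$ when invoking ``max is greater than the average.''
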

Corollary \ref{cor:cor2} follows from Lemma \ref{lemma:lemma5}.

\begin{corollary} \label{cor:cor2}
The solution to the problem (\ref{eq:P1}) is given by independent maximization of the $K$ UL SQINRs $\gamma_k^{\text{UL}}\left(  \mathbf{t}_k ,  \mathbf{p} \right)$.
\end{corollary}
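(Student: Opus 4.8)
The plan is to read the result off directly from Lemma~\ref{lemma:lemma5}, exploiting the block-separable structure of its right-hand side. By Lemma~\ref{lemma:lemma5} the objective in (\ref{eq:P1}) equals $\max_{1\leq k\leq K}\gamma_k/\gamma_k^{\text{UL}}(\mathbf{t}_k,\mathbf{p})$, and, as already observed below (\ref{eq:uplinkSINR_covar}), the $k$-th UL SQINR $\gamma_k^{\text{UL}}(\mathbf{t}_k,\mathbf{p})$ depends on the beamforming matrix $\mathbf{T}$ only through its $k$-th column $\mathbf{t}_k$. Since the feasible set $\{\mathbf{T}:\|\mathbf{t}_k\|_2=1,\ 1\leq k\leq K\}$ is a product of per-column unit spheres with no constraint coupling distinct columns, problem (\ref{eq:P1}) is the minimization over a product domain of a pointwise maximum of $K$ functions $g_k(\mathbf{t}_k):=\gamma_k/\gamma_k^{\text{UL}}(\mathbf{t}_k,\mathbf{p})$, each of a single independently-constrained block.

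First I would record the elementary decoupling fact: for such a block-separable objective,
\[
\min_{\mathbf{T}}\ \max_{1\leq k\leq K} g_k(\mathbf{t}_k)\;=\;\max_{1\leq k\leq K}\ \min_{\|\mathbf{t}_k\|_2=1} g_k(\mathbf{t}_k),
\]
and a minimizer $\mathbf{T}^\star$ is obtained by choosing each column $\mathbf{t}_k^\star$ to minimize its own term $g_k$ separately. The ``$\geq$'' inequality is immediate from $\max_{1\leq j\leq K} g_j(\mathbf{t}_j)\geq g_k(\mathbf{t}_k)\geq \min_{\|\mathbf{t}_k\|_2=1} g_k(\mathbf{t}_k)$ for every $k$; the reverse inequality, and the optimality of the column-wise construction, hold because substituting the independently chosen $\mathbf{t}_k^\star$ does not alter any of the other $K-1$ terms. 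Compactness of the unit sphere and continuity of $g_k$ (its denominator $\gamma_k^{\text{UL}}(\mathbf{t}_k,\mathbf{p})$ being strictly positive) ensure each per-column minimizer exists.

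Second, because $\gamma_k>0$ and $\gamma_k^{\text{UL}}(\mathbf{t}_k,\mathbf{p})>0$, minimizing $g_k(\mathbf{t}_k)=\gamma_k/\gamma_k^{\text{UL}}(\mathbf{t}_k,\mathbf{p})$ over the unit sphere is exactly maximizing $\gamma_k^{\text{UL}}(\mathbf{t}_k,\mathbf{p})$ over the unit sphere. Combined with the decoupling step, this shows that $\mathbf{T}^\star$ with columns $\mathbf{t}_k^\star\in\argmax_{\|\mathbf{t}_k\|_2=1}\gamma_k^{\text{UL}}(\mathbf{t}_k,\mathbf{p})$, chosen independently for each $k$, solves (\ref{eq:P1}), which is precisely the corollary. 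There is no genuine obstacle; the only step needing explicit care is the decoupling, i.e.\ verifying that neither a constraint nor a term of the objective couples distinct columns of $\mathbf{T}$ --- exactly what Lemma~\ref{lemma:lemma5} and the UL model of Section~\ref{sec:UL} provide. The $K$ per-user subproblems are generalized Rayleigh-quotient maximizations, whose solution (a dominant generalized eigenvector) is treated in the text that follows.
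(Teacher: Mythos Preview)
Your proposal is correct and follows essentially the same approach as the paper, which simply states that the corollary ``follows from Lemma~\ref{lemma:lemma5}'' without further elaboration. Your argument spells out the decoupling step (block-separable objective over a product feasible set) that the paper leaves implicit, and your care in verifying that neither the constraints nor the terms $g_k$ couple distinct columns is exactly the content the paper takes for granted.
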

This allows us to decouple the joint optimization problem (\ref{eq:P1}) into $K$ decoupled problems. Using the definition of the UL SQINR (\ref{eq:uplinkSINR_covar4}), the beamformer $\mathbf{t}_k^\star$ maximizing the UL SQINR $\gamma_k^{\text{UL}}\left(  \mathbf{t}_k ,  \mathbf{p} \right)$ is given by
\begin{equation} \label{eq:P3}
\mathbf{t}_k^\star = \argmax_{\mathbf{t}_k} \frac{p_k \mathbf{t}_k^\Her \mathbf{R}_k \mathbf{t}_k}{\mathbf{t}_k^\Her \mathbf{Q}_k(\mathbf{p}) \mathbf{t}_k}, \quad  \textrm{s.t.} \| \mathbf{t}_k \|_2 = 1,
\end{equation} 
where
\begin{equation} \label{eq:DL_SINR20}
\mathbf{Q}_k( \mathbf{p} ) =  \sum_{ \substack{i = 1\\ i \neq k}}^K p_i \mathbf{R}_i + \left( \frac{\pi}{2} - 1 \right) \sum_{i=1}^K p_i \diag(\mathbf{R}_i)  + \frac{\pi}{2} \sigma^2 \mathbf{I}.
\end{equation} 
Since the matrices $\mathbf{R}_k$ and $\mathbf{Q}_k$ are hermitian, the solution to (\ref{eq:P3}) is given by the dominant generalized eigenvector of the matrix pair $( \mathbf{R}_k , \mathbf{Q}_k )$ for $1 \leq k \leq K$ \cite{MUDL}. 

\begin{algorithm}[t]
\caption{Alternating minimization solution to (\ref{eq:DL_SINR}) }\label{alg:alg1}
 1) \text{Initialize:} $n = 0, \mathbf{p}^{(0)} = [0, \dots , 0]^\T , P_\BS , \epsilon$ \vspace{.1cm}  \\
 2) \textbf{while} $\lambda_{\max}^{(n-1)} - \lambda_{\max}^{(n)} \geq \epsilon$ \vspace{.05cm}    \\
 3) \hspace{0cm} $n = n+1$ \vspace{.05cm}    \\
 4) \hspace{0cm}  $\forall_k \text{ } {\mathbf{t}}_k^{\star(n)} = \mathbf{v}_{\max}\left( \mathbf{R}_k,  \mathbf{Q}_k(\mathbf{p}^{(n-1)})\right)(\mathbf{v}_{\max} \triangleq$ eigenvector)   \vspace{0cm}    \\
 5) \hspace{0cm}  $\forall_k \text{ } {\mathbf{t}}_k^{\star(n)} = \mathbf{t}_k^{\star(n)} / \| \mathbf{t}_k^{\star(n)} \|_2 $ \hspace{2.1cm} (normalization)   \vspace{0cm}    \\
 6) \hspace{0cm}  $\mathbf{\Lambda}({\mathbf{T}}^{\star(n)}, P_\BS) \mathbf{p}_\text{ext}^{\star(n)} =  \lambda_{\max}^{(n)} \mathbf{p}_\text{ext}^{\star(n)}$  \hspace{0.1cm} (UL power allocation) \vspace{0cm}\\
 7) \hspace{0cm}  $\mathbf{p}^{\star(n)} = \mathbf{p}_\text{ext}^{\star(n)}[1,\dots,K] / \mathbf{p}_\text{ext}^{\star(n)}[K+1]$ \hspace{0.4cm}(normalization)\vspace{.1cm}\\
8) \textbf{end} \vspace{.1cm}\\
9) $\mathbf{\Upsilon}({\mathbf{T}}^{\star(n)}, P_\BS) \mathbf{q}_\text{ext}^{\star} =  \lambda_{\max}^{(n)} \mathbf{q}_\text{ext}^{\star} $   \hspace{0.8cm} (DL power allocation) \vspace{.05cm} \\
10) $\mathbf{q}^{\star(n)} = \mathbf{q}_\text{ext}^{\star(n)}[1,\dots,K] / \mathbf{q}_\text{ext}^{\star(n)}[K+1]$ \hspace{0.5cm}(normalization)
\end{algorithm}

The joint beamformer and power allocation algorithm is run by alternating between the beamformer optimization step for a fixed UL power allocation vector and the UL power optimization step for a fixed beamforming matrix. These two steps are repeated till $ \lambda^{(n-1)}_{\max}\left( \mathbf{\Lambda}(\mathbf{T}, P_\BS) \right) -  \lambda^{(n)}_{\max}\left( \mathbf{\Lambda}(\mathbf{T}, P_\BS) \right) < \epsilon$. Here the superscript $(\cdot)^{(n)}$ denotes the iteration index and $\epsilon$ is a predefined constant which controls when to stop the optimization procedure. Finally, the DL power allocation vector $\mathbf{q}^\star$ is calculated using the precoder matrix $\mathbf{T}^\star$ obtained in the final iteration. The proposed solution is summarized in Algorithm~\ref{alg:alg1}.

\subsection{Convergence} \label{sec:convergence} 
The DL power allocation and beamforming optimization problem under 1-bit DAC constraints is NP-hard. Different non-linear algorithms get to a computationally feasible solution by relaxing the non-convex 1-bit constraints in various manners \cite{hela2016,hela,studer2016,Studer,studer2017}. Similar to \cite{hela2016,hela,studer2016,Studer,studer2017}, our proposed solution in Algorithm~\ref{alg:alg1} has no global optimality guarantees. Our simulations, however, do indicate that the proposed algorithm typically converges within $2-5$ iterations. In this subsection, we show that the proposed alternating minimization procedure does indeed converge to some point in the solution space. Let $ \lambda_{\max}^{(n)}  =  \lambda_{\max}( \mathbf{\Lambda}({\mathbf{T}}^{\star(n)}, P_\BS))$ denote the reciprocal of the achieved SQINR to target SQINR value at the end of the $n^\thh$ iteration (step 6 in Algorithm \ref{alg:alg1}). It is clear from the precoder optimization step that the precoder matrix in the $(n+1)^\thh$ iteration, $\mathbf{T}^{\star(n+1)}$, minimizes the cost function  $\widehat{\lambda} \left( \mathbf{T}, P_\BS,  \mathbf{p}_\text{ext}^{\star(n)} \right)$ i.e.
\begin{equation} \label{eq:conv1}
\widehat{\lambda} \left( \mathbf{T}^{\star(n+1)}, P_\BS,  \mathbf{p}_\text{ext}^{\star(n)} \right) \leq \widehat{\lambda} \left( \mathbf{T}^{\star(n)}, P_\BS,  \mathbf{p}_\text{ext}^{\star(n)} \right) = \lambda_{\max}^{(n)}.
\end{equation}
From Perron-Frobenius theorem \cite{MUDL}, we know that 
\begin{equation} \label{eq:conv2}
\begin{split}
\lambda_{\max}^{(n+1)} &= \max_{\mathbf{x}>0} \min_{\mathbf{y}>0} \frac{ \mathbf{x}^\T \mathbf{\Lambda}(\mathbf{T}^{\star(n+1)}, P_\BS) \mathbf{y}  }{ \mathbf{x}^\T \mathbf{y} } \\
&\leq \max_{\mathbf{x}>0} \frac{ \mathbf{x}^\T \mathbf{\Lambda}(\mathbf{T}^{\star(n+1)}, P_\BS) \mathbf{p}_\text{ext}^{\star(n)}  }{ \mathbf{x}^\T \mathbf{p}_\text{ext}^{\star(n)} } \\
&= \widehat{\lambda} \left( \mathbf{T}^{\star(n+1)}, P_\BS,  \mathbf{p}_\text{ext}^{\star(n)} \right).
\end{split}
\end{equation}
Combining (\ref{eq:conv1}) and (\ref{eq:conv2}), it can be seen that
\begin{equation} \label{eq:conv3}
\lambda_{\max}^{(n+1)} \leq \widehat{\lambda} \left( \mathbf{T}^{\star(n+1)}, P_\BS,  \mathbf{p}_\text{ext}^{\star(n)} \right) \leq \lambda_{\max}^{(n)}.
\end{equation}
The non-negativity of the $\lambda_{\max}^{(n)}$ combined with the monotonic behavior in (\ref{eq:conv3}) implies the existence of a limiting value $\lambda_{\max}^{(\infty)}$. The degree to which the solution given by Algorithm \ref{alg:alg1} approaches this value can be controlled by varying the parameter $\epsilon$.

\section{Optimized dithering by dummy users}\label{sec:dummy}
The small angle approximation in Section \ref{sec:uqn} was crucial in proving the UL-DL duality in Section \ref{sec:duality} and in the formulation of the proposed algorithm in Section \ref{sec:solution}. The small angle assumption basically says that the off-diagonal elements of the covariance matrix of the signal before quantization, $\mathbf{R}_{\mathbf{y}_\text{d}} $ and $\mathbf{R}_{\mathbf{y}_\text{u}} $, are small compared to the diagonal entries. This assumption, however, is realistic in DL scenarios only for a large number of active users. For a small number of active users, the terms involved in the off-diagonal entries of $\mathbf{R}_{\mathbf{y}_\text{d}} $ do not undergo enough averaging and can be close to the diagonal entries. Using (\ref{eq:upn4}), this makes the quantization noise correlated which lowers the achievable SQINR. Similarly for UL settings, the small angle approximation does not remain true when the SQINR is high or the number of active users is small. These realistic considerations motivate the addition of optimized dithering to the system to ensure that the quantization noise is uncorrelated. We define a metric $d( \mathbf{B} )$ (and call it the `diagonal metric') to quantify the degree to which a matrix $\mathbf{B}$ is diagonal as
\begin{equation} \label{eq:dummy1}
d( \mathbf{B} ) = \frac{ \| \diag(\mathbf{B}) \|_F }{ \| \mathbf{B} \|_F }.
\end{equation}
It is clear from (\ref{eq:dummy1}) that $d( \mathbf{B} ) = 1$ for a diagonal matrix and that $d( \mathbf{B} ) \approx 1$ for a matrix that has the majority of weight concentrated in the diagonal entries. Fig. \ref{fig:diag} illustrates the diagonal metric in blue (\tikz\draw [blue,thick] (0,0) -- (.5,0);) for the DL quantization noise matrix $\mathbf{R}_{\pmb{\eta}_\text{d}}$ vs the number of active users for precoders and power allocation vectors obtained through the proposed algorithm in Section \ref{sec:solution}. It is clear from Fig. \ref{fig:diag} that the uncorrelated quantization noise assumption is violated for scenarios with a small number of active users. These scenarios are where optimized dithering can help boost the achievable SQINR and improve performance. Similar observations for ZF based precoders were also made for settings with a small number of users \cite{amodh}.

\begin{figure}[t]
    	\begin{center}
    		\includegraphics[width=.48\textwidth,clip,keepaspectratio]{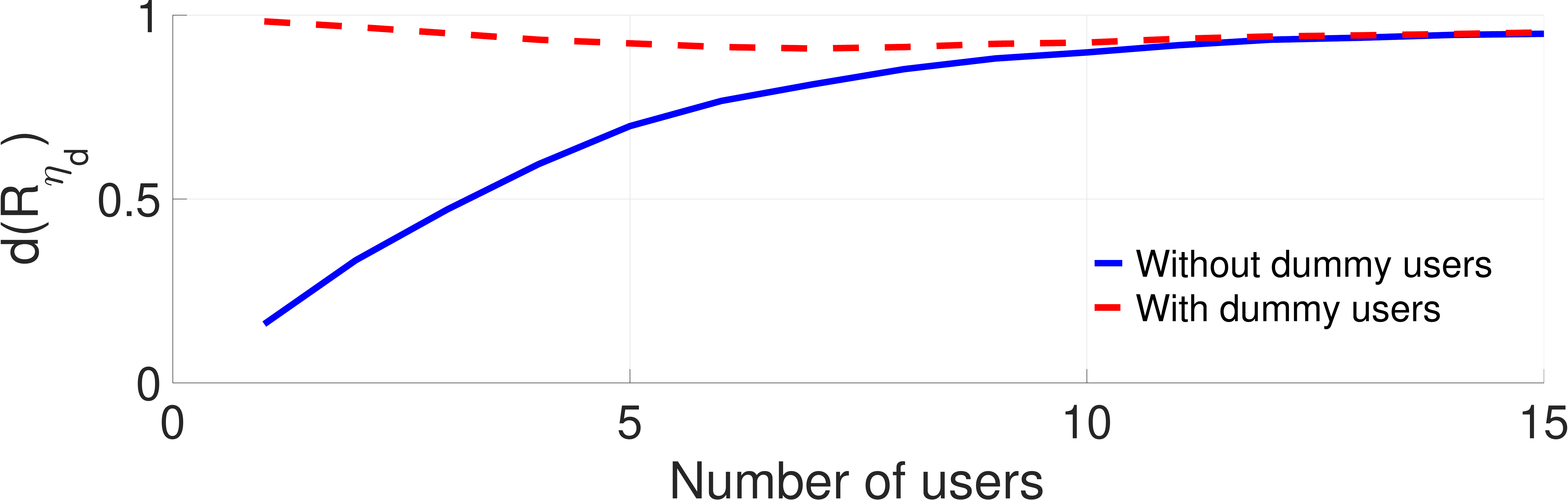}
    	\end{center}
\vspace{-0.2cm}
    	\caption{The diagonal metric $d( \mathbf{R}_{\pmb{\eta}_\text{d}} )$ with and without dummy users vs number of true system users. It can be seen that after adding the dummy users in the system, the uncorrelated quantization noise assumption is always satisfied irrespective of the number of true system users.}
    	 \label{fig:diag}
\vspace{-0.5cm}
\end{figure}

The main purpose of adding dither to the signal before quantization is to force the small angle approximation to be true by making the off-diagonal entries of $\mathbf{R}_{\mathbf{y}_\text{d}}$ and $\mathbf{R}_{\mathbf{y}_\text{u}}$ small compared to the diagonal entries. We add optimized dithering by adding $(N_\BS - K)$ dummy users to the system each with its own channel and appropriate SQINR constraint. The dummy users lie in the nullspace of the true users and thus pose minimal MUI while significantly lowering the quantization noise. The channel matrix of the $(N_\BS - K)$ dummy users, ${\mathbf{H}}_d \in \mathbb{C}^{(N_\BS - K) \times N_\text{BS}}$, is defined as
\begin{equation} \label{eq:dummy2}
{\mathbf{H}}_d = [{\mathbf{h}}_{K+1} \dots {\mathbf{h}}_{N_\BS}]^\T = \mathbb{N}({\mathbf{H}}).
\end{equation}
Similar to the true system users, the channel covariance matrices of the dummy users are defined as $\mathbf{R}_k = {\mathbf{h}}_k  {\mathbf{h}}_k^\Her$ for $K+1 \leq k \leq N_\BS$. It has been observed for ZF precoders that as the dither power is increased, both the useful signal power and the quantization noise power decrease with the decrease in quantization noise power being more significant up to some maximum dither power \cite{amodh}. After this point, the decrease in the signal power overtakes the decrease in quantization noise power thus reducing the achievable SQINR. 

In the framework considered in this paper, the amount of dither power directly depends on the power allocated to the dummy users which in turn depends on their respective SQINR constraints $[\gamma_{K+1} \dots \gamma_{N_\BS}]$. Since all the dummy users lie in the null space of the true users and there is no advantage of choosing any particular dummy user, we formulate the optimal dither power problem as a scalar optimization problem by forcing the SQINR constraint of all dummy users to be equal, i.e. $\gamma_k = \gamma_d$ for $K+1 \leq k \leq N_\BS$. With this formulation, choosing the optimal dither power boils down to choosing the correct value of the scalar $\gamma_d$. In this paper, we find the value of $\gamma_d$ by a simple grid search method. We start the procedure by initializing $\gamma_d = \epsilon_2 \approx 0$  and then increase its value up to a maximum value of $\gamma_{d_\text{max}}$ using a constant step size of $\delta_\gamma$. After solving for the optimal DL/UL power allocation vectors and the beamforming matrix using Algorithm \ref{alg:alg1}, we add the $(N_\BS - K)$ dummy users to the system with their respective channels. From here onwards in addition to computing the UL power allocation vector $\mathbf{p}$ and the beamforming matrix $\mathbf{T}$, we also compute the DL power allocation vector $\mathbf{q}$ during each iteration. Using $\mathbf{q}$, $\mathbf{T}$, and (\ref{eq:downlinkSINR_covar}), we compute the achieved DL SQINR $\gamma_k^{\text{DL}}(\mathbf{T} ,  \mathbf{Q} , \mathbf{q})$ in each iteration. The optimization procedure is stopped when the $\bar{\gamma} =  \min_{1 \leq k \leq K} \gamma_k^{\text{DL}}(\mathbf{T} ,  \mathbf{Q} , \mathbf{q})$ starts decreasing. At this point, further increase in the dither power results in more reduction in the signal power than reduction in the quantization noise power. This procedure is summarized in Algorithm \ref{alg:alg2}.

\begin{algorithm}[t]
\caption{Optimized dithering using dummy users }\label{alg:alg2}
 1) Compute $\mathbf{p}^{\star}, \mathbf{q}^{\star}, \text{and } \mathbf{T}^{\star}$ using Algorithm \ref{alg:alg1} \vspace{.1cm}
 
 2) Compute DL SQINR $ \gamma_k^{\text{DL}}(\mathbf{T}^\star ,  \mathbf{Q} , \mathbf{q}^{\star}) $ using (\ref{eq:downlinkSINR_covar}) \vspace{.1cm} 

 3) Initialize dummy users: \vspace{.1cm}  \\
 \text{} \hspace{0.5cm} $\gamma_d = \epsilon_2, \quad \gamma_{d_\text{max}}, \quad \delta_\gamma, \quad [\gamma_1 \dots \gamma_K, \underbrace{\gamma_d \dots \gamma_d}_{N_\BS - K}]$  \vspace{.1cm}  \\
 \text{} \hspace{0.5cm} ${\mathbf{H}}_d = \mathbb{N}({\mathbf{H}}), \quad \mathbf{R}_k = {\mathbf{h}}_k  {\mathbf{h}}_k^\Her \text{ for } K+1 \leq k \leq N_\BS  $ \vspace{.1cm}  \\
 \text{} \hspace{0.5cm} $\mathbf{p}^{\star(n)} = [\mathbf{p}^{\star}, \underbrace{0, \dots, 0}_{N_\BS - K}], \quad \mathbf{q}^{\star(n)} = [\mathbf{q}^{\star}, \underbrace{0, \dots, 0}_{N_\BS - K}]   $ \vspace{.1cm}  \\

 4) \textbf{ while} $(\lambda_{\max}^{(n-1)} - \lambda_{\max}^{(n)} \geq \epsilon) || (\bar{\gamma}^{(n)} \geq \bar{\gamma}^{(n-1)})$ \vspace{.05cm} 
 
 5) \hspace{0.4cm} $n = n+1$ \vspace{.05cm} 
  
 6) \hspace{0.4cm}   $\forall_k \text{ } {\mathbf{t}}_k^{\star(n)} = \mathbf{v}_{\max}\left( \mathbf{R}_k,  \mathbf{Q}_k(\mathbf{p}^{(n-1)})\right)$  \vspace{0.05cm} 
 
 7) \hspace{0.4cm}  $\forall_k \text{ } {\mathbf{t}}_k^{\star(n)} = \mathbf{t}_k^{\star(n)} / \| \mathbf{t}_k^{\star(n)} \|_2 $ \vspace{0.05cm}  
 
 8) \hspace{0.4cm}  $\mathbf{\Lambda}({\mathbf{T}}^{\star(n)}, P_\BS) \mathbf{p}_\text{ext}^{\star(n)} =  \lambda_{\max}^{(n)} \mathbf{p}_\text{ext}^{\star(n)}$  \vspace{0.05cm}
 
 9) \hspace{0.4cm}  $\mathbf{p}^{\star(n)} = \mathbf{p}_\text{ext}^{\star(n)}[1,\dots,N_\BS] / \mathbf{p}_\text{ext}^{\star(n)}[N_\BS+1]$ \vspace{.05cm}
 
10) \hspace{0.2cm} $\mathbf{\Upsilon}({\mathbf{T}}^{\star(n)}, P_\BS) \mathbf{q}_\text{ext}^{\star(n)} =  \lambda_{\max}^{(n)} \mathbf{q}_\text{ext}^{\star(n)} $ \vspace{.05cm}

11) \hspace{0.2cm} $\mathbf{q}^{\star(n)} = \mathbf{q}_\text{ext}^{\star(n)}[1,\dots,N_\BS] / \mathbf{q}_\text{ext}^{\star(n)}[N_\BS+1]$ \vspace{.05cm}

12) \hspace{0.2cm} Compute DL SQINR $\gamma_k^{\text{DL}(n)}(\mathbf{T}^{\star(n)} ,  \mathbf{Q} , \mathbf{q}^{\star(n)}) $ by (\ref{eq:downlinkSINR_covar}) \vspace{.05cm} 

13) \hspace{0.2cm} $\bar{\gamma}^{(n)} = \min_{1 \leq k \leq K} \gamma_k^{\text{DL}(n)}(\mathbf{T}^{\star(n)} ,  \mathbf{Q} , \mathbf{q}^{\star(n)})$ \vspace{.05cm}

14) \hspace{0.2cm} \textbf{if} $ (\bar{\gamma}^{(n)} \geq \bar{\gamma}^{(n-1)})$ \vspace{.05cm}

15) \hspace{0.2cm} \quad $\gamma_d = \gamma_d + \delta_\gamma$ \vspace{.05cm}
 
16) \hspace{0.2cm} \textbf{end} \vspace{.0cm}

17) \textbf{end} \vspace{.0cm}
\end{algorithm}

We now turn our attention towards tuning the hyperparameters $\gamma_{d_\text{max}}$ and $\delta_\gamma$ in Algorithm \ref{alg:alg2}. To this end, we look at a toy example in Remark \ref{rmk:rmk0}.

\begin{remark}\label{rmk:rmk0}
Consider the following DL covariance matrix with large off-diagonal entries.
\begin{equation*}
\mathbf{R}_{\mathbf{y}_\text{d}} = \begin{bmatrix}
    1 \quad 0.99 \\ 
     0.99 \quad 1
      \end{bmatrix}.
\end{equation*}
Using (\ref{eq:upn2}) and (\ref{eq:upn3}), it can be verified that the DL quantization noise matrix is given by
\begin{equation*}
\mathbf{R}_{\pmb{\eta}_\text{d}} \approx \left(1 - \frac{2}{\pi} \right) \begin{bmatrix}
    1 \quad 0.99 \\ 
     0.99 \quad 1
      \end{bmatrix}.
\end{equation*}
Now, assume that we add dithering to the system such that it gives equal contribution to the diagonal elements as the original all ones diagonal of $\mathbf{R}_{\mathbf{y}_\text{d}}$ and minimal contribution to the off-diagonal elements 
\begin{equation*}
\mathbf{R}_{\mathbf{y}_\text{d}} \approx \begin{bmatrix}
    2 \quad 1 \\ 
     1 \quad 2
      \end{bmatrix}.
\end{equation*}
Using (\ref{eq:upn2}) and (\ref{eq:upn3}), $\mathbf{R}_{\pmb{\eta}_\text{d}}$ is now given by
\begin{equation*}
\mathbf{R}_{\pmb{\eta}_\text{d}} \approx \left(1 - \frac{2}{\pi} \right) \begin{bmatrix}
    1 \quad 0.04 \\ 
    0.04 \quad 1
      \end{bmatrix}.
\end{equation*}
It can be seen from this toy example that the off-diagonal elements of the DL correlation matrix $\mathbf{R}_{\mathbf{y}_\text{d}}$ need to be less than or equal to half of the diagonal elements for the quantization noise to be approximately uncorrelated. Another way to look at this is to observe that $\text{sin}^{-1}(x) \approx x$ for $|x| \leq 0.5$. In case of a single active user, $\mathbf{R}_{\mathbf{y}_\text{d}} = {q_1} \mathbf{t}_1^\ast \mathbf{t}_1^\T$ and $q_1 = P_\BS$. Based on this intuition, approximately half of the total power $P_\BS$ should be allocated to the dummy users for the resulting quantization noise to be uncorrelated. The power allocated to true/dummy users depends on the SQINR constraints $[\gamma_1, \gamma_d \dots \gamma_d] \in \mathbb{R}^{\NBS+}$, the channel matrix and the beamformer matrix through (\ref{eq:psiMatrix2}). Given the symmetry of the whole power allocation and beamformer optimization procedure, it is reasonable that the SQINR constraint should be equal to that of the true user divided equally among all the $\NBS - 1$ dummy users. 
\end{remark}

We numerically found out that for $K = 1$,  the maximum value of the SQINR constraint $\gamma_{d_\text{max}}$ for the dummy users is on the order of the true user SQINR $\gamma_1$ divided equally into the $N_\BS - 1$ dummy users. This agrees with the intuition developed in remark \ref{rmk:rmk0}. For $K > 1$, the amount of dithering needed for the small angle approximation to hold true is less than the single user case as is apparent from Fig. \ref{fig:diag}. Since $\gamma_{d_\text{max}}$ is an upper bound, we choose $\gamma_{d_\text{max}} = \max_{1 \leq k \leq K} \frac{\gamma_k}{N_\BS - K}$ for our simulation results in Section \ref{sec:results}. The step size $\delta_\gamma$ allows a tradeoff between computational complexity and the accuracy of the optimal dither power. A larger value of $\delta_\gamma$ will result in overshooting/undershooting the optimal value by a bigger margin and reduced computational complexity. With $N_\text{max}$ denoting the maximum number of iterations of the optimization procedure, we choose $\delta_\gamma = \frac{\gamma_{d_\text{max}}}{N_\text{max}}$. This makes the maximum computation cost of the dithering procedure fixed. For our simulation results, we choose $N_\text{max} = 16$. The effect of adding dummy users to the UL-DL framework on the diagonal metric $d( \mathbf{R}_{\pmb{\eta}_\text{d}} )$ is shown in red (\tikz\draw [red,thick] (0,0) -- (.5,0);) in Fig. \ref{fig:diag}. It can be observed that after adding the dummy users $\mathbf{R}_{\pmb{\eta}_\text{d}}$ is always diagonal irrespective of the number of users and hence the quantization noise is uncorrelated.

\begin{remark}\label{rmk:rmk1}
The choice of $\gamma_{d_\text{max}}$, $\delta_\gamma$ and the optimization procedure for the scalar $\gamma_d$ is based on heuristics. Other procedures based on binary search, backtracking and varying the constant step size $\gamma_d$ can be designed without affecting the results presented in Section \ref{sec:results}.
\end{remark}


\section{Results and discussion} \label{sec:results}
In this section, we first describe our simulation setup followed by a brief description of the benchmark strategies used to compare with the proposed algorithm. We then present the SQINR and uncoded BER results.

\vspace{-0.3cm}
\subsection{Simulation setup} \label{sec:setup}
Our simulations are done based on the 3GPP Urban-Macro (UMa) line-of- sight (LoS) channel model (3GPP 38.901 UMa LoS) generated through Quadriga \cite{Quadriga}. For the results presented in this paper, we consider a scenario where the individual users are distributed over a $120^\circ$ sector around the BS  from a minimum range of 50 m to a maximum range of 150 m according to a uniform random variable. For the proposed algorithm, the target SQINRs are set equal to 3 dB for all users. The channel parameters generated by Quadriga for each realization are converted to a complex baseband channel using a truncated sinc pulse shape. All antenna elements at the BS and the users have an omni-directional pattern with a gain of 0 dBi. The important simulation parameters (unless otherwise specified) are summarized in Table \ref{table:table2}. 


\begin {table}[h]
\begin{center}
\begin{tabular}{ | >{\centering\arraybackslash} m{3.6cm} | >{\centering\arraybackslash} m{3.2cm} | }
  \hline 			
  {Quadriga channel model} & {$\text{3GPP 38.901 UMa LoS}$} \\  			
  \hline 
  { Number of antennas $N_\BS$} & 128 \\  			
  \hline 
  { $\text{Antenna element pattern}$} & $\text{omni-directional}$ \\  			
  \hline 
  { Total transmit power $P_\BS$} & 24 dBm \\  			
  \hline 
  {Carrier frequency $f_c$} & 60 GHz \\  			
  \hline 
  {Bandwidth $B$} & 8 MHz \\  			
  \hline 
\end{tabular}
\end{center}
\vspace{-0.1cm}
\caption{Important simulation parameters.}
\vspace{-0.5cm}
\label{table:table2} 
\end{table}

\vspace{-0.0cm}
\subsection{Benchmark strategies}
We use ZF precoding \cite{amodh} as a benchmark for our proposed technique with the DL BF matrix given by $\mathbf{T} =  {\mathbf{H}}^\Her ({\mathbf{H}} {\mathbf{{H}}}^\Her)^{-1}$. We choose three ways to allocate the per-antenna power allocation matrix $\mathbf{Q}$ in the 1-bit system given~by 

\begin{itemize}

\item \emph{ZF Opt-Pwr}: $\mathbf{Q} = \diag \left(\sum_{k=1}^K q_k \widehat{\mathbf{t}}_k \widehat{\mathbf{t}}_k^\Her \right)^{\frac{1}{2}}$ where the power allocation vector $\mathbf{q}$ is obtained using the optimal DL power allocation procedure described in \ref{sec:DLpower} and ${\mathbf{T}}^\star = [ \widehat{ \mathbf{t} }_1 \dots \widehat{ \mathbf{t} }_K ]$ for $\widehat{ \mathbf{t} }_k= { \mathbf{t} }_k / \| { \mathbf{t} }_k \|_2$. We note here that ZF with optimal power allocation has not been considered in literature before to the best of authors' knowledge.

\item \emph{ZF ZF-Pwr}: Let $\widehat{\mathbf{q}} = \left[ \| \mathbf{t}_1 \|_2^2 \dots \| \mathbf{t}_K \|_2^2 \right]$ and ${\mathbf{q}} = \frac{P_\BS}{\sum_{k=1}^K \widehat{\mathbf{q}}_k }\widehat{\mathbf{q}}$ be its scaled version normalized to $P_\BS$. The choice of $\mathbf{Q} = \diag \left(\sum_{k=1}^K q_k \widehat{\mathbf{t}}_k \widehat{\mathbf{t}}_k^\Her \right)^{\frac{1}{2}}$ then makes the per-antenna power allocation after the 1-bit quantization operation to be the same as the per-antenna power in the $\infty$-resolution case.

\item \emph{ZF Equal-Pwr}: $\mathbf{Q} = \diag \left( {\frac{P_\BS}{\NBS}} \mathbf{1}_{\NBS} \right)^{\frac{1}{2}}$. This corresponds to sending the same amount of power from all $\NBS$ antennas. Such power allocation has been considered before in existing literature \cite{amodh,Studer}.

\end{itemize}
For a fair comparison, we also add optimized dithering to the system when using ZF precoding using a procedure similar to described in \cite{amodh}. IID Gaussian noise with variance $\sigma_d^2$ is projected onto the null space of the channel matrix ${\mathbf{H}}$ given by $ \left( \mathbf{I}_{} - {\mathbf{H}}^\Her ({\mathbf{H}} {\mathbf{{H}}}^\Her)^{-1} {\mathbf{H}} \right)$ and added to the linearly precoded transmit signal before the 1-bit quantization operation. Similar to the procedure described in Section \ref{sec:dummy}, we start from a small $\sigma_d^2$ and keep increasing it using a constant step size till the minimum SQINR starts decreasing.
 
We also compare with a non-linear precoding solution called SQUID. The algorithm SQUID is based on Douglas-Rachford splitting of a squared $\ell_\infty$-norm relaxation of the symbol MMSE problem. We refer the reader to \cite{Studer} for a more detailed description of the algorithm. The hyperparameters involved in the implementation of SQUID were chosen according to the guidelines given in \cite{Studer}. In addition to the complexity arising from the size of the problem being solved, one major drawback of SQUID (and other non-linear algorithms) is that the optimization problem has to be solved for \emph{every} transmit symbol vector sent out during the coherence time of the channel. In the results that follow, the proposed Algorithm \ref{alg:alg1} is denoted as `Opt' whereas its generalization using dummy users (Algorithm~\ref{alg:alg2}) is denoted as `Opt Dummy'. Algorithm~\ref{alg:alg2} with equal per-antenna power allocation is denoted by `Opt Dummy Equal-Pwr'.
 

\vspace{-0.2cm}
\subsection{SQINR results}\label{sec:SQINR}

We use the ergodic sum rate given by $\mathbb{E}\left[ \sum_{k=1}^K \text{log}_2 ( 1 + \gamma_k^{\text{DL}} ) \right]$ and the ergodic minimum rate given by $\mathbb{E}[ \min_{1\leq k \leq K} \text{log}_2 ( 1 + \gamma_k^{\text{DL}} ) ]$ as the metrics of choice for our results. $\gamma_k^{\text{DL}}$ is calculated using (\ref{eq:downlinkSINR_covar}) with the exact \emph{arcsine} law and without the small angle approximation. The expectation is computed by averaging over multiple IID realizations of the user positions and channel realizations.

The ergodic sum rate (in blue on the left y-axis) and ergodic minimum rate (in red on the right y-axis) are shown in Fig.~\ref{fig:meanCap} as a function of the number of users operating simultaneously. The proposed strategy and ZF perform fairly close to each other in terms of the sum rate when the number of users is small. For $K > 5$, it can be seen that the two strategies start to diverge and the performance of ZF starts to degrade. The proposed algorithm takes into account the MUI and the quantization noise for designing the beamformers and outperforms ZF by more than 10 b/s/Hz when $K = 15$. This behavior is fairly representative of ZF precoding and has also been observed with $\infty-$resolution ADCs/DACs. It can also be observed that the per-antenna power allocation $\mathbf{Q}$ has a big impact on the ZF precoding performance with the optimal power allocation outperforming the ZF power allocation by 2-3 b/s/Hz in terms of sum rate. For $K$ small, it can be seen that adding dummy users makes a significant difference to the sum and minimum rate. The proposed algorithm also outperforms ZF precoding in terms of the minimum rate by 0.75 b/s/Hz for larger number of users. Though seemingly small, this improvement can be very important from an outage probability and fairness perspective. From here onwards, we are only going to focus on the case with the dummy users present in the system.

\begin{figure}[t]
    	\begin{center}
    		\includegraphics[width=.47\textwidth,clip,keepaspectratio]{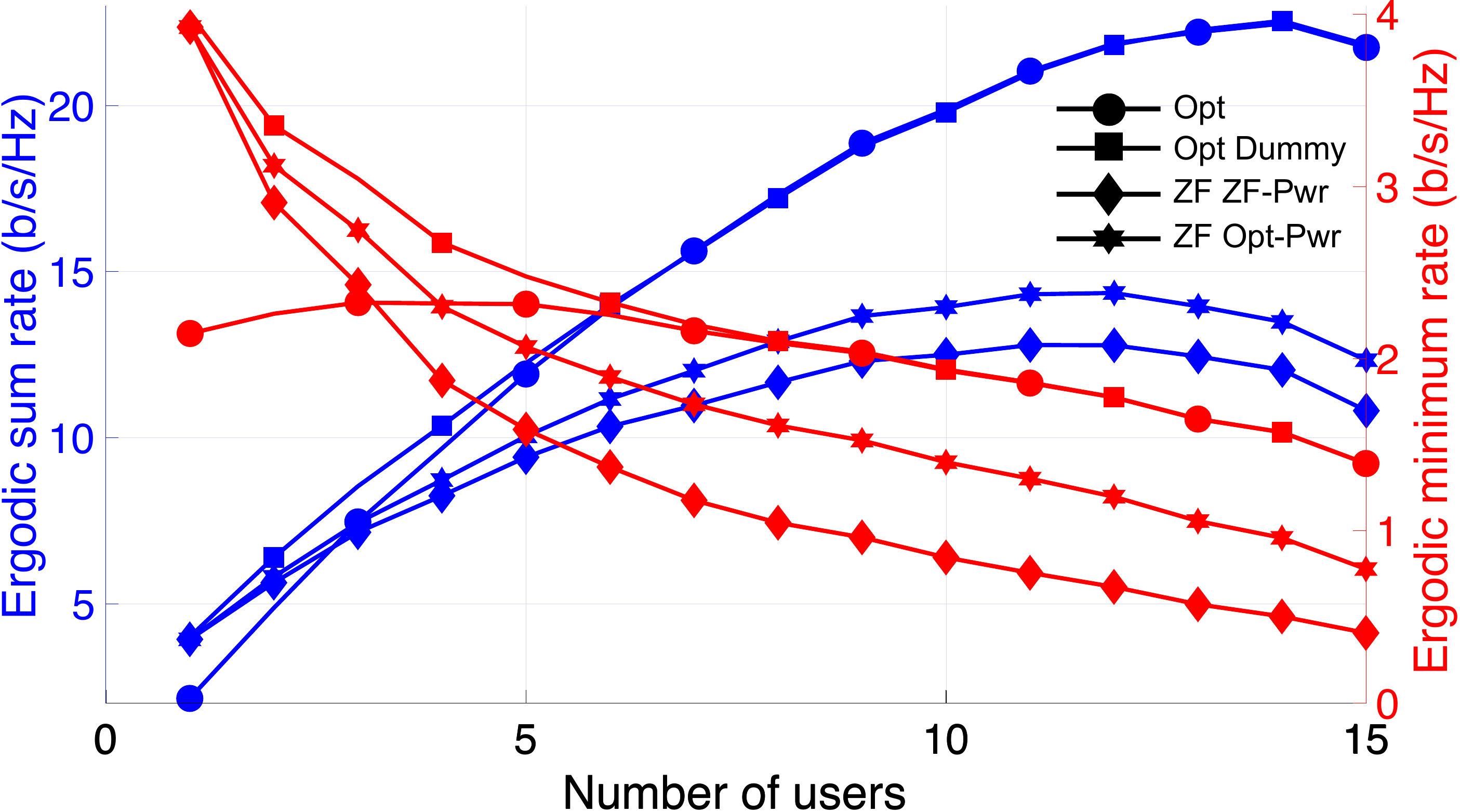}
    	\end{center}
    	\caption{Ergodic sum rate (in blue on the left y-axis) and ergodic minimum rate (in red on the right y-axis) versus number of users. As the number of users increases, the proposed algorithm outperforms ZF by more than 10 b/s/Hz and 0.75 b/s/Hz in terms of the ergodic sum and ergodic minimum rate. }
\vspace{-0.4cm}
    	 \label{fig:meanCap}
\end{figure}

\begin{figure}[b]
    	\begin{center}
    		\includegraphics[width=.47\textwidth,clip,keepaspectratio]{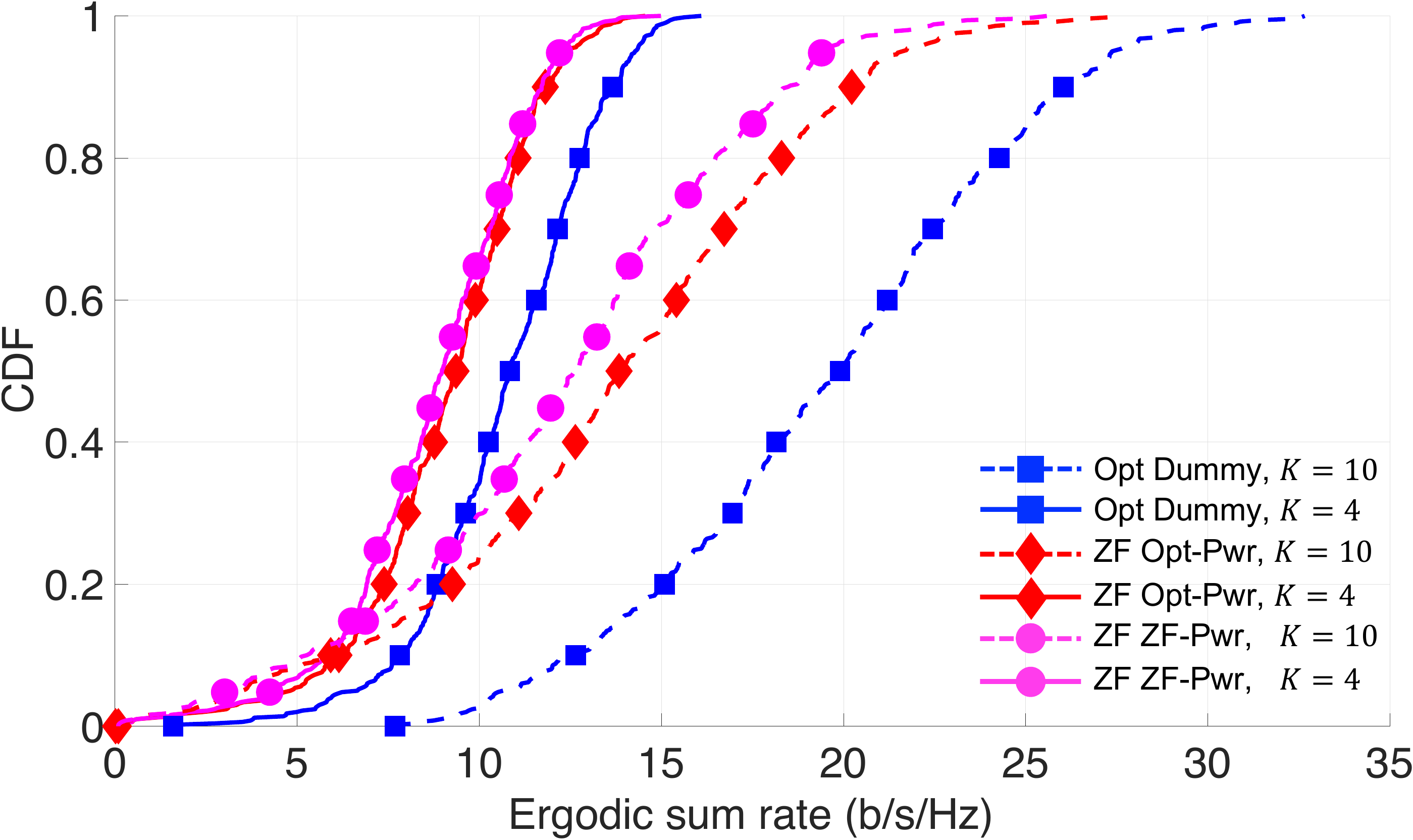}
    	\end{center}
    	\caption{CDF of the ergodic sum rate for $K = 4$ (in solid lines) and $K = 10$ (in dashed lines). The proposed algorithm performs better than ZF in both settings. The difference, however, is more pronounced when the number of users increases due to the careful design of the  precoders which takes into account the MUI and the quantization noise.}
    	 \label{fig:cdfCap}
\vspace{-0.2cm}
\end{figure}
Fig. \ref{fig:cdfCap} illustrates the cumulative distribution function (CDF) of the ergodic sum rate for the proposed algorithm and ZF for $K = 4$ in solid lines and for $K = 10$ in dashed lines. It can  be observed that the proposed algorithm performs better than ZF in both cases. For $K = 4$, the difference between the two algorithms is marginal. For $K = 10$, our solution does better than ZF by about 5 b/s/Hz over the whole distribution space.

Next, we look at the performance as the number of BS antennas $\NBS$ change. The ergodic sum rate is plotted in Fig. \ref{fig:diffAnt} for $\NBS \in [32, 64,128]$. It can be observed that the proposed algorithm outperforms ZF in all three cases. The relative difference in sum rate, however, becomes smaller as the number of BS antennas becomes larger. This behavior is due to improved spatial resolution of the BS due to the larger aperture and has been observed before in context of~massive MIMO where low complexity precoders (such as ZF~and~MRT) were shown to achieve almost optimal performance \cite{ULOFDMStuder,Studer}.

\begin{figure}[t]
    	\begin{center}
    		\includegraphics[width=.47\textwidth,clip,keepaspectratio]{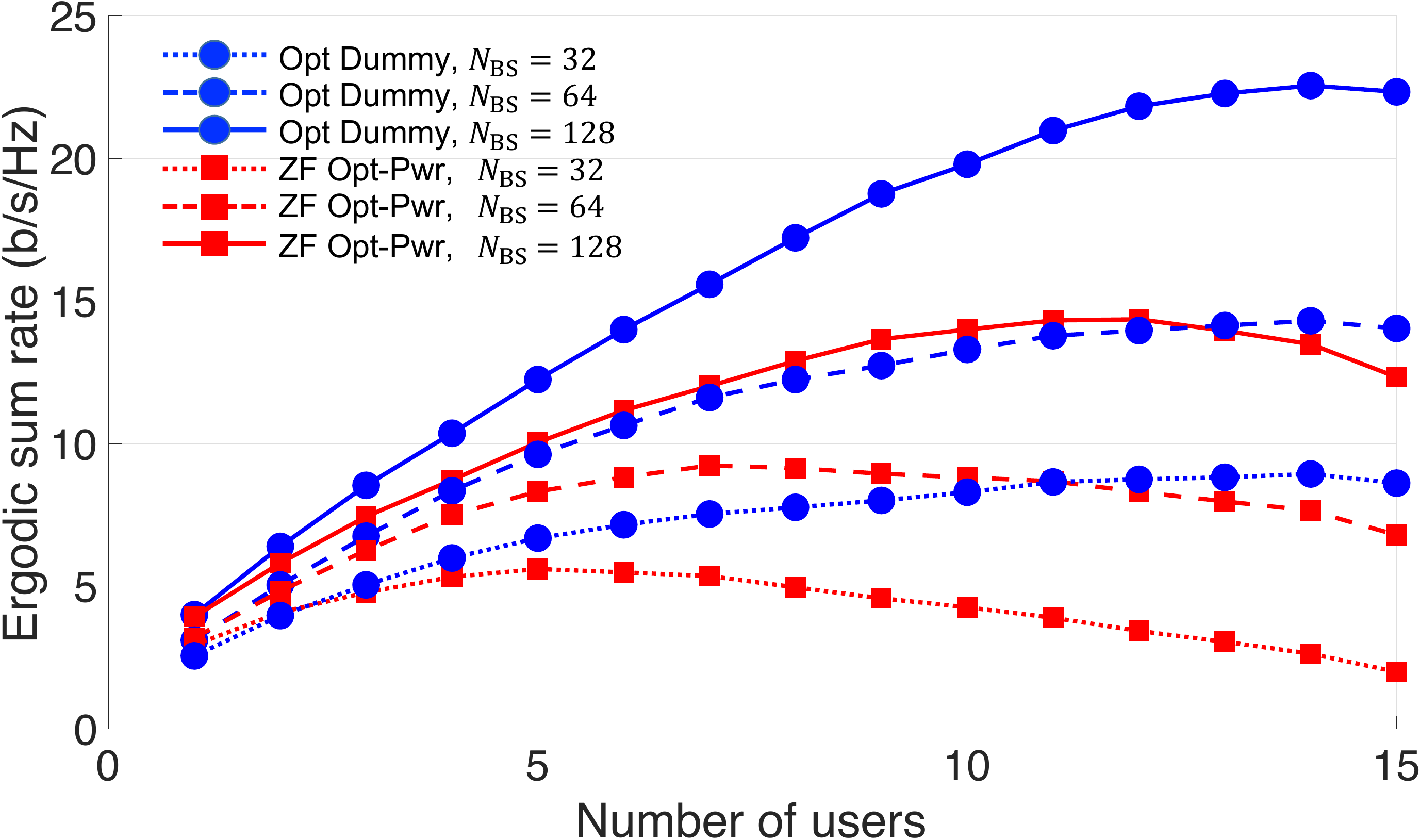}
    	\end{center}
\vspace{-0.2cm}
    	\caption{Ergodic sum rate versus number of users for $\NBS \in [ 32, 64,128]$. The proposed algorithm outperforms ZF in all three cases. The relative gap, however, becomes smaller as the number of BS antennas increases.} 
    	 \label{fig:diffAnt}
\vspace{-0.5cm}
\end{figure}



\begin{figure}[b]
    	\begin{center}
    		\includegraphics[width=.47\textwidth,clip,keepaspectratio]{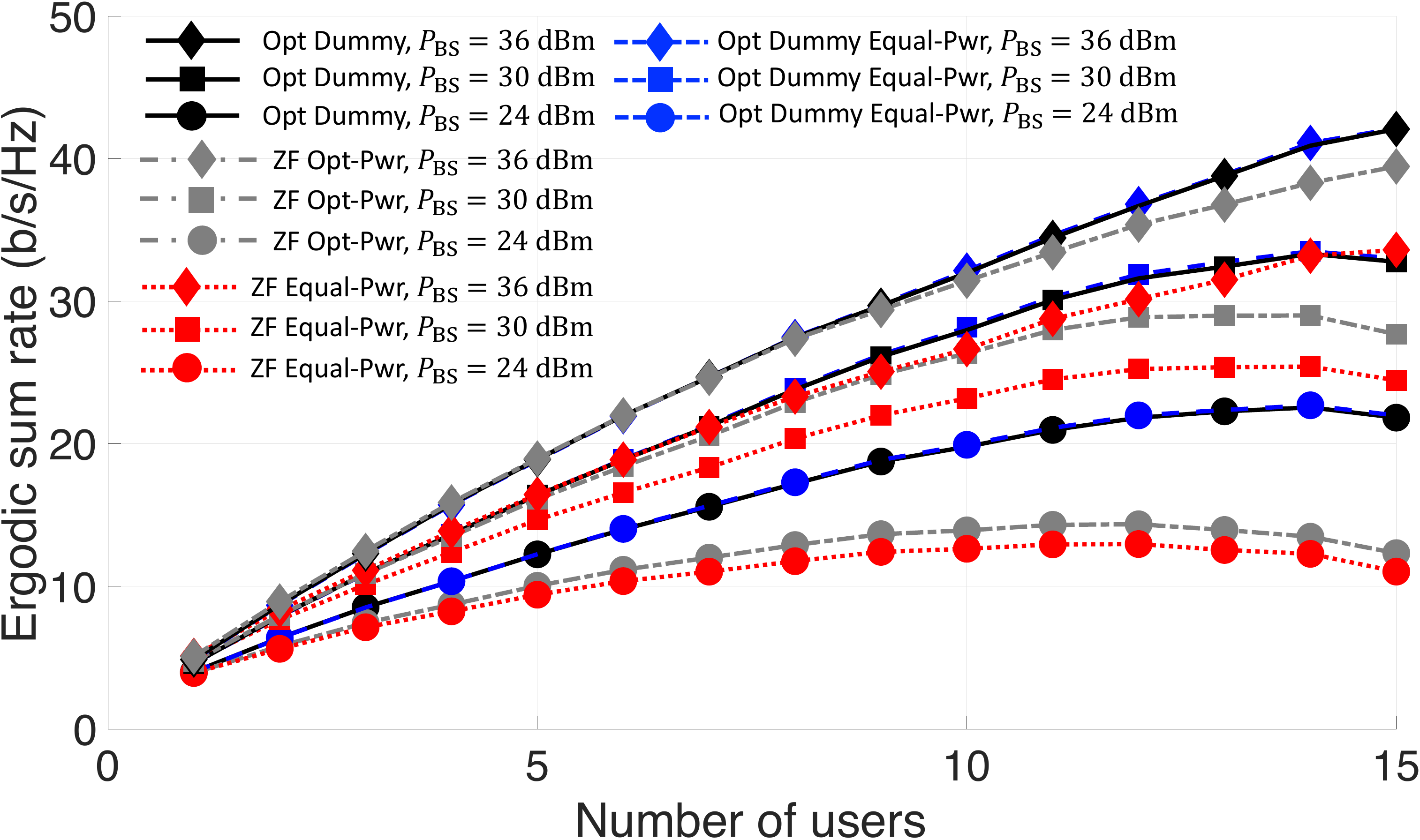}
    	\end{center}
\vspace{-0.2cm}
    	\caption{Ergodic sum rate for the proposed solution and ZF with \emph{optimal} and \emph{equal} per-antenna power allocation versus number of users for $P_\BS \in [24,30,36]$ dBm. Equal per-antenna power allocation significantly degrades the performance for ZF.}
    	 \label{fig:equalPerAnt}
\end{figure}

Fig. \ref{fig:equalPerAnt} illustrates the ergodic sum rate (for both the proposed solution and ZF) for $P_\BS ~\in~[24,30,36]$ dBm for two choices of the power allocation matrix $\mathbf{Q}$: optimal power allocation and equal per-antenna power allocation. The key takeaway~for the optimal power allocation setting is that the proposed algorithm (in black) outperforms ZF (in grey) over the entire SQINR region. The difference, however, seems to decrease with the increase in transmit power. This is not the case with a smaller array aperture where the proposed solution actually increasingly outperforms ZF with the increase in power. 

The optimal power allocation procedure from Section \ref{sec:DLpower} requires amplifiers that are linear over the input signal dynamic range. The linearity of the power amplifier over the dynamic range of the input signal is a critical issue especially at mmWave frequencies. An equal per-antenna power allocation is a useful solution to reduce hardware complexity and increase efficiency by designing power amplifiers that operate in their saturation region at a fixed power point. The ergodic sum rate for ZF (in red) and the proposed solution (in blue) for the equal per-antenna power allocation setting is illustrated in Fig. \ref{fig:equalPerAnt}. It can be seen that ZF precoding suffers a degradation in performance compared to the optimal power allocation setting show in grey. Furthermore, this degradation in performance increases with the increase in power. The proposed solution under equal per-antenna power allocation achieves the same performance as the optimal power allocation. This observation thus results in further reduction in hardware complexity from a power amplifier design perspective.

\vspace{-0.2cm}
\subsection{BER results} \label{sec:BER}
We now present results in terms of the uncoded BER for transmit symbols drawn from unit-norm normalized QPSK and 16-Quadrature Amplitude Modulation (16-QAM) constellations. The BER results are computed by transmitting 100 IID (QPSK or 16-QAM) symbols for each user in the system for each channel realization. The received symbols are then decoded using a minimum distance decoder \cite{hela} and then mapped to bits. The BER is computed by averaging over the 100 symbols, the channel realizations and the number of active users. For QPSK, the minimum distance decoder is agnostic to the received symbol amplitude and can be implemented by just choosing the quadrant in which the symbol lies \cite{Studer}. For 16-QAM, the received symbols need to be scaled appropriately. We implement a blind estimation method based on the work in \cite{hela} where each users estimates a scaling factor $g_k$ using a block of the received symbols before the decision operation.

\begin{figure}[t]
    	\begin{center}
    		\includegraphics[width=.46\textwidth,clip,keepaspectratio]{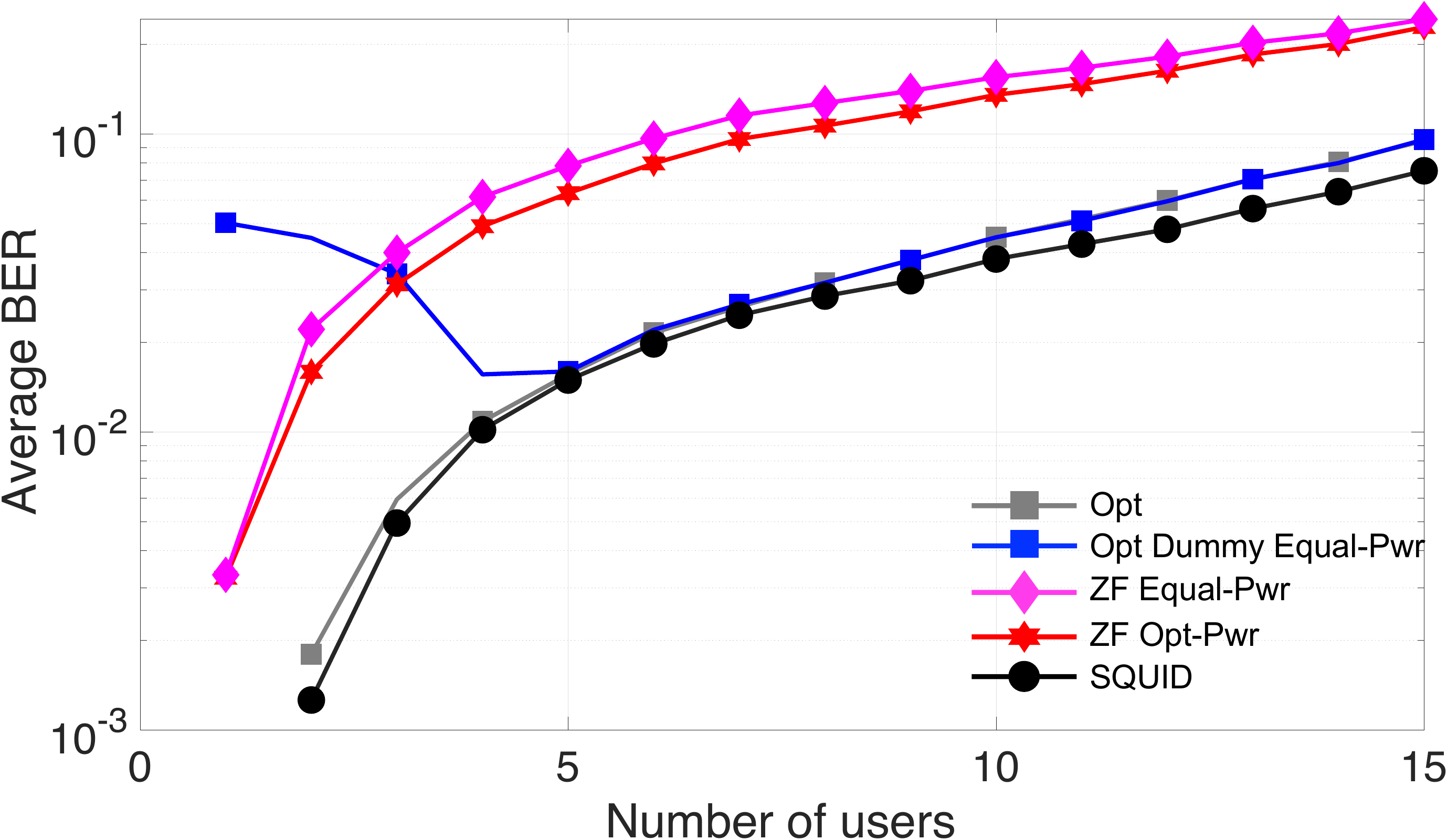}
    	\end{center}
	\vspace{-0.1cm}
    	\caption{Uncoded BER performance for QPSK constellation. The proposed solution and SQUID achieve similar performance outperforming ZF by a considerable margin.}
	\vspace{-0.3cm}
    	 \label{fig:ber4}
\end{figure}

The uncoded BER for the proposed solutions and benchmark strategies is plotted in Fig. \ref{fig:ber4} against the number of active users for symbols drawn from the QPSK constellation. The first observation from Fig. \ref{fig:ber4} is that adding dummy users to the proposed algorithm results in slightly degraded performance for small number of users. This is in contrast to the SQINR results presented in Section \ref{sec:SQINR} and can be slightly misleading. Adding dummy users does improve the performance. This performance improvement, however, results from the improved dynamic range which does not matter for the QPSK constellation. The dummy users slightly perturb the phase of the signal thus resulting in a small loss in performance. It can also be seen that the proposed solution and SQUID achieve similar performance significantly outperforming ZF. 

\begin{figure}[t]
    	\begin{center}
    		\includegraphics[width=.47\textwidth,clip,keepaspectratio]{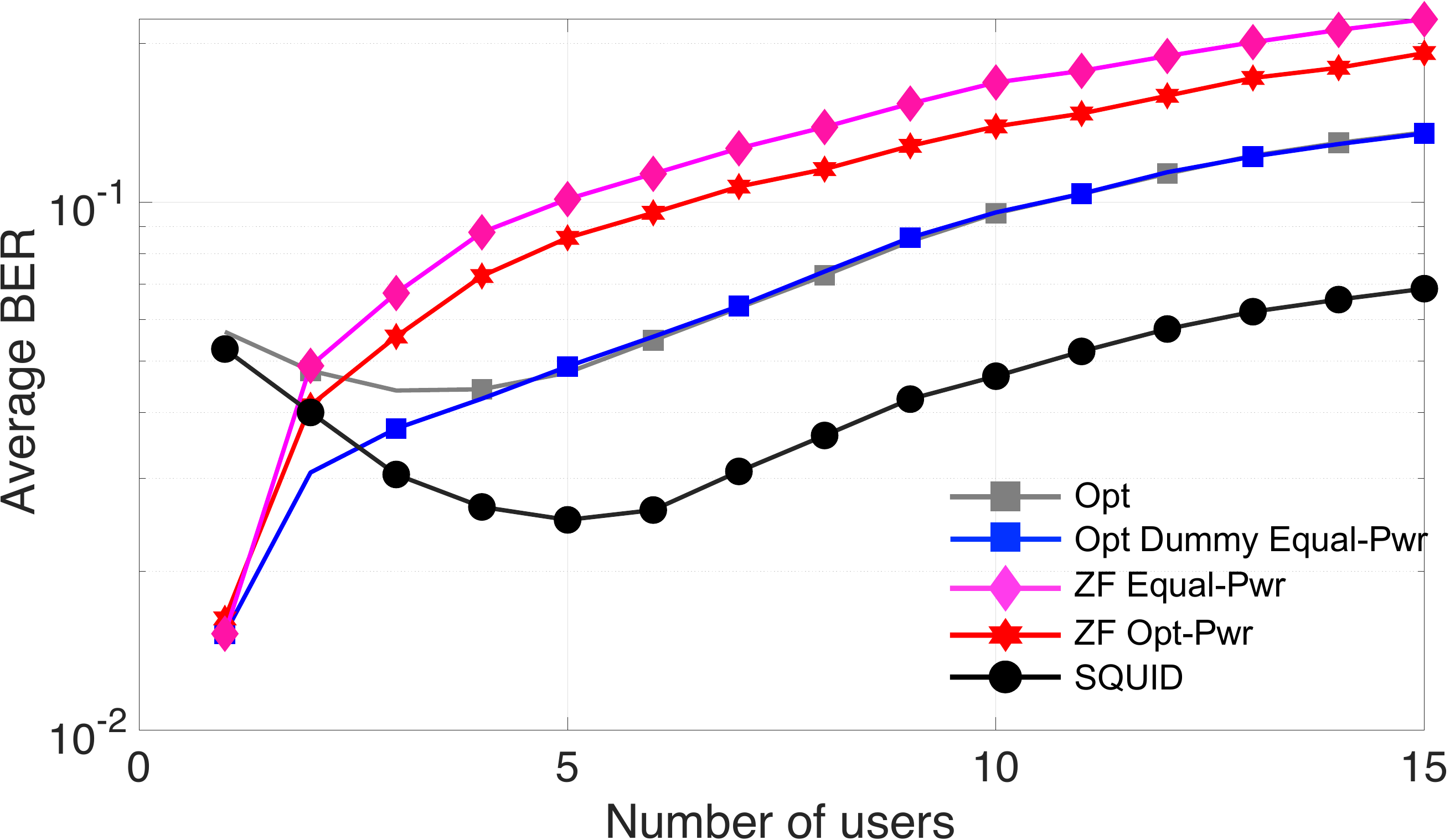}
    	\end{center}
	\vspace{-0.1cm}
    	\caption{Uncoded BER performance for 16-QAM constellation for $P_\BS = 30$ dBm. SQUID performs better than the~proposed solution except for small number of users. Adding dummy users to the proposed solution results in improved performance as observed~in SQINR results for small number of users.}
	\vspace{-0.4cm}
    	 \label{fig:ber16}
\end{figure}

Fig. \ref{fig:ber16} illustrates the uncoded BER (for $P_\BS = 30$ dBm) for transmit symbols drawn from the 16-QAM constellation.  It can be seen that the non-linear precoder SQUID now performs slightly better than the proposed solutions except for when the number of users is small. The performance difference, however, is reduced compared to ZF based precoding. It can also be observed that the proposed algorithm now performs better with the dummy users in the system for a smaller number of active users. This is due to the improved dynamic range as a result of introducing optimized dithering. ZF with equal per-antenna power allocation achieves the worst performance among all algorithms. 

Next, we look at the performance in terms of uncoded BER for transmit symbols drawn from the QPSK constellation. The BER as a function of the transmit power is illustrated in Fig. \ref{fig:berPwr}  for $K = 10$. It can be seen that the proposed solution and SQUID achieve similar performance significantly outperforming ZF precoding. Another observation is that all of the precoding strategies saturate at a certain level with further increase in transmit power yielding no improvement. The performance of SQUID actually starts to degrade with further increase in power and requires hyperparameter tuning for meaningful performance.

\begin{figure}[t]
    	\begin{center}
    		\includegraphics[width=.47\textwidth,clip,keepaspectratio]{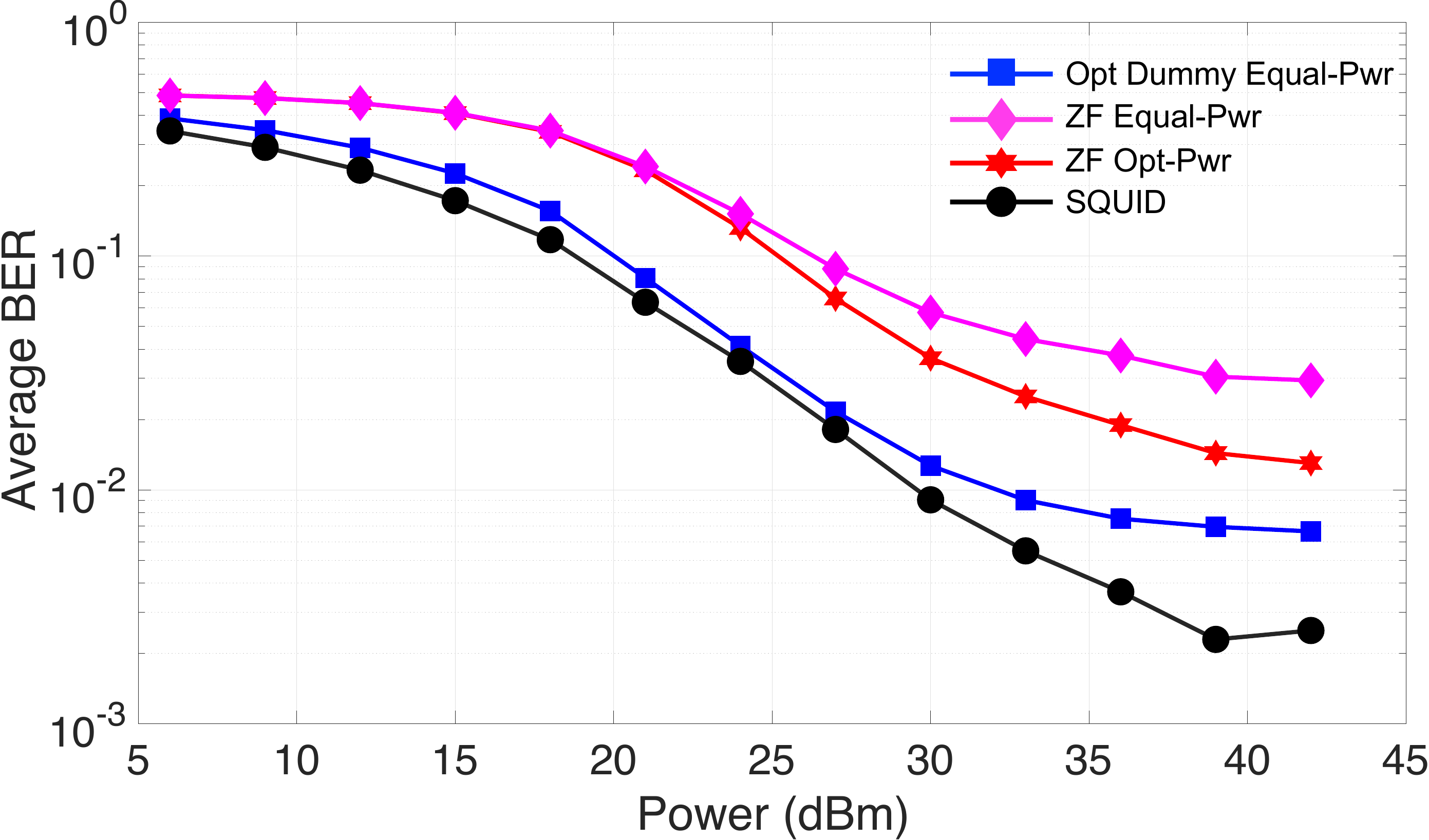}
    	\end{center}
	\vspace{-0.2cm}
    	\caption{Uncoded BER performance for QPSK constellation with $K = 10$ against $P_\BS$. The proposed solution and SQUID achieve similar performance significantly outperforming ZF.}
	\vspace{-0.4cm}
    	 \label{fig:berPwr}
\end{figure}

We have so far assumed the availability of perfect CSI at the BS. This is, however, not a realistic assumption especially in the setting where the BS is equipped with 1-bit converters. Channel estimation with 1-bit converters is a closely related problem with a rich existing literature \cite{SE,ULOFDMStuder,channelEst}. We demonstrate robustness of the proposed solution to channel estimation errors by considering a normalized IID Gaussian noise perturbed channel of the form
\begin{equation*}
\widehat{\mathbf{h}}_k = \sqrt{1-\alpha}\mathbf{h}_k + \sqrt{\alpha}\mathbf{z},
\end{equation*}
where $\mathbf{z} \sim \mathcal{CN}(0,\frac{\| \mathbf{h}_k \|^2_2}{\NBS})$ and $\alpha$ varies from 0 to 1. Similar evaluations have been considered before in \cite{hela,Studer}. The uncoded BER for symbols drawn from the QPSK constellation for $K = 10$ and $P_\BS = 30$ dBm is illustrated in Fig. \ref{fig:CEE} as a function of the normalized channel estimation error. It can be observed that the performance of the proposed solution degrades in proportion to the channel estimation error similar to the other strategies. The proposed solution performs quite close to SQUID significantly outperforming ZF  precoding.

\begin{figure}[b]
    	\begin{center}
    		\includegraphics[width=.47\textwidth,clip,keepaspectratio]{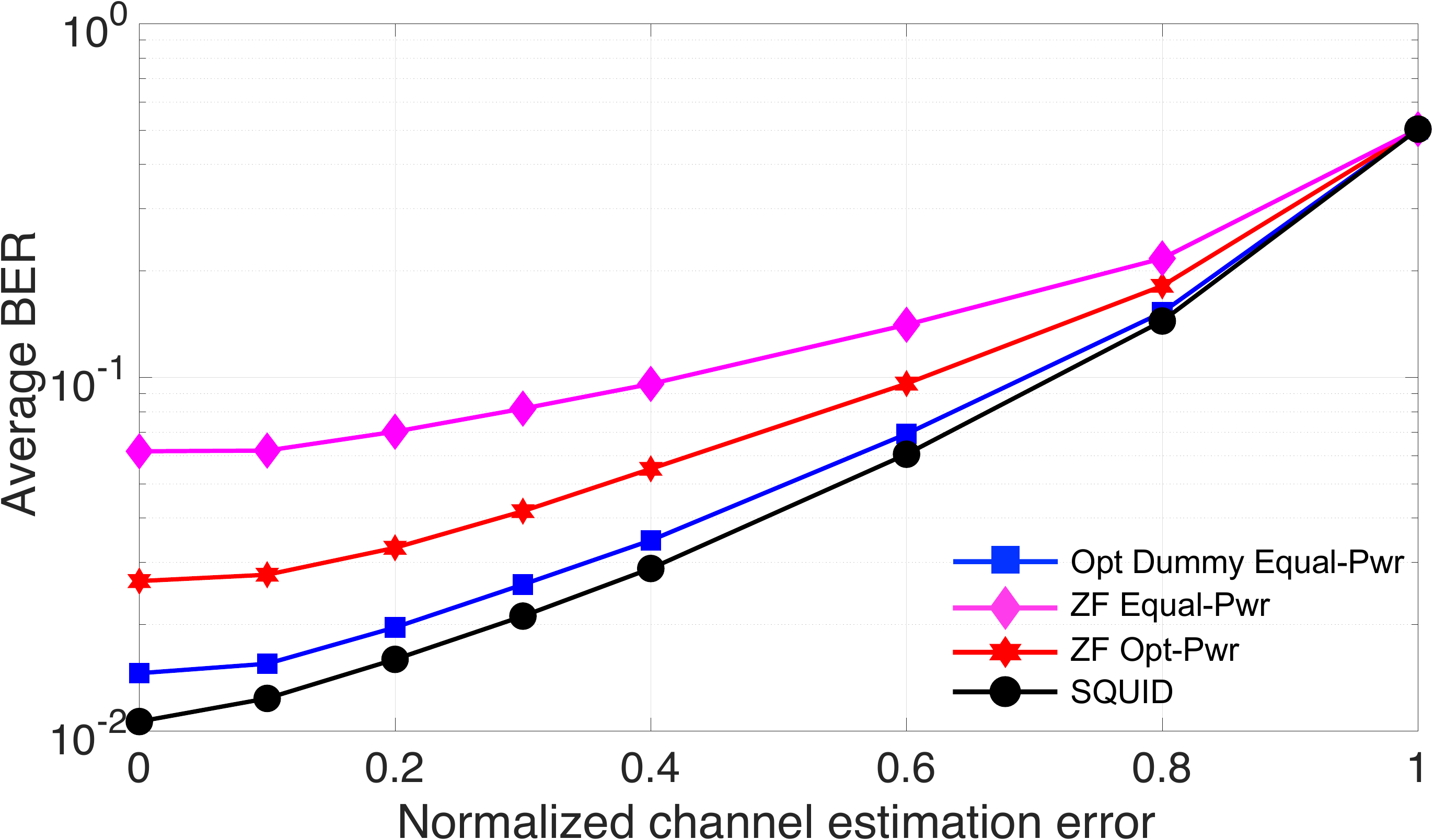}
    	\end{center}
    	\caption{Uncoded BER performance for QPSK for $K = 10$ and $P_\BS = 30$ dBm. The proposed solution suffers a degradation in performance proportional to other strategies with increasing channel estimation error. Opt Dummy and SQUID achieve almost similar performance.}
    	 \label{fig:CEE}
\end{figure}

Our numerical results in Section \ref{sec:SQINR} and \ref{sec:BER} demonstrate that the proposed solution achieves superior performance in comparison to existing linear precoding solutions in the literature. It also achieves performance comparable to that of state of the art non-linear solutions but at increased flexibility and reduced complexity.  Furthermore, the ability to allocate different SQINR constraints to different users depending on their quality of service criterion makes the proposed solution attractive from an operator's of view. The results presented here were for the canonical case with equal target SQINRs for all $K$ users. Further improvement might be possible by choosing unequal target SQINRs. This, however, is a scheduling problem and will be tackled by the operator at a higher level. Lastly, the ideas presented in this paper are generalizable to the setting where the users have multiple antennas by modifying the alternating minimization procedure to include a third step for each user's linear combiner optimization over a pre-defined codebook. The index of the optimized combiner will have to be reported back to the users over a control channel \cite{feedforward}.

\section{Conclusion} \label{sec:conc}
In this paper, we presented a solution to the MU-DL-BF problem under 1-bit hardware constraints at the BS and per-user target SQINR constraints. Our proposed solution jointly optimizes the DL precoders and the power allocated to each user based on the UL-DL duality principle proved under an uncorrelated quantization noise assumption. We generalized the proposed solution to scenarios with correlated quantization noise by adding dummy users to the system which operate in the null space of the true users. Our results demonstrated significant gains over existing linear precoding strategies in terms of the ergodic sum rate and ergodic minimum rate. We also found performance comparable to state of the art non-linear methods in terms of the uncoded BER. The proposed solution provides the flexibility of low-complexity linear precoders for a significantly improved performance and the ability to cater to users with different performance criterion.

 
An interesting direction is to extend the proposed solution to constant envelope quantizers and other type of hardware constraints. It is not immediately clear whether the UL-DL duality principle (and the proposed solution based on that) generalize to the quantization noise that results from constant envelope quantizers. Another interesting direction is extending this work to frequency selective wideband channels under an OFDM formulation. The orthogonal nature of the subcarriers, which is one of the main advantages of using OFDM, is destroyed under 1-bit quantization considered in this paper. Hence, the problem has to be revisited from scratch and will be the focus of our work in the next iteration of this paper.

\appendices

\section{Proof of lemma \ref{lemma:lemma1}}\label{sec:A}
By dropping the noise term on the RHS of (\ref{eq:downlinkSINR_covar2}), let us define the signal-to-quantization-plus-interference ratio (SQIR) for the  $k^\thh$ user as 

\begin{equation}
\begin{split}
& \hat{\gamma}_k^{\text{DL}}(\mathbf{T} , \mathbf{q}) = \\
& \frac{ q_k \mathbf{t}_k^\Her \mathbf{R}_k \mathbf{t}_k}{  \sum_{ \substack{i = 1\\ i \neq k}}^K q_i \mathbf{t}_i^\Her \mathbf{R}_k \mathbf{t}_i + \left( \frac{\pi}{2} - 1 \right) \tr \left( \sum_{i=1}^K q_i \mathbf{t}_i^\Her \diag(\mathbf{R}_k^\ast) \mathbf{t}_i  \right)}.
\label{eq:A1}
\end{split}
\end{equation}
It can be observed from (\ref{eq:A1}) that the SQIR is a constant function of scalar multiples of the DL power allocation vector $\mathbf{q}$ i.e.

\begin{equation}
\hat{\gamma}_k^{\text{DL}}(\mathbf{T} , \lambda\mathbf{q}) = \hat{\gamma}_k^{\text{DL}}(\mathbf{T} , \mathbf{q}),
\label{eq:A2}
\end{equation}
for all positive $\lambda$. It can also be observed from (\ref{eq:downlinkSINR_covar2}) that the DL SQINR, ${\gamma}_k^{\text{DL}}(\mathbf{T} , \mathbf{q})$, is a monotonically increasing function of scalar multiples of the DL power allocation vector $\mathbf{q}$ i.e.

\begin{equation}
{\gamma}_k^{\text{DL}}(\mathbf{T} , \lambda\mathbf{q}) > {\gamma}_k^{\text{DL}}(\mathbf{T} , \mathbf{q}),
\label{eq:A3}
\end{equation}
for $\lambda > 1$. Furthermore by comparing (\ref{eq:downlinkSINR_covar2}) and (\ref{eq:A1}), it can be seen that

\begin{equation}
\lim_{||\mathbf{q}||_2 \to \infty } {\gamma}_k^{\text{DL}}(\mathbf{T} , \mathbf{q}) = \lim_{||\mathbf{q}||_2 \to \infty } \hat{\gamma}_k^{\text{DL}}(\mathbf{T} , \mathbf{q}).
\label{eq:A4}
\end{equation}
For a target DL SQINR set $[\gamma_1, \dots \gamma_k, \dots \gamma_K]$ to be feasible

\begin{equation} \label{eq:A5}
\begin{split}
1 &\leq \min_{1\leq k \leq K}  \frac{{\gamma}_k^{\text{DL}}(\mathbf{T} , \mathbf{q})}{\gamma_k} \\
   & \overset{(\ref{eq:A3})}{<} \max_{||\mathbf{q}||_2 \to \infty } \left( \min_{1\leq k \leq K}  \frac{{\gamma}_k^{\text{DL}}(\mathbf{T} , \mathbf{q})}{\gamma_k} \right) \triangleq R^\star.
\end{split}
\end{equation}
Making use of (\ref{eq:A2}) and (\ref{eq:A4}), the upper bound (\ref{eq:A5}) is equivalently given by

\begin{equation} \label{eq:A6}
\begin{split}
R^\star & \overset{(\ref{eq:A4})}{=} \max_{||\mathbf{q}||_2 \to \infty } \left( \min_{1\leq k \leq K}  \frac{\hat{\gamma}_k^{\text{DL}}(\mathbf{T} , \mathbf{q})}{\gamma_k} \right) \\
& \overset{(\ref{eq:A2})}{=} \max_{||\mathbf{q}||_2 = 1 } \left(  \min_{1\leq k \leq K}  \frac{\hat{\gamma}_k^{\text{DL}}(\mathbf{T} , \mathbf{q})}{\gamma_k} \right).
\end{split}
\end{equation}
Similar to the optimal DL power allocation in Section \ref{sec:DLpower}, the solution to the optimization problem (\ref{eq:A6}) results in \emph{equal} achieved SQIR to target SQIR ratio for all $K$ users given by

\begin{equation} \label{eq:A7}
R^\star = \frac{\hat{\gamma}_1^{\text{DL}}(\mathbf{T} , \mathbf{q}^\star)}{\gamma_1} = \dots = \frac{\hat{\gamma}_K^{\text{DL}}(\mathbf{T} , \mathbf{q}^\star)}{\gamma_K},
\end{equation}
where $\mathbf{q}^\star$ is the power allocation vector which solves (\ref{eq:A6}). The proof of this claim follows exactly the proof of Lemma \ref{lemma:lemma3} in Appendix \ref{sec:C}. The $K$ equations in (\ref{eq:A7}) can be written in matrix form as 

\begin{equation} \label{eq:A8}
\mathbf{q}^\star \frac{1}{R^\star} = \mathbf{D}(\mathbf{T}) \mathbf{\Psi}(\mathbf{T}) \mathbf{q}^\star.
\end{equation}
It can be observed from (\ref{eq:A8}) that the achieved SQIR to target SQIR balance value, $R^\star$, equals the reciprocal of an eigenvalue of the matrix $ \mathbf{D}(\mathbf{T}) \mathbf{\Psi}(\mathbf{T})$ and the optimal power allocation vector is given by the corresponding eigenvector. It is also known from Perron-Frobenius theory that the optimal eigenvalue/eigenvector pair correspond to the maximal eigenvalue of the non-negative matrix $ \mathbf{D}(\mathbf{T}) \mathbf{\Psi}(\mathbf{T})$. Hence

\begin{equation} \label{eq:A9}
\lambda_\maxx(\mathbf{D}(\mathbf{T}) \mathbf{\Psi}(\mathbf{T})) = \frac{1}{R^\star} \overset{(\ref{eq:A5})}{<} 1.
\end{equation}
This concludes the proof of Lemma \ref{lemma:lemma1}. 

\section{Proof of lemma \ref{lemma:lemma2}}\label{sec:B}
Assume $\left(  \mathbf{I}_K -  \mathbf{D}({\mathbf{T}}) \mathbf{\Psi}({\mathbf{T}}) \right)$ is not invertible. This must mean that for some vector $\mathbf{b}$

\begin{equation} \label{eq:B1}
\begin{split}
\left(  \mathbf{I}_K -  \mathbf{D}({\mathbf{T}}) \mathbf{\Psi}({\mathbf{T}}) \right) \mathbf{b} &= \mathbf{0}_K \\
\Rightarrow \mathbf{D}({\mathbf{T}}) \mathbf{\Psi}({\mathbf{T}}) \mathbf{b} &=  \mathbf{b}.
\end{split}
\end{equation}
This implies that the matrix $\mathbf{D}({\mathbf{T}}) \mathbf{\Psi}({\mathbf{T}})$ has an eigenvalue equal to 1. We know from Lemma \ref{lemma:lemma1} that $\lambda_\maxx(\mathbf{D}(\mathbf{T}) \mathbf{\Psi}(\mathbf{T})) < 1$. Hence this is a contradiction and the matrix $\left(  \mathbf{I}_K -  \mathbf{D}({\mathbf{T}}) \mathbf{\Psi}({\mathbf{T}}) \right)$ is invertible for any feasible target DL SQINR set $[\gamma_1, \dots \gamma_k, \dots \gamma_K]$.

\section{Proof of lemma \ref{lemma:lemma3}}\label{sec:C}
Let the $i^\thh$ user be such that 

\begin{equation} \label{eq:C1}
\frac{\gamma_i^{\text{DL}}(\mathbf{T^\star , q^\star})}{\gamma_i} > R^\text{DL}_\text{opt}(P_\BS , \mathbf{T}^\star) = \min_{1\leq k \leq K} \frac{\gamma_k^{\text{DL}}(\mathbf{T^\star , q^\star})}{\gamma_k}.
\end{equation}
It can be seen from (\ref{eq:downlinkSINR_covar2}) that the DL SQINR $\gamma_k^{\text{DL}}(\mathbf{T^\star , q})$ is an increasing function of $q_k$ and a decreasing function of $q_\ell$ for $\ell \neq k$. The power allocated to the $i^\thh$ user, $q_i$, can be decreased without reducing the objective function $\min_{1\leq k \leq K} \frac{\gamma_k^{\text{DL}}(\mathbf{T^\star , q^\star})}{\gamma_k}$. This excess power can then be distributed equally amongst the $K$ users. Since $\gamma_k^{\text{DL}}(\mathbf{T} , \alpha \mathbf{q}) > \gamma_k^{\text{DL}}(\mathbf{T} , \mathbf{q})$ for $\alpha > 1$, this would result in a larger optimum value of the objective function $\min_{1\leq k \leq K} \frac{\gamma_k^{\text{DL}}(\mathbf{T^\star , q^\star})}{\gamma_k}$. Consequently, the initial assumption was a contradiction and all $K$ users achieve the same achieved SQINR to target SQINR ratio.

\section{Proof of lemma \ref{lemma:lemma5}}\label{sec:D}
It was shown in \cite{MUDL} that for any positive $K$-dimensional vectors $\mathbf{b}$ and $\mathbf{c}$

\begin{equation}  \label{eq:D1}
\max_{\mathbf{x}} \frac{ \mathbf{x}^\T \mathbf{b} }{ \mathbf{x}^\T \mathbf{c} } = \max_{ 1 \leq k \leq K } \frac{ \mathbf{b}_k }{ \mathbf{c}_k }.
\end{equation}
Using (\ref{eq:D1}) and the non-negativity of $\mathbf{\Lambda}(\mathbf{T}, P_\BS) \mathbf{p}_\ext$, it follows that

\begin{equation} \label{eq:D2}
 \hat{\lambda} \left( \mathbf{T}, P_\BS, \mathbf{p}_\ext \right) = \max_{1 \leq k \leq K+1}  \frac{ \mathbf{e}_k^\T \mathbf{\Lambda}(\mathbf{T}, P_\BS) \mathbf{p}_\ext  }{ \mathbf{e}_k^\T \mathbf{p}_\ext }.
\end{equation} 
Using (\ref{eq:uplinkSINR_covar5}) and (\ref{eq:gammaMatrix}), the first $K$ equations in (\ref{eq:D2}) can be written as

\begin{equation} \label{eq:D3}
\max_{1 \leq k \leq K}  \frac{ \mathbf{e}_k^\T \mathbf{\Lambda}(\mathbf{T}, P_\BS) \mathbf{p}_\ext  }{ \mathbf{e}_k^\T \mathbf{p}_\ext } = \max_{1 \leq k \leq K} \frac{\gamma_k}{\gamma_k^{\text{UL}}\left(  \mathbf{t}_k ,  \mathbf{p} \right) }.
\end{equation}
It also follows from (\ref{eq:uplinkSINR_covar5}) and (\ref{eq:gammaMatrix}) that

\begin{equation} \label{eq:D4}
\begin{split}
\frac{ \mathbf{e}_{K+1}^\T \mathbf{\Lambda}(\mathbf{T}, P_\BS) \mathbf{p}_\ext  }{ \mathbf{e}_{K+1}^\T \mathbf{p}_\ext } & = \frac{1}{P_\BS} \sum_{k=1}^K  \frac{\mathbf{p}_k \gamma_k}{\gamma_k^{\text{UL}}\left(  \mathbf{t}_k ,  \mathbf{p} \right) }\\
& \overset{(a)}{\leq}  \left( \max_{1 \leq k \leq K} \frac{\gamma_k}{\gamma_k^{\text{UL}}\left(  \mathbf{t}_k ,  \mathbf{p} \right) } \right) \frac{1}{P_\BS} \sum_{k=1}^K  \mathbf{p}_k \\
& =  \max_{1 \leq k \leq K} \frac{\gamma_k}{\gamma_k^{\text{UL}}\left(  \mathbf{t}_k ,  \mathbf{p} \right) }.
\end{split}
\end{equation}
(a) follows because the max is greater than the average. This shows that the $(K+1)^\thh$ equation in (\ref{eq:D2}) is smaller than or equal to the first $K$ equations. Hence

\begin{equation} \label{eq:D5}
 \hat{\lambda} \left( \mathbf{T}, P_\BS, \mathbf{p}_\ext \right) = \max_{1 \leq k \leq K} \frac{\gamma_k}{\gamma_k^{\text{UL}}\left(  \mathbf{t}_k ,  \mathbf{p} \right) }.
\end{equation} 

\bibliographystyle{IEEEbib}
\bibliography{ref}

\end{document}